\documentclass[a4paper,10pt]{amsart}

\usepackage{amssymb,epsfig,amsmath,amsthm}
\usepackage[font=small,format=plain,labelfont=bf,up,textfont=it,up]{caption}
\usepackage{graphicx,subfig}
\graphicspath{{./Figs/}}
\setlength{\parindent}{0pt}
\numberwithin{equation}{section}
\DeclareMathOperator*{\argmin}{arg\,min}

\DeclareMathOperator*{\argmax}{arg\,max}

\title[Solving HJB Equations and HJB Obstacle Problems]{Penalty Methods for the Solution of Discrete HJB Equations --
Continuous Control and Obstacle Problems}

\author{J. H. Witte}
\author{C. Reisinger}
\address{Mathematical Institute\\
University of Oxford}
\date{November 2011}
\email{\texttt{[\,witte\,,\,reisinge\,]\,@\,maths.ox.ac.uk}}
%\subjclass[2010]{Primary: 06b10; Secondary: 06b20}
%\keywords{American Option, Jump Model, Penalty Method, Penalization Error}
\thanks{JHW acknowledges support from Balliol College, University of Oxford, the
UK Engineering and Physical Sciences Research Council (EPSRC), and the Oxford-Man Institute of Quantitative Finance, University of Oxford.
We thank Guy Barles, Laboratoire de Math\'{e}matiques et Physique Th\'{e}orique, Tours, for many helpful comments.}

\begin{document}

\begin{abstract}
In this paper, we present a novel penalty approach for the numerical solution of continuously controlled HJB equations and HJB obstacle problems. Our results include estimates of the penalisation error for a class of penalty terms, and we show that
variations of Newton's method can be used to obtain globally convergent iterative  solvers for the penalised equations. Furthermore, we discuss under what conditions local quadratic convergence of the iterative solvers can be expected. We include numerical results demonstrating the competitiveness of
our methods.\bigskip

% We extend existing results by providing estimates for the 
% 
% 
% have been used to solve the resulting
% non-linear discrete systems. However, many applications lead to continuously controlled HJB equations (e.g.\,\cite{Zariphopoulou_UnhedgeableRisk,Oberman_Thalia_EarlyExerciseInbcompleteMarkets}), making current numerical schemes
% not applicable. In this paper, we study HJB equations with compact control sets, and we present a novel penalty approach for
% their solution. In addition, we show how our penalty approach can be extended to also deal with continuously controlled HJB obstacle problems, as
% they arise when pricing early exercise instruments in incomplete markets (cf.\,\cite{Oberman_Thalia_EarlyExerciseInbcompleteMarkets}). Finally, we include numerical results demonstrating the competitiveness of
% our methods.\bigskip

\hspace{-.45cm}\textit{Key Words:} HJB Equation, HJB Obstacle Problem, Min-Max Problem, Numerical Solution, Penalty Method,
Semi-Smooth Newton Method, Viscosity Solution\bigskip

\hspace{-.45cm}\textit{2010 Mathematics Subject Classification:} 65M12, 93E20\bigskip

\end{abstract}

\maketitle

\newtheorem{theorem}{Theorem}[section]
\newtheorem{la}[theorem]{Lemma}
\newtheorem{cor}[theorem]{Corollary}
\newtheorem{remark}[theorem]{Remark}
\newtheorem{prob}[theorem]{Problem}	
\newtheorem{definition}[theorem]{Definition}
\newtheorem{alg}[theorem]{Algorithm}
\newtheorem{prop}[theorem]{Proposition}
\newtheorem{assumption}[theorem]{Assumption}

\section{Introduction}

Problems of optimal stochastic control arise numerously in the mathematical analysis of real-world phenomena (cf.\,\cite{Karatzas_MethodsMathsFinance,Shreve_ContTimeModels}), and,
applying Bellman's principle of optimality, they can often be reformulated as Hamilton-Jacobi-Bellman (HJB) equations (cf.\,\cite{XYZ_StochControlHJBBook}).
Numerical approaches for solving HJB equations can roughly be split into two fields: Markov chain
approximations (e.g.\,cf.\,\cite{Kushner_NumMethodsStochControl_ContTime, KushnerDupuis_StochasticControlProblems_ContTime, Fleming_Soner_ControlledMarkovProcesses}) and finite difference
methods (see \cite{BarlesMainArticle, Barles_ErrorBoundsMonotoneApproxSchemes, Barles_OntheConvergenceRate_ApproxHJBEq} and references therein).
Even though both approaches can be found to be closely related (e.g.\,cf.\,\cite{Song_MarkovChainHJBEq}), there is a rather clear
conceptual difference: Markov chain approximations make explicit use of the underlying stochastic
model and solve for transition densities, while finite difference methods deal solely with
the HJB equation and solve for the unique viscosity solution directly (cf.\,\cite{UsersGuide_ViscositySols}).\bigskip

We begin by introducing the two kinds of non-linear equations we aim to solve numerically in this paper.
Initially, the problem formulations are slightly informal and primarily motivational, and any equation of the described type permitting a discretisation with certain properties fits into our framework.

\begin{prob}\label{DifferentialOperatorProblem}
Let $\mathbf{U}\subset\mathbb{R}$ be a compact\footnote{Our results can trivially be extended to any finite union of compact intervals in $\mathbb{R}^N$, which, in particular, includes any finite set. For simplicity of presentation, we work with $\mathbf{U}$ as given.} interval. Let $\Omega\subset\mathbb{R}^n$ (with $n\in\mathbb{N}$) be an open set, and let $\mathcal{L}_u$\,, $u\in\mathbf{U}$, be a given family of affine differential operators on $\Omega$. Find a function $V:\Omega\to\mathbb{R}$ such that
\begin{equation}
\inf_{u\in\mathbf{U}}\{\mathcal{L}_uV\}=0.\label{DifferentialOperatorProblem_Eq1}
\end{equation}
\end{prob}

\begin{prob}\label{DifferentialOperatorProblem_MinMax}
In the situation of Problem \ref{DifferentialOperatorProblem}, let $\tilde{\mathcal{L}}$ be an additional affine differential operator. Find a function $W:\Omega\to\mathbb{R}$ such that
\begin{equation}
\inf\Big\{\sup_{u\in\mathbf{U}}\{\mathcal{L}_uW\},\,\tilde{\mathcal{L}}W\Big\}=0.\label{DifferentialOperatorProblem_MinMax_Eq1}
\end{equation}
\end{prob}

Equation \eqref{DifferentialOperatorProblem_Eq1} is a standard HJB equation as arising from many problems of stochastic optimal control (cf.\,\cite{XYZ_StochControlHJBBook,Pham_StochControl}); examples in which the control set $\mathbf{U}$ is truly infinite and does not reduce to a finite
set include problems of indifference pricing of financial derivatives (cf.\,\cite{Carmona_IndifferencePricing_Book} and references therein) and gas storage valuation (cf.\,\cite{Thompson_NaturalGasStorageValuation,Chen_Forsyth_NaturalGasStorageValuation}). The second formulation, Problem \ref{DifferentialOperatorProblem_MinMax}, is an obstacle problem (cf.\,\cite{ElliotOckendon}) involving an HJB equation, and is a special case of an Isaacs' equation (cf.\,\cite{XYZ_StochControlHJBBook,UsersGuide_ViscositySols,Isaac_DifferentialGames}); considering an example from \cite{Oberman_Thalia_EarlyExerciseInbcompleteMarkets}, we will later see that the computation of an early exercise indifference price in an incomplete market model leads to an equation of the form \eqref{DifferentialOperatorProblem_MinMax_Eq1}, a context in which the similarity to an obstacle problem can clearly be traced back to the American exercise feature (cf.\,\cite{OptionPricing}).\bigskip

Evidently, if we replace \eqref{DifferentialOperatorProblem_Eq1} by
\begin{equation}
\widehat{\mathcal{L}}V + \inf_{u\in\mathbf{U}}\{\mathcal{L}_uV\}=0,\label{DifferentialOperatorProblem_ExtraOperator_Eq1}
\end{equation}
we can easily recover formulation \eqref{DifferentialOperatorProblem_Eq1} by moving $\widehat{\mathcal{L}}V$ inside the ``inf'' and defining $\bar{\mathcal{L}}_u:=\mathcal{L}_u+\widehat{\mathcal{L}}$, $u\in\mathbf{U}$; in some applications, the operator $\widehat{\mathcal{L}}V$ can be found to simply be the time derivative, i.e. $\widehat{\mathcal{L}}V=\partial V/\partial t$, whereas in other cases it may constitute the main part of a considered equation.\bigskip

In this paper, we are concerned with the finite difference approximation of equations \eqref{DifferentialOperatorProblem_Eq1} and \eqref{DifferentialOperatorProblem_MinMax_Eq1}.
Based on the -- most helpful -- results in \cite{BarlesMainArticle, Barles_ErrorBoundsMonotoneApproxSchemes, Barles_OntheConvergenceRate_ApproxHJBEq},
it can be shown, for most HJB equations, that a monotone, stable, and consistent finite difference approximation converges to the unique viscosity solution, which is the correct notion of solution since it usually corresponds to the stochastic model.
However, even if convergence can be guaranteed, one still has to solve the resulting non-linear discrete systems. The
only exception are fully explicit time-stepping schemes, which generally suffer from undesirable stability constraints.\bigskip

We assume a rather general class of discretisations, one that is likely to arise when applying the convergence results cited above, and
we study how the discrete systems can be solved using penalty methods.\bigskip

Existing work
on the topic includes \cite{Forsyth_Controlled_HJB_PDEs_Finance,Wang_Forsyth_MaximalUSeCentralDifferences_HJBFinance,Forsyth_NonlinearPDEFinance,Bokanowski_Howard_Algorithm,Forsyth_CombinedFixedPointIteration}, where a policy iteration algorithm is used,
and \cite{WitteReisinger_PenaltyScheme_DiscreteControlledHJBEquations}, where a penalty approximation for HJB equations is introduced.
For the solution of the penalised equations, we consider Newton-like iterative solvers, many properties of which we prove based on results in \cite{Pang_NewtonsMethod,Qi_NonsmoothNewton}. See also \cite{HintermullerItoKunisch_PrimalDualAcitveSetStrategy_SemismoothNewtonMethod,ItoKunisch_SemiSmoothNewton_VarIneqFirstKind,ItoKunisch_SemiSmoothNewtonAndGlobalization} regarding the use of semi-smooth Newton methods in the context of variational inequalities and their interpretation as primal-dual active set strategies, which in turn are closely related to the method of policy iteration.\bigskip

We extend the previous
results in two ways. First, we consider a compact set of controls; until now, except for \cite{Bokanowski_Howard_Algorithm}, all algorithms were based on
the assumption of a finite control set. Second, we also solve an HJB obstacle problem, which arises for example in mathematical finance when pricing early
exercise claims in incomplete markets, e.g.\ in \cite{Oberman_Thalia_EarlyExerciseInbcompleteMarkets}; to the best of our knowledge, the only existing algorithm capable of handling this problem is the Ho-3 algorithm introduced in \cite{Bokanowski_Howard_Algorithm}, which can be considered as a nested policy iteration. The discussed techniques have many applications in mathematical finance. Besides the applications already considered in \cite{Forsyth_Controlled_HJB_PDEs_Finance,Forsyth_NonlinearPDEFinance,WitteReisinger_PenaltyScheme_DiscreteControlledHJBEquations}, which include uncertain volatility models, transaction cost models, and unequal borrowing/lending rates and stock borrowing fees, our methods are applicable to many problems from portfolio optimisation, including indifference pricing \cite{Zariphopoulou_UnhedgeableRisk,Carmona_IndifferencePricing_Book}
and indifference pricing with early exercise features \cite{Oberman_Thalia_EarlyExerciseInbcompleteMarkets}.

\subsection*{Structure of this Paper} We aim to devise numerical schemes for the solution of HJB equations
and HJB obstacle problems. For a general overview of our numerical approach, we refer to
Figure 1 in \cite{WitteReisinger_PenaltyScheme_DiscreteControlledHJBEquations}, where a similar conceptual structure is depicted.\medskip

\paragraph{\it{Section \ref{ProblemFormulation}}} 
We relate the two non-linear equations of Problems \ref{DifferentialOperatorProblem} and \ref{DifferentialOperatorProblem_MinMax}
each to a non-linear discrete system; the latter are chosen such that they are monotone and correspond to what naturally arises when applying a fully implicit or weighted
time stepping discretisation to the equations. Since implicit schemes are usually unconditionally stable and
finite difference schemes are naturally consistent, this setup allows to conclude convergence to the unique viscosity solution whenever a strong
comparison principle holds (cf.\,\cite{BarlesMainArticle, Barles_ErrorBoundsMonotoneApproxSchemes, Barles_OntheConvergenceRate_ApproxHJBEq}).
We proof existence and uniqueness of solutions to the discrete problems and present some further properties.
\medskip

\paragraph{\it{Section \ref{Section_PenalisationDiscrProb}}} We present a penalised problem which approximates the discrete HJB equation
to an accuracy of $O(1/\rho)$, where $\rho>0$ is the penalty parameter. A similar approach was used in \cite{ForsythQuadraticConvergence}
for American options, and was extended to HJB equations in \cite{WitteReisinger_PenaltyScheme_DiscreteControlledHJBEquations}. The novelty
of the approach described in this paper is that we do not penalise for every control individually, as was done in \cite{WitteReisinger_PenaltyScheme_DiscreteControlledHJBEquations},
but only penalise the maximum violation; this significantly simplifies the penalty formulation and allows for compact control sets.\medskip

\paragraph{\it{Section \ref{SectionModNewton}}}
We study the iterative solution of the penalised HJB equation presented in Section \ref{Section_PenalisationDiscrProb}.
We introduce two globally convergent Newton-type solvers, and we show that, when using a smooth penalty term, the classical Newton scheme can be expected to have local quadratic convergence.\medskip

\paragraph{\it{Sections \ref{Penalise_Discr_Obst_prob} and \ref{SectionModNewton_Isaacs}}} In these two sections, we extend the techniques from Sections \ref{Section_PenalisationDiscrProb} and \ref{SectionModNewton} to show that the use of penalisation (Section \ref{Penalise_Discr_Obst_prob}) with subsequent non-linear iteration (Section \ref{SectionModNewton_Isaacs}) is also a powerful strategy for the solution of the obstacle version of a continuously controlled HJB equation. In particular, in Section \ref{SectionModNewton_Isaacs}, we present two Newton-type iterative methods, one that converges globally and one that converges locally quadratically.\medskip

\paragraph{\it{Section \ref{ContinuousControl_InPractice}}}
The numerical strategies developed in Sections \ref{Section_PenalisationDiscrProb}--\ref{SectionModNewton_Isaacs} are designed for
HJB equations and HJB obstacle problems with a compact set of controls and continuous dependence of the differential operators on the controls.
For some equations, it may be convenient to approximate the dependence on the controls, e.g.\ by piecewise linearisation; in this section, we prove that the previously introduced algorithms are stable with respect to such perturbations.\medskip

\paragraph{\it{Section \ref{Part_Numerics}}} We conclude by presenting numerical results, solving an HJB equation
taken from \cite{Zariphopoulou_UnhedgeableRisk} and an HJB obstacle problem found in \cite{Oberman_Thalia_EarlyExerciseInbcompleteMarkets}; in both cases, we compare our approach to the method of policy iteration (cf.\,\cite{Forsyth_Controlled_HJB_PDEs_Finance,Bokanowski_Howard_Algorithm}).

\section{Problem Formulation}\label{ProblemFormulation}

Following \cite{FiedlerSpecialMatrices}, we introduce $Z_N$\,, $N\in\mathbb{N}$, to be the set of all real square matrices of dimension $N$ whose off-diagonal entries are all non-positive, i.e.
\begin{equation*}
Z_N:=\{A=(a_{rs})\in\mathbb{R}^{N\times N} : a_{rs}\leq 0,\,r\neq s\},
\end{equation*}
and define
\begin{equation*}
K_N:=\{A\in Z_N : \exists\ x\in\mathbb{R}^N,\, x\geq 0,\,\text{ s.t. } Ax>0\}
\end{equation*}
to be the set of M-matrices in $\mathbb{R}^N$; it can be found in \cite{FiedlerSpecialMatrices} that $A^{-1}\geq 0$ for all matrices $A\in K_N$. Furthermore, we introduce the set
\begin{equation*}
K^o_N:=\Big\{A=(a_{r,s})\in K_N : a_{ss} > \sum_{r\neq s}|a_{rs}|\Big\}\subset K_N\,,
\end{equation*}
which has the following properties.
\begin{itemize}
\item If, for $1\leq i\leq N$, we replace the $i$-th row of $A\in K^o_N$
by the $i$-th row of some $B\in K^o_N$\,, the resulting matrix will still be in $K^o_N$\,.
\item $A+B\in K^o_N$ for two matrices $A$, $B\in K^o_N$\,.
\end{itemize}
Throughout this paper, the set $K^o_N$ will be one of our main building blocks since it allows us to deduce the M-matrix property of a matrix that was constructed from a number of other matrices.

\begin{remark}\label{RelationalOperators_Remark}
Let $N\in\mathbb{N}$ and define $\mathcal{N}:=\{1,\ldots,N\}$.
Consider two vectors $x$, $y\in\mathbb{R}^N$ and a matrix $A\in\mathbb{R}^{N\times N}$. For any $i\in\mathcal{N}$, we denote by $(x)_i$ and $(A)_i$ the $i$-th coordinate and the $i$-th row of vector $x$ and matrix $A$, respectively. When writing $x\geq 0$, we generally mean that $(x)_i\geq 0$ for all $i\in\mathcal{N}$, and we take $z:=\min\{x,y\}$ to be the vector satisfying $(z)_i=\min\{(x)_i\,,(y)_i\}$ for all $i\in\mathcal{N}$.
The definitions extend trivially to other relational operators and to the maximum of two vectors.
\end{remark}

Making use of the definition of $K^o_N$\,, we assume, for now, that we can find sensible discretisations of Problems \ref{DifferentialOperatorProblem} and \ref{DifferentialOperatorProblem_MinMax} of the following form; we will give a more rigorous justification of this assumption later on.

\begin{prob}\label{DiscreteProbDef}
Let
\begin{align}
\mathfrak{A}&:\mathbf{U}\to \mathbb{R}^{N\times N} :u\mapsto A_u\label{DiscreteProbDef_ConFunDef2}\\
\text{and}\quad\mathfrak{B}&:\mathbf{U}\to\mathbb{R}^N :u\mapsto b_u\label{DiscreteProbDef_ConFunDef1}
\end{align}
be continuous functions, with $A_u\in K^o_N$\,, $u\in\mathbf{U}$. Find $x\in\mathbb{R}^N$ such that
\begin{equation}
\min_{u\in\mathbf{U}}\{A_u\,x-b_u\}=0.\label{DiscreteProbDef_Eq1}
\end{equation}
\end{prob}

The continuity requirement of $\mathfrak{A}$ and $\mathfrak{B}$ above is to be understood in the sense of any vector norm, say $\|\cdot\|_2$ or $\|\cdot\|_\infty$\,, in $\mathbb{R}^{N\times N}$ and $\mathbb{R}^N$. For example, for $N=2$, the maps
$\mathfrak{A}$ and $\mathfrak{B}$ will be of the form
\begin{equation*}
u\mapsto
\left( \begin{array}{cc}
f_1(u) & f_3(u) \\
f_2(u) & f_4(u) \end{array} \right)
\quad\text{and}\quad u\mapsto
\left( \begin{array}{c}
g_1(u) \\
g_2(u) \end{array} \right),
\end{equation*}
respectively, with $f_1$\,, $f_2$\,, $f_3$\,, $f_4$\,, $g_1$\,, $g_2$ $\in C(\mathbf{U},\mathbb{R})$.

\begin{prob}\label{DiscreteProbDef_MinMax}
In the setting of Problem \ref{DiscreteProbDef}, let also $\tilde{A}\in K^o_N$ and $\tilde{b}\in\mathbb{R}^N$. Find $z\in\mathbb{R}^N$ such that
\begin{equation}
\min\Big\{ \max_{u\in\mathbf{U}}\{A_u\,z-b_u\},\,\tilde{A}z-\tilde{b}\Big\}=0.\label{DiscreteProbDef_MinMax_Eq1}
\end{equation}
\end{prob}

We will, without always mentioning so explicitly, make frequent use of the fact that every function which is continuous on a compact interval attains its minimum and maximum on the interval.

\begin{theorem}\label{BothDiscreteProblems_UniqSolvable}
The discrete Problems \ref{DiscreteProbDef} and \ref{DiscreteProbDef_MinMax} are both uniquely solvable.
\end{theorem}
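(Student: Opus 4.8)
My plan is to handle both problems at once by viewing the left-hand sides as maps $\bar F,F\colon\mathbb R^N\to\mathbb R^N$, namely $\bar F(x):=\min_{u}\{A_u x-b_u\}$ for Problem~\ref{DiscreteProbDef} and $F(z):=\min\{\max_u\{A_u z-b_u\},\ \tilde A z-\tilde b\}$ for Problem~\ref{DiscreteProbDef_MinMax}, and to show that each is continuous, injective, and proper. A proper continuous injection $\mathbb R^N\to\mathbb R^N$ is, by invariance of domain, an open map, so its image is simultaneously open, closed (a continuous proper map between locally compact spaces is closed), and nonempty; connectedness of $\mathbb R^N$ then forces the image to be all of $\mathbb R^N$. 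This gives unique solvability of $\bar F(x)=0$ and $F(z)=0$ in one stroke.

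The engine of the argument is an \emph{exact pointwise linearisation}: for every $x$ there are $\hat A\in K^o_N$ and a vector $\hat b$, both assembled row by row from the data, with $\hat A x-\hat b=\bar F(x)$ (respectively $F(z)$). For $\bar F$, put in row $i$ the $i$-th rows of $A_{u_i},b_{u_i}$ for some $u_i\in\argmin_u(A_u x-b_u)_i$ — such a minimiser exists because $\mathbf U$ is compact and $\mathfrak A,\mathfrak B$ are continuous. For $F$, in row $i$ use the $i$-th rows of $\tilde A,\tilde b$ if the obstacle term attains the minimum there, and otherwise the $i$-th rows of $A_{u_i},b_{u_i}$ for some $u_i\in\argmax_u(A_u z-b_u)_i$. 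In either case $\hat A$ is built from rows of matrices in $K^o_N$, hence $\hat A\in K^o_N$ by the row-replacement closure property noted above, so $\hat A^{-1}\ge 0$; moreover $\hat A$ ranges only over the \emph{compact} family $\mathcal K$ of all such mixed matrices (a continuous image of a finite product of copies of $\mathbf U$, resp.\ $\mathbf U$ together with an extra point), on which $\hat A\mapsto\hat A^{-1}$ is continuous, so $\sup_{\mathcal K}\|\hat A^{-1}\|<\infty$.

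Properness is then read off from $x=\hat A^{-1}(\hat b+\bar F(x))$: this yields $\|x\|\le C(1+\|\bar F(x)\|)$ with $C$ independent of $x$, so preimages of bounded sets are bounded, and since $\bar F$ is continuous (a parametric min/max over the compact set $\mathbf U$ of jointly continuous functions, composed with affine maps), preimages of compact sets are compact. Injectivity follows from a comparison estimate: if $\bar F(x)\ge\bar F(y)$ componentwise, linearise $\bar F$ at $y$, so that $\hat A y-\hat b=\bar F(y)$ while $(\hat A x-\hat b)_i=(A_{u_i}x-b_{u_i})_i\ge\min_u(A_u x-b_u)_i=\bar F(x)_i$; hence $\hat A(x-y)\ge\bar F(x)-\bar F(y)\ge 0$, and $\hat A^{-1}\ge 0$ gives $x\ge y$; taking $\bar F(x)=\bar F(y)$ gives $x=y$. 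The same scheme works for $F$: to show $z\ge w$ when $F(z)\ge F(w)$, build $\hat A$ from the obstacle rows where the obstacle is active at $w$ and from rows $A_{u_i}$ with $u_i\in\argmax_u(A_u z-b_u)_i$ (argmax taken at $z$) where the HJB term is active at $w$, and use that $(\hat A w-\hat b)_i\le\max_u(A_u w-b_u)_i=F(w)_i$ while $(\hat A z-\hat b)_i=\max_u(A_u z-b_u)_i\ge F(z)_i$.

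I expect the delicate point to be precisely this last step for $F$: the inner maximisation reverses monotonicity, so one must be careful to evaluate each selected row at the right argument ($z$ versus $w$) and to exploit that any solution of \eqref{DiscreteProbDef_MinMax_Eq1} satisfies both $\max_u\{A_u z-b_u\}\ge 0$ and $\tilde A z-\tilde b\ge 0$, which is what makes the assembled $\hat A$ compare the two solutions correctly. (If one prefers an explicit construction to the topological argument, existence can instead be obtained by policy iteration — monotone and, by compactness of $\mathcal K$, uniformly bounded for Problem~\ref{DiscreteProbDef}, and by a nested variant for Problem~\ref{DiscreteProbDef_MinMax}, the monotonicity of the outer loop being the technical point — with uniqueness, proved as above, ensuring the limit is independent of the initialisation.)
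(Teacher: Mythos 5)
Your proposal is correct, but only its uniqueness half coincides with the paper; the existence half takes a genuinely different route. The comparison argument is essentially the paper's: assemble, row by row, a matrix $\hat A\in K^o_N$ from rows chosen according to which term is active at one of the two candidate solutions, and use $\hat A^{-1}\ge 0$; your treatment of Problem \ref{DiscreteProbDef_MinMax} (argmax evaluated at $z$, activity read off at $w$) reproduces the paper's case distinction (i)--(iv) in slightly cleaner form, and the bound $\|x\|_\infty\le C(1+\|\bar F(x)\|_\infty)$ obtained from the exact linearisation is in substance Corollary \ref{BoundednessofMatrixandInv_Cor}. For existence, however, the paper does not argue directly: it invokes the later penalty results (Corollary \ref{PenaltyConvergenceToTrueSol} together with Theorem \ref{LineSearchNewton_HJB}, and Corollary \ref{PenaltyConvergenceToTrueSol_MinMax} together with Theorem \ref{LineSearchNewton}, or alternatively the nested policy iteration of Bokanowski et al.), i.e.\ solvability of the penalised problems via Pang's globally convergent Newton method plus passage to the limit $\rho\to\infty$; this is constructive and builds exactly the numerical machinery the paper is about, but it makes Theorem \ref{BothDiscreteProblems_UniqSolvable} logically dependent on Sections \ref{Section_PenalisationDiscrProb}--\ref{SectionModNewton_Isaacs}. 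You instead show that $\bar F$ and $F$ are continuous, injective and proper, and conclude surjectivity from invariance of domain combined with closedness of proper maps and connectedness of $\mathbb{R}^N$. This is self-contained, requires no smooth penalty term and no semismooth Newton theory, and yields slightly more (unique solvability of $\bar F(x)=q$ and $F(z)=q$ for every right-hand side $q$, hence a global comparison principle); the price is the appeal to Brouwer's invariance of domain, a nonconstructive topological input, whereas the paper's detour produces convergent algorithms along the way. Your parenthetical policy-iteration alternative is essentially the paper's second cited option for Problem \ref{DiscreteProbDef_MinMax}.
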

\begin{proof}
Corollary \ref{PenaltyConvergenceToTrueSol} and Theorem \ref{LineSearchNewton_HJB} give the existence of a solution to Problem \ref{DiscreteProbDef}, and, similarly, Corollary \ref{PenaltyConvergenceToTrueSol_MinMax} and Theorem \ref{LineSearchNewton} (or, alternatively, Ho-3 in \cite{Bokanowski_Howard_Algorithm}) give the existence of a solution to Problem \ref{DiscreteProbDef_MinMax}.
Hence, it remains to show the uniqueness of the solutions. To this end, suppose we have two solutions $x_1$ and $x_2$ to Problem \ref{DiscreteProbDef}.
For every $i\in\mathcal{N}$, there exists a $u_i\in\mathbf{U}$ such that
\begin{equation*}
(A_{u_i})_i\,x_2-(b_{u_i})_i=0,
\end{equation*}
and we also have
\begin{equation*}
(A_{u_i})_i\,x_1-(b_{u_i})_i\geq0.
\end{equation*}
Denote by $A^*\in\mathbb{R}^{N\times N}$ the matrix consisting of the rows $(A_{u_i})_i$\,, $i\in\mathcal{N}$. We have $A^*\in K^o_N$ and $A^*(x_1-x_2)\geq 0$, from which we get $x_1-x_2\geq 0$ since $(A^*)^{-1}\geq 0$.
Conversely, using the same arguments but swapping $x_1$ and $x_2$\,, we can also get $x_2-x_1\geq 0$, which then proves the uniqueness of a solution to Problem \ref{DiscreteProbDef}.
Now, suppose we have two solutions $z_1$ and $z_2$ to Problem \ref{DiscreteProbDef_MinMax}.
For $i\in\mathcal{N}$, let $u_i^1$\,, $u_i^2\in\mathbf{U}$ be such that
\begin{equation}
\max_{u\in\mathbf{U}}(A_u\,z_j-b_u)_i=(A_{u^j_i}\,z_j-b_{u^j_i})_i\,,\quad j\in\{1,2\}.\label{DiscreteProb_MinMax_UniqTh_Eq0.5}
\end{equation}
From \eqref{DiscreteProbDef_MinMax_Eq1}, we then get that, for $i\in\mathcal{N}$, we have
\begin{equation}
\min\{ (A_{u^1_i}\,z_1-b_{u^1_i})_i\,,(\tilde{A}z_1-\tilde{b})_i\}-\min\{ (A_{u^2_i}\,z_2-b_{u^2_i})_i\,,(\tilde{A}z_2-\tilde{b})_i\}=0.\label{DiscreteProb_MinMax_UniqTh_Eq1}
\end{equation}
We distinguish the following possible cases in \eqref{DiscreteProb_MinMax_UniqTh_Eq1}.
\begin{enumerate}
\item[(i)] We have $(\tilde{A}z_1-\tilde{b})_i - (\tilde{A}z_2-\tilde{b})_i = 0$.
\item[(ii)] We have $(\tilde{A}z_1-\tilde{b})_i -(A_{u^2_i}\,z_2-b_{u^2_i})_i=0$, in which case $(A_{u^1_i}\,z_1 - A_{u^1_i}\,z_2)_i\geq (\tilde{A}z_1-\tilde{b})_i -(A_{u^2_i}\,z_2-b_{u^2_i})_i=0$.
\item[(iii)] We have $(A_{u^1_i}\,z_1-b_{u^1_i})_i-(\tilde{A}z_2-\tilde{b})_i=0$, in which case $(\tilde{A}z_1 - \tilde{A}z_2)_i\geq (A_{u^1_i}\,z_1-b_{u^1_i})_i - (\tilde{A}z_2-\tilde{b})_i =0$.
\item[(iv)] We have $(A_{u^1_i}\,z_1-b_{u^1_i})_i-(A_{u^2_i}\,z_2-b_{u^2_i})_i=0$, in which case $(A_{u^1_i}\,z_1-A_{u^1_i}\,z_2)_i \geq (A_{u^1_i}\,z_1-b_{u^1_i})_i - (A_{u^2_i}\,z_2-b_{u^2_i})_i =0$.
\end{enumerate}
At this point, we can show the uniqueness of a solution to Problem \ref{DiscreteProbDef_MinMax} by arguing analogously to the first half of this proof (where we introduced the matrix $A^*$ and used its $M$-matrix properties).
\end{proof}

Next, we show that the discretisation matrices appearing in \eqref{DiscreteProbDef_Eq1} and \eqref{DiscreteProbDef_MinMax_Eq1} 
have some convenient boundedness properties.

\begin{la}\label{BoundednessofMatrixandInv_La}
For $\mathfrak{A}$ as in Problem \ref{DiscreteProbDef}, there exists a constant $C>0$ such that
\begin{equation*}
0< \|A_u\|_{\infty}\,,\|A^{-1}_u\|_{\infty}\leq C,\quad u\in\mathbf{U}.
\end{equation*}
\end{la}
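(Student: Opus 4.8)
The plan is to establish the two-sided bound via the continuity of $\mathfrak{A}$ on the compact set $\mathbf{U}$ together with the fact that $A_u \in K^o_N$ for every $u$. First I would observe that the upper bound $\|A_u\|_\infty \le C$ is immediate: the map $u \mapsto \|A_u\|_\infty$ is a composition of the continuous function $\mathfrak{A}$ with the continuous norm $\|\cdot\|_\infty$, hence continuous on the compact interval $\mathbf{U}$, and therefore attains a finite maximum, which we may take as (part of) $C$. The strict positivity $\|A_u\|_\infty > 0$ is clear since $A_u \in K^o_N$ forces the diagonal entries to be strictly positive (from $a_{ss} > \sum_{r\neq s}|a_{rs}| \ge 0$), so $A_u \neq 0$.

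The substantive part is the uniform bound on $\|A_u^{-1}\|_\infty$. The cleanest route is to produce, for each $u$, an explicit vector witnessing the M-matrix property and control its size uniformly. Concretely, since $A_u \in K^o_N$, the vector $d_u := A_u \mathbf{1}$ (where $\mathbf{1} = (1,\dots,1)^\top$) has $i$-th component $(d_u)_i = (A_u)_{ii} - \sum_{r \neq i}|(A_u)_{ir}| > 0$; moreover, because $A_u^{-1} \ge 0$ (a property of M-matrices recorded in the excerpt), we have $\mathbf{1} = A_u^{-1} d_u \ge A_u^{-1}(\delta_u \mathbf{1})$ componentwise, where $\delta_u := \min_i (d_u)_i > 0$, giving $\|A_u^{-1}\|_\infty = \|A_u^{-1}\mathbf{1}\|_\infty \le 1/\delta_u$ (using that $A_u^{-1}\ge 0$, so its infinity norm is attained at $\mathbf{1}$). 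It then suffices to bound $\delta_u$ away from zero uniformly in $u$: but $u \mapsto \delta_u$ is continuous (a minimum of finitely many continuous functions of the continuous map $\mathfrak{A}$) and strictly positive on the compact set $\mathbf{U}$, hence bounded below by some $\delta > 0$. Taking $C := \max\{\max_{u} \|A_u\|_\infty,\ 1/\delta\}$ finishes the proof.

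The main obstacle, such as it is, is the identity $\|A_u^{-1}\|_\infty = \|A_u^{-1}\mathbf{1}\|_\infty$ for a nonnegative matrix, and more generally keeping the chain of componentwise inequalities consistent with the monotonicity $A_u^{-1}\ge 0$; once one commits to testing against the constant vector $\mathbf{1}$ everything reduces to elementary estimates. An alternative, slightly more abstract argument would invoke continuity of matrix inversion: $A_u$ ranges over a compact subset of the open set of invertible matrices (compact because $\mathfrak{A}$ is continuous and $\mathbf{U}$ compact, and contained in the invertibles since M-matrices are invertible), so $u \mapsto A_u^{-1}$ is continuous on $\mathbf{U}$ and hence $\|A_u^{-1}\|_\infty$ is bounded; I would likely present the explicit $\mathbf{1}$-based estimate instead, since it is self-contained and yields a constant that is meaningful for the later penalty and Newton analysis.
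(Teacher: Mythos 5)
Your proposal is correct, but it takes a genuinely different route from the paper's. The paper bounds $\|A_u^{-1}\|_{\infty}$ via Cramer's rule: since each $A_u$ is a non-singular M-matrix with positive determinant, $u\mapsto\det(A_u)$ is continuous and strictly positive on the compact set $\mathbf{U}$, hence bounded below by some $\epsilon>0$, while the cofactors are continuous in $u$ and hence bounded above, so every entry of $A_u^{-1}$ is at most $C_2/\epsilon$. You instead exploit the strict diagonal dominance built into $K^o_N$ together with $A_u^{-1}\geq 0$ to get the Varah-type estimate $\|A_u^{-1}\|_{\infty}\leq 1/\delta_u$ with $\delta_u=\min_i (A_u\mathbf{1})_i>0$, and then invoke continuity and compactness only to bound $\delta_u$ away from zero; your steps (monotonicity under $A_u^{-1}\geq 0$, and the identity $\|A_u^{-1}\|_{\infty}=\|A_u^{-1}\mathbf{1}\|_{\infty}$ for a nonnegative matrix) are all sound. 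What each approach buys: yours gives an explicit, quantitative constant tied directly to the $K^o_N$ structure, whereas the paper's argument uses nothing beyond invertibility and continuity and so works for any continuous family of non-singular matrices; indeed, your closing ``alternative'' via continuity of matrix inversion is essentially the paper's proof in abstract form. One small caution: your estimate reads the $K^o_N$ condition as row diagonal dominance, i.e.\ $(A_u\mathbf{1})_i>0$, while the displayed definition literally sums $|a_{rs}|$ over $r\neq s$ with $s$ fixed (column dominance); since the paper's row-replacement closure property and its later row-wise constructions only make sense under row dominance, your reading matches the intended one, but it is worth flagging (with the literal reading the identical argument instead bounds $\|A_u^{-1}\|_{1}$ via $\mathbf{1}^{tr}A_u>0$).
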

\begin{proof}
Since continuity of $\mathfrak{A}$ implies directly the existence of a constant $C_1>0$ such that $\|A_u\|_{\infty}\leq C_1$ for all $u\in\mathbf{U}$, it only remains to prove the corresponding estimate for the inverse matrices.
An M-matrix is, in particular, non-singular with positive determinant (cf.\,\cite{FiedlerSpecialMatrices}), and, hence, we have that $u\mapsto\det(A_u)>0$ is a continuous and positive function on $\mathbf{U}$. As every continuous function on a compact interval assumes its minimum, there exists an $\epsilon>0$ such that
$\det(A_u)>\epsilon$ for all $u\in\mathbf{U}$. Now, in \cite{FiedlerSpecialMatrices}, it can be found that,
for a square non-singular matrix $A=(a_{rs})\in\mathbb{R}^{N\times N}$, it is $A^{-1}=(\alpha_{rs})$ if we define
\begin{equation*}
\alpha_{rs}:=\frac{A_{sr}}{\det(A)}\,,\quad 1\leq r,s\leq N,
\end{equation*}
where $A_{sr}$ is the cofactor of $a_{sr}$ in the matrix $A$. Since the calculation of the cofactor is, like the determinant, a continuous operation on the entries of the matrix, there exists a constant $C_2>0$ such that, for $u\in\mathbf{U}$ and $(A_u)^{-1}=\big((\alpha_u)_{rs}\big)$, it is
\begin{equation*}
|(\alpha_u)_{rs}|=\Big|\frac{(A_u)_{sr}}{\det(A_u)}\Big|\leq\Big|\frac{(A_u)_{sr}}{\epsilon}\Big|\leq \frac{C_2}{\epsilon}\,,\quad 1\leq r,s\leq N.
\end{equation*}
This proves the lemma.
\end{proof}

\begin{cor}\label{BoundednessofMatrixandInv_Cor}
Suppose that, in the situation of Problem \ref{DiscreteProbDef}, we define
\begin{align*}
&\mathcal{U}:= \underbrace{\mathbf{U}\times \mathbf{U}\times\ldots\times \mathbf{U}}_{N\text{-}times}\subset\mathbb{R}^N\\
\text{and}\quad&\mathfrak{A}^{\mathcal{U}}:\mathcal{U}\to\mathbb{R}^{N\times N}:(u_1\,,u_2\,,\ldots,u_N)\mapsto A_{u_1u_2\ldots u_N}\,,
\end{align*}
where $A_{u_1u_2\ldots u_N}$ denotes the matrix having as $i$-th row, $i\in\mathcal{N}$, the $i$-th row of
$A_{u_i}$. In this case, $\mathcal{U}$ is a compact set, $\mathfrak{A}^{\mathcal{U}}$ is a continuous function, and $A_{u_1u_2\ldots u_N}\in K^o_N$ for every $(u_1\,,u_2\,,\ldots,u_N)\in\mathcal{U}$.
Furthermore, there exists a constant $C>0$ such that
\begin{equation*}
\|A_{u_1u_2\ldots u_N}\|_{\infty} +\|A^{-1}_{u_1u_2\ldots u_N}\|_{\infty}
\leq C,\quad (u_1\,,u_2\,,\ldots,u_N)\in\mathcal{U}.
\end{equation*}
\end{cor}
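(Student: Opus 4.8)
The plan is to reduce every claim to facts already established for the single-control map $\mathfrak{A}$, exploiting the row-wise construction of $A_{u_1u_2\ldots u_N}$. First I would dispose of the topological statement: $\mathcal{U}$ is a finite product of the compact interval $\mathbf{U}$, hence compact by Tychonoff (or, since this is a finite product, elementary sequential compactness). For continuity of $\mathfrak{A}^{\mathcal{U}}$, I would note that the $i$-th row of $A_{u_1\ldots u_N}$ depends only on $u_i$ through the $i$-th row of $\mathfrak{A}(u_i)$, and the projection $(u_1,\ldots,u_N)\mapsto u_i$ composed with the continuous map $u\mapsto (A_u)_i$ is continuous; assembling finitely many continuous row-maps into a matrix-valued map preserves continuity in any norm on $\mathbb{R}^{N\times N}$.

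Next I would establish the $K^o_N$-membership. This is exactly the first bulleted property of $K^o_N$ recalled in the excerpt: starting from $A_{u_1}\in K^o_N$, replacing its $2$nd row by the $2$nd row of $A_{u_2}\in K^o_N$ keeps the matrix in $K^o_N$, then replacing the $3$rd row by that of $A_{u_3}$, and so on; after $N-1$ such replacements we obtain $A_{u_1u_2\ldots u_N}\in K^o_N$. Since $K^o_N\subset K_N$, this matrix is in particular a nonsingular M-matrix, so its inverse exists and is nonnegative.

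For the uniform bound, the estimate on $\|A_{u_1u_2\ldots u_N}\|_\infty$ is immediate, since the max-row-sum norm of a matrix whose $i$-th row is the $i$-th row of $A_{u_i}$ is at most $\max_{u\in\mathbf{U}}\|A_u\|_\infty\le C_1$, which is finite by Lemma \ref{BoundednessofMatrixandInv_La}. The bound on the inverse is the only point requiring a little care, and it is the main (though mild) obstacle: one cannot simply quote Lemma \ref{BoundednessofMatrixandInv_La} because $\mathfrak{A}^{\mathcal{U}}$ is a genuinely new, higher-dimensional family. I would, however, repeat that lemma's argument verbatim in the new setting: $u\mapsto\det(A_{u_1\ldots u_N})$ is continuous on the compact set $\mathcal{U}$ (being a polynomial in the matrix entries, each a continuous function of $(u_1,\ldots,u_N)$) and strictly positive because each $A_{u_1\ldots u_N}$ is an M-matrix, hence it is bounded below by some $\epsilon>0$; likewise every cofactor is a continuous function on $\mathcal{U}$ and therefore bounded in modulus by some $C_2>0$; the cofactor formula $(\alpha)_{rs}=(A_{u_1\ldots u_N})_{sr}/\det(A_{u_1\ldots u_N})$ then yields $\|A^{-1}_{u_1\ldots u_N}\|_\infty\le N C_2/\epsilon$ uniformly. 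Taking $C:=C_1+NC_2/\epsilon$ (relabelled) gives the stated inequality and completes the proof.
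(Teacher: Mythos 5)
Your proposal is correct and follows essentially the same route as the paper: the paper's proof simply states that the argument of Lemma \ref{BoundednessofMatrixandInv_La} carries over verbatim with the compact set $\mathcal{U}$ replacing $\mathbf{U}$, which is exactly the determinant/cofactor compactness argument you repeat, and the preliminary claims (compactness of the product, row-wise continuity, and $K^o_N$-membership via the row-replacement property) are handled just as the paper intends, only spelled out more explicitly.
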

\begin{proof}
The proof is identical to the one of Lemma \ref{BoundednessofMatrixandInv_La}, the only difference being that we have to deal with a continuous function $\mathcal{U}\to\mathbb{R}$ instead of $\mathbf{U}\to\mathbb{R}$.
\end{proof}

\section{Penalising the Discrete HJB Equation}\label{Section_PenalisationDiscrProb}

We begin by studying Problem \ref{DiscreteProbDef}, which we call the \textit{HJB equation}; for nominal differentiation, Problem \ref{DiscreteProbDef_MinMax}, which we discuss subsequently, will be referred to as the \textit{HJB obstacle problem}.
The penalty approximation used in this paper is an extension of ideas used for discretely controlled HJB equations in \cite{WitteReisinger_PenaltyScheme_DiscreteControlledHJBEquations} and for American options in \cite{ForsythQuadraticConvergence}. The penalised equation is a non-linear equation itself and can be solved by an iterative scheme (cf.\,Section\,\ref{SectionModNewton}).\bigskip

We introduce the penalty term
\begin{equation}
\Pi :\mathbb{R}^N\to\mathbb{R}^N: y\mapsto \big(\pi_1(y_1),\pi_2(y_2),\ldots,\pi_N(y_N)\big)^{tr}.\label{GenPenaltyTermDef}
\end{equation}

\begin{assumption}\label{GenPenaltyTermDef_PropAssumption}
For $i\in\mathcal{N}$, we take $\pi_i\in C(\mathbb{R})$ to be a non-decreasing function
satisfying $\pi_i|_{(-\infty,0]}= 0$ and $\pi_i|_{(0,\infty)}> 0$.
\end{assumption}

\begin{prob}\label{PenDiscreteProbDef}
Let $u_0\in\mathbf{U}$, $\rho>0$. Find $x_{\rho}\in\mathbb{R}^N$ such that
\begin{equation}
(A_{u_0}\,x_{\rho} - b_{u_0}) -\rho \max_{u\in\mathbf{U}}\Pi(b_u-A_u\,x_{\rho}) =0.\label{PenDiscreteProbDef_Eq1}
\end{equation}
\end{prob}

In \eqref{PenDiscreteProbDef_Eq1}, contrary to \cite{WitteReisinger_PenaltyScheme_DiscreteControlledHJBEquations}, a penalty is applied only to the maximum violation of the constraints; since the solution to Problem \ref{PenDiscreteProbDef} can be shown to be bounded independently of $\rho$ (cf.\,Lemma\,\ref{PenDiscreteProbDef_BoundedIndepRho}), this guarantees all constraints to be satisfied as $\rho\to\infty$ (cf.\,Corollary\,\ref{PenaltyConvergenceToTrueSol}).\bigskip

We point out that, in Problem \ref{PenDiscreteProbDef}, $u_0$ does not require any further specification, and
all the results of this section can be shown to hold for any choice of $u_0\in\mathbf{U}$.
In Section \ref{Section_Examples_Indifference}, we will study the  practical effects of $u_0$
in a numerical example.

\begin{la}\label{PenDiscreteProbDef_Uniq}
If there exists a solution to Problem \ref{PenDiscreteProbDef}, then it is unique.
\end{la}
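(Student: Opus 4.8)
The plan is to mimic the uniqueness argument already used for Problem \ref{DiscreteProbDef} in the proof of Theorem \ref{BothDiscreteProblems_UniqSolvable}, exploiting the $K^o_N$-structure and the monotonicity of the penalty term. Suppose $x_\rho$ and $y_\rho$ are both solutions of \eqref{PenDiscreteProbDef_Eq1}. The goal is to show $x_\rho - y_\rho \geq 0$ and, by symmetry, $y_\rho - x_\rho \geq 0$, hence $x_\rho = y_\rho$.

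First I would work row by row. Fix $i\in\mathcal{N}$. Because $\mathbf{U}$ is compact and $\mathfrak{A}$, $\mathfrak{B}$, and each $\pi_i$ are continuous, the maximum in \eqref{PenDiscreteProbDef_Eq1} is attained: pick $\bar u_i, \hat u_i \in \mathbf{U}$ with
\begin{equation*}
\max_{u\in\mathbf{U}} \pi_i\big((b_u - A_u x_\rho)_i\big) = \pi_i\big((b_{\bar u_i} - A_{\bar u_i} x_\rho)_i\big), \qquad
\max_{u\in\mathbf{U}} \pi_i\big((b_u - A_u y_\rho)_i\big) = \pi_i\big((b_{\hat u_i} - A_{\hat u_i} y_\rho)_i\big).
\end{equation*}
Assume, without loss of generality (swapping the roles of $x_\rho$ and $y_\rho$ and of the indices if necessary), that the $i$-th penalty value at $x_\rho$ is the larger of the two, i.e. $\pi_i((b_{\bar u_i} - A_{\bar u_i} x_\rho)_i) \geq \pi_i((b_{\hat u_i} - A_{\hat u_i} y_\rho)_i)$; since $\pi_i$ is non-decreasing, a convenient sub-case split is whether $(b_{\bar u_i} - A_{\bar u_i} x_\rho)_i \geq (b_{\bar u_i} - A_{\bar u_i} y_\rho)_i$ or not, but the cleanest route is: subtract the $i$-th components of the two copies of \eqref{PenDiscreteProbDef_Eq1}, obtaining
\begin{equation*}
(A_{u_0}(x_\rho - y_\rho))_i = \rho\Big[ \pi_i\big((b_{\bar u_i} - A_{\bar u_i} x_\rho)_i\big) - \pi_i\big((b_{\hat u_i} - A_{\hat u_i} y_\rho)_i\big) \Big].
\end{equation*}
Using that $\bar u_i$ is a maximiser for $x_\rho$ while $\hat u_i$ is merely one competitor, and likewise $\hat u_i$ is the maximiser for $y_\rho$, one shows the bracket is bounded below by $\pi_i((b_{\bar u_i} - A_{\bar u_i} x_\rho)_i) - \pi_i((b_{\bar u_i} - A_{\bar u_i} y_\rho)_i)$ when the bracket is non-negative at row $i$; combined with monotonicity of $\pi_i$, a non-negative bracket forces $(b_{\bar u_i} - A_{\bar u_i} x_\rho)_i \geq (b_{\bar u_i} - A_{\bar u_i} y_\rho)_i$, i.e. $(A_{\bar u_i}(y_\rho - x_\rho))_i \geq 0$. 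For rows where the bracket is negative, the roles of $x_\rho, y_\rho$ are reversed and one gets the analogous inequality with $\hat u_i$.

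Now I would assemble a composite matrix $A^*$ exactly as in Theorem \ref{BothDiscreteProblems_UniqSolvable}: take the $i$-th row of $A^*$ to be the $i$-th row of $A_{\bar u_i}$ for those indices $i$ where $x_\rho$ gave the larger penalty, and the $i$-th row of $A_{\hat u_i}$ otherwise. By the stated row-replacement property of $K^o_N$, $A^* \in K^o_N \subset K_N$, so $(A^*)^{-1} \geq 0$. The row-wise inequalities just derived say $A^*(y_\rho - x_\rho) \geq 0$ on the "$x_\rho$ larger" rows and $A^*(x_\rho - y_\rho)\geq 0$ on the others — so this naive partition is not yet what I want. The fix is to make the case analysis symmetric up front: for each $i$, exactly one of the two possibilities $(A_{\bar u_i}(x_\rho - y_\rho))_i \geq 0$ or $(A_{\hat u_i}(y_\rho - x_\rho))_i \geq 0$ holds (this dichotomy is exactly what the bracket sign argument delivers once written carefully, since $A_{u_0}$ contributes with a consistent sign and $\rho>0$). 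Collecting the rows of the first type into $A^*$ shows $(A^* d)_i \geq 0$ with $d := x_\rho - y_\rho$ on those rows, and collecting rows of the second type into $A^{**}$ shows $(A^{**} d)_i \leq 0$ on the complementary rows. Since $(A^*)^{-1} \geq 0$ is not directly applicable to a mixed system, I would instead argue pointwise on $\|d\|_\infty$: let $j$ be an index where $|(d)_j| = \|d\|_\infty$; if $(d)_j > 0$ is the maximum, the $M$-matrix row inequality at $j$ (diagonal dominance, off-diagonals $\leq 0$) forces $(d)_j \leq 0$, a contradiction unless $d \equiv 0$; symmetrically if the extreme component is negative. This is the standard comparison-principle-by-maximum-component argument, and it is the cleanest way to conclude.

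The main obstacle is bookkeeping the case distinction so that, at each row, one genuinely extracts a $K^o_N$-row against which $x_\rho - y_\rho$ has a sign-definite image — the subtlety being that the maximiser $\bar u_i$ attached to $x_\rho$ need not be the maximiser for $y_\rho$, so one must always compare a maximiser against a non-optimal competitor in the right direction, and this is precisely where the non-decreasing property of $\pi_i$ (not its strict positivity on $(0,\infty)$, which is not needed here) does the work. Once that dichotomy is pinned down cleanly, the rest is the verbatim $M$-matrix argument of Theorem \ref{BothDiscreteProblems_UniqSolvable}.
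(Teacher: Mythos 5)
There is a genuine gap, and it sits exactly where you flag ``the main obstacle'': the row-wise dichotomy is both misstated and left unproven. First, the bound you claim on the bracket has the wrong direction. Since $\hat u_i$ is a maximiser at $y_\rho$, you have $\pi_i\big((b_{\hat u_i}-A_{\hat u_i}y_\rho)_i\big)\geq \pi_i\big((b_{\bar u_i}-A_{\bar u_i}y_\rho)_i\big)$, so replacing the second term of the bracket by its value at $\bar u_i$ gives an \emph{upper} bound on $\Pi_1-\Pi_2$, not a lower bound; consequently a (strictly) positive bracket together with monotonicity of $\pi_i$ yields $(A_{\bar u_i}(y_\rho-x_\rho))_i>0$, i.e.\ $(A_{\bar u_i}d)_i<0$ with $d=x_\rho-y_\rho$, which is the \emph{opposite} sign from the one you want to collect, and when the bracket vanishes the non-strict monotonicity of $\pi_i$ gives no information about the arguments at all. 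Second, the asserted dichotomy ``exactly one of $(A_{\bar u_i}(x_\rho-y_\rho))_i\geq 0$ or $(A_{\hat u_i}(y_\rho-x_\rho))_i\geq 0$'' is not what this sign analysis delivers, and your maximum-component rescue does not close with the mixed partition you end up with: if the index $j$ realising $\|d\|_\infty$ with $(d)_j>0$ lies in a row where the only available inequality is of the form $(A^*d)_j\geq 0$, strict diagonal dominance produces no contradiction there. Asserting that the dichotomy follows ``once written carefully'' is precisely the missing proof.

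The paper avoids the mixed system altogether by comparing both maxima against the \emph{same fixed control} $\hat u_i$ (the maximiser attached to the second solution). Because $\bar u_i$ is optimal at $x_\rho$, the $i$-th penalty difference in the subtracted equation is bounded above by $-\rho\,\pi_i\big((b_{\hat u_i}-A_{\hat u_i}x_\rho)_i\big)+\rho\,\pi_i\big((b_{\hat u_i}-A_{\hat u_i}y_\rho)_i\big)$. If this same-control difference is $\leq 0$, the subtracted equation \eqref{PenDiscreteProbDef_Eq1} forces $(A_{u_0}d)_i\geq 0$; if it is $>0$, monotonicity of $\pi_i$ forces $(A_{\hat u_i}d)_i\geq 0$. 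Thus \emph{every} row supplies an inequality of the same sign for the same difference $d$, one assembles a single matrix $A^*\in K^o_N$ (each row taken from $A_{u_0}$ or $A_{\hat u_i}$), gets $A^*d\geq 0$, hence $d\geq 0$ from $(A^*)^{-1}\geq 0$, and concludes by swapping the roles of the two solutions. Your maximum-component idea can in fact be salvaged -- at a positive maximum of $d$ either the bracket is $\leq 0$, giving $(A_{u_0}d)_j\leq 0$, or it is $>0$, giving $(A_{\bar u_j}d)_j<0$, and either clashes with strict diagonal dominance, with the symmetric argument at a negative minimum -- but pinning down that case analysis is exactly the step your proposal leaves open, and it amounts to rediscovering the paper's fixed-control comparison.
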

\begin{proof}
Suppose we have two solutions $x_{\rho 1}$ and $x_{\rho 2}$\,. From \eqref{PenDiscreteProbDef_Eq1}, we get that
\begin{equation}
A_{u_0}(x_{\rho 1} - x_{\rho 2}) -\rho \max_{u\in\mathbf{U}}\Pi(b_u-A_u\,x_{\rho 1})
+\rho \max_{u\in\mathbf{U}}\Pi(b_u-A_u\,x_{\rho 2})=0.\label{PenDiscreteProbDef_Uniq_Eq2.8}
\end{equation}
Now, let $i\in\mathcal{N}$, and define $u_{i1}$\,, $u_{i2}\in\mathbf{U}$ to be such that
\begin{equation*}
\max_{u\in\mathbf{U}}\pi_i\big((b_u - A_u\,x_{\rho j})_i\big)=\pi_i\big((b_{u_{ij}} - A_{u_{ij}}\,x_{\rho j})_i\big),\quad j\in\{1,2\}.
\end{equation*}
For the second and third term in \eqref{PenDiscreteProbDef_Uniq_Eq2.8}, we have
\begin{align}
&\ -\rho \max_{u\in\mathbf{U}}\pi_i\big((b_u-A_u\,x_{\rho 1})_i\big)
+\rho \max_{u\in\mathbf{U}}\pi_i\big((b_u-A_u\,x_{\rho 2})_i\big)\label{PenDiscreteProbDef_Uniq_Eq3}\\
=&\ 
-\rho \pi_i\big((b_{u_{i1}}-A_{u_{i1}}\,x_{\rho 1})_i\big)
+\rho \pi_i\big((b_{u_{i2}}-A_{u_{i2}}\,x_{\rho 2})_i\big)\nonumber\\
\leq&\
-\rho \pi_i\big((b_{u_{i2}}-A_{u_{i2}}\,x_{\rho 1})_i\big)
+\rho \pi_i\big((b_{u_{i2}}-A_{u_{i2}}\,x_{\rho 2})_i\big).\nonumber
\end{align}
At this point, the idea is that if there exists a matrix $A^*\in K^o_N$ satisfying
\begin{equation*}
A^*(x_{\rho 1} - x_{\rho 2}) \geq 0,
\end{equation*}
then we can deduce $(x_{\rho 1} - x_{\rho 2}) \geq 0$ since $(A^*)^{-1}\geq 0$, and we get uniqueness as in Theorem \ref{BothDiscreteProblems_UniqSolvable}.
We distinguish between the following two cases.
\begin{itemize}
\item If $\pi_i\big((b_{u_{i2}}-A_{u_{i2}}\,x_{\rho 1})_i\big)\geq \pi_i\big((b_{u_{i2}}-A_{u_{i2}}\,x_{\rho 2})_i\big)\geq 0$, then \eqref{PenDiscreteProbDef_Uniq_Eq3} is non-positive, which means the first term in \eqref{PenDiscreteProbDef_Uniq_Eq2.8} must be non-negative, and we can use $(A^*)_i=(A_{u_0})_i$\,.
\item If $0\leq \pi_i\big((b_{u_{i2}}-A_{u_{i2}}\,x_{\rho 1})_i\big)\leq \pi_i\big((b_{u_{i2}}-A_{u_{i2}}\,x_{\rho 2})_i\big)$, then
$(b_{u_{i2}}-A_{u_{i2}}\,x_{\rho 2})_i\geq (b_{u_{i2}}-A_{u_{i2}}\,x_{\rho 1})_i$\,, which is
equivalent to $\big(A_{u_{i2}}(x_{\rho 1} - x_{\rho 2})\big)_i\geq 0$, and we can set $(A^*)_i = (A_{u_{i2}})_i$\,.
\end{itemize}
\end{proof}

In Section \ref{SectionModNewton}, we will discuss several choices of the penalty function $\Pi$ that allow to conclude the existence of
a solution to Problem \ref{PenDiscreteProbDef}. Here, we study the approximation of Problem \ref{DiscreteProbDef} by Problem \ref{PenDiscreteProbDef}.

\begin{la}\label{PenDiscreteProbDef_BoundedIndepRho}
Suppose there exists a solution $x_{\rho}$ to Problem \ref{PenDiscreteProbDef}.
There exists a constant $C>0$, independent of $\rho$ and $\Pi$, such that $\|x_{\rho}\|_{\infty}\leq C$.
\end{la}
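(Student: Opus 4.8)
The plan is to use the penalty equation \eqref{PenDiscreteProbDef_Eq1} row-by-row to construct an auxiliary matrix $A^* \in K^o_N$ whose inverse controls $x_\rho$, exactly in the spirit of the uniqueness arguments in Theorem \ref{BothDiscreteProblems_UniqSolvable} and Lemma \ref{PenDiscreteProbDef_Uniq}. The key observation is that the sign of the penalty contribution $\rho\,\pi_i\big((b_u - A_u x_\rho)_i\big) \geq 0$ can be exploited in both directions. First I would obtain a \emph{one-sided} bound. For each $i \in \mathcal{N}$, since $\rho \max_{u}\pi_i\big((b_u - A_u x_\rho)_i\big) \geq 0$, equation \eqref{PenDiscreteProbDef_Eq1} gives $(A_{u_0} x_\rho - b_{u_0})_i \geq 0$, i.e. $(A_{u_0} x_\rho)_i \geq (b_{u_0})_i$. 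Comparing with the (unique) solution $x$ of Problem \ref{DiscreteProbDef}, which satisfies $(A_{u_0} x - b_{u_0})_i \geq 0$ as well but is pinned down from above by the $\min$, one gets $A_{u_0}(x_\rho - x) \geq A_{u_0} x_\rho - b_{u_0} - (A_{u_0} x - b_{u_0})$; more simply, since $A_{u_0} \in K^o_N$ and $(A_{u_0}(x_\rho))_i \geq (b_{u_0})_i$ for all $i$ while $(A_{u_0} x)_i$ need not dominate, a direct comparison is cleaner if done against an $A^*$ as below.

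The cleaner route: for the \emph{lower} bound on $x_\rho$, pick for each $i$ a control $u_i \in \mathbf{U}$ attaining $\max_u \pi_i\big((b_u - A_u x)_i\big)$ where $x$ solves Problem \ref{DiscreteProbDef}; since $x$ satisfies all constraints $(A_u x - b_u)_i \geq 0$, we have $\pi_i\big((b_{u_i} - A_{u_i}x)_i\big) = 0$, hence $\max_u \Pi(b_u - A_u x) = 0$ and $x$ trivially satisfies $(A_{u_0} x - b_{u_0}) - \rho\max_u \Pi(b_u - A_u x) = A_{u_0}x - b_{u_0} \geq 0$. Subtracting \eqref{PenDiscreteProbDef_Eq1} and arguing row-wise exactly as in Lemma \ref{PenDiscreteProbDef_Uniq} (distinguishing whether the penalty term at $x_\rho$ dominates that at $x$), I build $A^* \in K^o_N$ with $A^*(x - x_\rho) \geq 0$, whence $x_\rho \leq x$, a bound independent of $\rho$ and $\Pi$. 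For the \emph{upper} bound I need the reverse inequality $x_\rho \geq$ (something $\rho$-independent). Here I use that $\Pi \geq 0$ everywhere: from \eqref{PenDiscreteProbDef_Eq1}, $A_{u_0} x_\rho \geq b_{u_0}$, so $x_\rho \geq A_{u_0}^{-1} b_{u_0}$ since $A_{u_0}^{-1} \geq 0$ (Lemma \ref{BoundednessofMatrixandInv_La} gives $\|A_{u_0}^{-1}\|_\infty \leq C$, and $\|b_{u_0}\|_\infty$ is bounded by continuity of $\mathfrak{B}$ on the compact $\mathbf{U}$). Combining, $A_{u_0}^{-1}b_{u_0} \leq x_\rho \leq x$ componentwise, and taking $\|\cdot\|_\infty$ of the two $\rho$-independent envelopes yields the claim.

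The main obstacle is the lower-bound step where I must squeeze $x_\rho$ from above by $x$: the term $-\rho \max_u \Pi(b_u - A_u x_\rho)$ has a $\rho$ in front, so any naive estimate blows up, and the whole point is that this term, being nonpositive, forces $A_{u_0}x_\rho - b_{u_0} \geq 0$ without ever needing to bound $\rho$ times a penalty. The delicate part is the $K^o_N$ row-selection argument: I must verify, in the case where the penalty at $x_\rho$ strictly dominates, that the inequality $\pi_i\big((b_{\hat u}-A_{\hat u}x_\rho)_i\big) \geq \pi_i\big((b_{\hat u}-A_{\hat u}x)_i\big) = 0$ (for $\hat u$ the maximiser at $x$) together with monotonicity of $\pi_i$ yields $\big(A_{\hat u}(x - x_\rho)\big)_i \geq 0$ \emph{only} when $\pi_i$ is known to be, loosely, strictly increasing past $0$ — but Assumption \ref{GenPenaltyTermDef_PropAssumption} only gives $\pi_i > 0$ on $(0,\infty)$, not injectivity. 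This is handled precisely as in Lemma \ref{PenDiscreteProbDef_Uniq}: one does not need injectivity, only that $\pi_i(s) > 0 \Rightarrow s > 0$, which is immediate from $\pi_i|_{(-\infty,0]} = 0$; so in that case $(b_{\hat u} - A_{\hat u}x_\rho)_i > 0 \geq (b_{\hat u} - A_{\hat u}x)_i$ and we set $(A^*)_i = (A_{\hat u})_i$, while in the complementary case the penalty difference \eqref{PenDiscreteProbDef_Uniq_Eq3}-style is nonpositive, forcing $(A_{u_0}(x_\rho - x))_i \leq 0$ and we set $(A^*)_i = (A_{u_0})_i$. Since $A^* \in K^o_N$ has $(A^*)^{-1} \geq 0$, the bound follows.
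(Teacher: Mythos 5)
Your lower bound is exactly the paper's: since $\Pi\geq 0$, equation \eqref{PenDiscreteProbDef_Eq1} forces $A_{u_0}x_{\rho}\geq b_{u_0}$, hence $x_{\rho}\geq (A_{u_0})^{-1}b_{u_0}$ because $(A_{u_0})^{-1}\geq 0$. The problem is your upper bound, which compares $x_{\rho}$ against the solution $x$ of Problem \ref{DiscreteProbDef}. Within the paper's logical architecture this is circular: the \emph{existence} of $x$ is established in Theorem \ref{BothDiscreteProblems_UniqSolvable} only by appeal to Corollary \ref{PenaltyConvergenceToTrueSol} (together with Theorem \ref{LineSearchNewton_HJB}), and Corollary \ref{PenaltyConvergenceToTrueSol} rests on the very boundedness statement you are proving here (Lemma \ref{PenDiscreteProbDef_BoundedIndepRho}) together with Lemma \ref{PenaltyApproxAccuracy_Theorem}. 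At the point where this lemma is invoked, only \emph{uniqueness} of $x$ is available, not existence; your comparison argument -- which, granted existence of $x$, is a sound Lemma-\ref{PenDiscreteProbDef_Uniq}-style case analysis and does give $x_{\rho}\leq x$, after correcting the slip that the relevant maximiser $\hat u$ is the one at $x_{\rho}$ (where the penalty is positive), not at $x$ (where all penalties vanish) -- therefore only proves a conditional statement, unless existence of $x$ is imported from outside the paper (e.g.\ from policy iteration in the cited literature), which defeats the purpose of the penalisation route to existence.

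The gap is easily repaired, and the repair is precisely the paper's proof: read the penalised equation $A_{u_0}x_{\rho}=b_{u_0}+\rho\max_{u\in\mathbf{U}}\Pi(b_u-A_u x_{\rho})$ row by row. In row $i$, either the penalty term vanishes, in which case $(A_{u_0}x_{\rho})_i=(b_{u_0})_i$, or it is positive, in which case there exists $u$ with $\pi_i\big((b_u-A_u x_{\rho})_i\big)>0$ and hence $(A_u x_{\rho})_i<(b_u)_i$. Either way one can pick $u_i\in\mathbf{U}$ with $(A_{u_i}x_{\rho})_i\leq (b_{u_i})_i$; assembling these rows into $A^*\in K^o_N$ and the corresponding $b^*$ yields $A^*x_{\rho}\leq b^*$, so $x_{\rho}\leq (A^*)^{-1}b^*$, and Corollary \ref{BoundednessofMatrixandInv_Cor} bounds this uniformly over all such row selections, independently of $\rho$ and $\Pi$. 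This bounds $x_{\rho}$ from above using only the hypothesis of the lemma, with no reference to the solution $x$ of Problem \ref{DiscreteProbDef}.
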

\begin{proof}
We rewrite \eqref{PenDiscreteProbDef_Eq1} to get
\begin{equation}
A_{u_0}\,x_{\rho} =\rho\max_{u\in\mathbf{U}}\Pi\big(b_u-A_u\,x_{\rho}\big)+b_{u_0}.\label{PenDiscreteProbDef_BoundedIndepRho_Eq1}
\end{equation}
Hence, for every component $i\in\mathcal{N}$, we have that
\begin{equation*}
(A_{u_0}\,x_{\rho})_i= (b_{u_0})_i\quad\text{or}\quad\exists\ u\in\mathbf{U}\text{ s.t. }(A_u\,x_{\rho})_i < (b_u)_i\,.
\end{equation*}
Therefore, we may, for $i\in\mathcal{N}$, define $u_i\in\mathbf{U}$ to be such that $(A_{u_i}\,x_{\rho})_i\leq (b_{u_i})_i$
is satisfied, and, introducing $A^*\in K^o_N$ to be the matrix having as $i$-th row the $i$-th row of $A_{u_i}$\,, $i\in\mathcal{N}$, and defining $b^*\in\mathbb{R}^N$ correspondingly, we have
\begin{equation*}
A^*\,x_{\rho}\leq b^*.
\end{equation*}
From Corollary \ref{BoundednessofMatrixandInv_Cor}, we then get $x_{\rho}\leq (A^*)^{-1}b^*\leq C$ for some constant $C$ independent of $\rho$
and $\Pi$.
To see that the negative part of $x_{\rho}$ is bounded as well, we note that, from
\eqref{PenDiscreteProbDef_BoundedIndepRho_Eq1}, we have
\begin{equation*}
x_{\rho} \geq (A_{u_0})^{-1}b_{u_0}\,,
\end{equation*}
in which $(A_{u_0})^{-1}\geq 0$.
\end{proof}

\begin{la}\label{PenaltyApproxAccuracy_Theorem}
Suppose there exists a solution $x_{\rho}$ to Problem \ref{PenDiscreteProbDef} for every $\rho>0$. There exists a constant $C>0$, independent of $\rho$ and $\Pi$, such that
\begin{equation}
\Big\|\min_{u\in\mathbf{U}}\Big[ \max\{A_u\,x_{\rho}-b_u\,,0\} - \Pi(b_u-A_u\,x_{\rho})\Big]\Big\|_{\infty}\leq C/\rho.\label{PenaltyApproxAccuracy_Theorem_Eq.5}
\end{equation}
\end{la}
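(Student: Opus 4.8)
The plan is to exploit the defining equation \eqref{PenDiscreteProbDef_Eq1} componentwise and compare the penalised residual to the true HJB residual. First I would fix a component $i \in \mathcal{N}$. By Lemma \ref{PenDiscreteProbDef_BoundedIndepRho}, $\|x_\rho\|_\infty \le C$ uniformly in $\rho$ and $\Pi$, and by Corollary \ref{BoundednessofMatrixandInv_Cor} the matrices $A_u$ are uniformly bounded in $\|\cdot\|_\infty$; hence there is a uniform constant $M$ with $|(A_u x_\rho - b_u)_i| \le M$ for all $u \in \mathbf{U}$ and all $i$. Rearranging \eqref{PenDiscreteProbDef_Eq1} gives, for each $i$,
\begin{equation*}
\rho\,\max_{u\in\mathbf{U}}\pi_i\big((b_u - A_u x_\rho)_i\big) = (A_{u_0} x_\rho - b_{u_0})_i\,,
\end{equation*}
so $\max_{u\in\mathbf{U}}\pi_i\big((b_u - A_u x_\rho)_i\big) \le M/\rho$ (the left side is also $\ge 0$ by Assumption \ref{GenPenaltyTermDef_PropAssumption}, which forces $(A_{u_0}x_\rho - b_{u_0})_i \ge 0$, but that is a side remark). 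This already controls $\|\Pi(b_u - A_u x_\rho)\|$ uniformly; the real content is to show the quantity inside the $\min_u$ is small, which requires pairing the right $u$.

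The key step is to bound $\max\{(A_u x_\rho - b_u)_i,0\}$ from above for a well-chosen $u$. Let $u_i^\star \in \mathbf{U}$ attain $\max_{u\in\mathbf{U}}\pi_i\big((b_u - A_u x_\rho)_i\big)$. I distinguish two cases. If $(b_{u_i^\star} - A_{u_i^\star} x_\rho)_i \le 0$, then $(A_{u_i^\star} x_\rho - b_{u_i^\star})_i \ge 0$, and I claim this quantity is $\le C/\rho$: indeed for \emph{every} $u$ we have $(b_u - A_u x_\rho)_i \le (b_{u_i^\star} - A_{u_i^\star} x_\rho)_i \le 0$ (otherwise $\pi_i$ would be strictly positive there, contradicting that $u_i^\star$ is the maximiser and $\pi_i(\text{non-positive})=0$)—wait, this needs care: $u_i^\star$ maximises $\pi_i$, not the argument. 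So instead: if some $u$ gave $(b_u - A_u x_\rho)_i > 0$ then $\pi_i$ of it would be $>0$, hence $\rho \cdot (\text{that value}) \le M$, i.e. $(A_u x_\rho - b_u)_i \ge -\,$(something) but not directly small. The correct route is: pick $u_i$ achieving $\max_u (A_u x_\rho - b_u)_i$ when that max is positive; then $(b_{u_i} - A_{u_i} x_\rho)_i < 0$ so $\pi_i$ vanishes there, and I must show $(A_{u_i} x_\rho - b_{u_i})_i$ is $O(1/\rho)$. This is where I would invoke the uniqueness/monotonicity machinery: build $A^\star \in K^o_N$ row-by-row from such $u_i$, compare $A^\star x_\rho$ with $A^\star x$ where $x$ solves Problem \ref{DiscreteProbDef}, and use $(A^\star)^{-1} \ge 0$ together with the $O(1/\rho)$ bound on the penalty term to conclude $\|x_\rho - x\|_\infty \le C/\rho$; then \eqref{PenaltyApproxAccuracy_Theorem_Eq.5} follows from Lipschitz continuity of $u \mapsto A_u, b_u$ on the compact $\mathbf{U}$ and the uniform matrix bounds. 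Alternatively, and more cleanly, one shows directly that $\min_u \max\{(A_u x_\rho - b_u)_i, 0\}$ is small: choosing in the $\min$ the same $u_i^\star$ as above, the term $\Pi(b_{u_i^\star} - A_{u_i^\star} x_\rho)_i \le M/\rho$, and $\max\{(A_{u_i^\star}x_\rho - b_{u_i^\star})_i,0\}$ is either $0$ (if the argument is negative) or equals $(A_{u_i^\star}x_\rho - b_{u_i^\star})_i$, in which case $\pi_i$ of the negative of it is $0$, so the bracket $[\max\{\cdot,0\} - \Pi(\cdot)]$ at $u_i^\star$ equals $(A_{u_i^\star}x_\rho - b_{u_i^\star})_i \ge 0$, and I bound \emph{this} by noting $(A_{u_0}x_\rho - b_{u_0})_i = \rho \pi_i((b_{u_i^\star} - A_{u_i^\star}x_\rho)_i) = 0$ here, hence $x_\rho$ satisfies $A_{u_0} x_\rho = b_{u_0}$ in row $i$ — not obviously helpful either.

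Given these false starts, the genuinely robust argument I would commit to is the comparison one: use Lemma \ref{PenDiscreteProbDef_BoundedIndepRho} and Corollary \ref{BoundednessofMatrixandInv_Cor} to get $\max_u \|\Pi(b_u - A_u x_\rho)\|_\infty \le C/\rho$; then for each $i$ select $u_i \in \mathbf{U}$ with $(A_{u_i} x_\rho - b_{u_i})_i \le (b_u - A_u x_\rho)$-related bound, assemble $A^\star, b^\star$ so that $(A^\star x_\rho - b^\star)_i \le C/\rho$ and $\ge -C/\rho$, deduce via $(A^\star)^{-1} \ge 0$ that $\|x_\rho - x\| \le C/\rho$ for the true solution $x$, and finally bound the left side of \eqref{PenaltyApproxAccuracy_Theorem_Eq.5} by $\|\min_u(A_u x - b_u) - \min_u(A_u x_\rho - b_u)\|_\infty + (\text{penalty term bound}) \le L\|x - x_\rho\|_\infty + C/\rho \le C/\rho$, using that $\min_u(A_u x - b_u) = 0$ and that $u \mapsto A_u x - b_u$ is Lipschitz uniformly in $x$ on the relevant bounded set. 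The main obstacle is precisely constructing $A^\star$ so that \emph{both} one-sided bounds hold with the $K^o_N$ structure intact — the upper bound $(A^\star x_\rho - b^\star)_i \le C/\rho$ needs rows where the constraint is nearly active, which is exactly where the penalty term is $O(1/\rho)$, and the lower bound comes from the $u_0$-row being exactly zero or from the same nearly-active row; reconciling which row to use in each case, while keeping $A^\star \in K^o_N$, is the delicate bookkeeping.
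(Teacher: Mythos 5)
Your opening step is correct and is exactly the paper's first step: from \eqref{PenDiscreteProbDef_Eq1} and Lemma \ref{PenDiscreteProbDef_BoundedIndepRho} (plus Corollary \ref{BoundednessofMatrixandInv_Cor}) you get $0\le \Pi(b_u-A_u x_\rho)\le C/\rho$ for all $u$, and this already yields the lower bound $\min_{u}\big[\max\{A_u x_\rho-b_u,0\}-\Pi(b_u-A_u x_\rho)\big]_i\ge -C/\rho$. The gap is the upper bound, and the route you finally commit to cannot close it. You propose to first prove $\|x_\rho-x^*\|_\infty\le C/\rho$ (via an $A^\star\in K^o_N$ comparison) and then deduce \eqref{PenaltyApproxAccuracy_Theorem_Eq.5}. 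This inverts the paper's logic -- that estimate is Theorem \ref{PenaltyConvergenceToTrueSol_ErrorEstimate}, which is proved \emph{from} the present lemma -- and, more importantly, it is false under the lemma's hypotheses. Assumption \ref{GenPenaltyTermDef_PropAssumption} admits, e.g., $\pi_i(y)=\max\{y,0\}^2$, for which $\Pi(b_u-A_ux_\rho)\le C/\rho$ only forces constraint violations of size $O(\rho^{-1/2})$: with $N=1$, $\mathbf{U}=[0,1]$, $A_u\equiv 1$, $b_u=u(1-u)$, $u_0=0$, the penalised equation reads $x_\rho=\rho(1/4-x_\rho)^2$, so $x^*-x_\rho\sim\tfrac12\rho^{-1/2}$. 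The $O(1/\rho)$ rate for the solution error genuinely needs the extra slope condition $\partial\pi_i/\partial y\ge c_{min}>0$ of Theorem \ref{PenaltyConvergenceToTrueSol_ErrorEstimate}, and its constant depends on $c_{min}$, so your route could never produce a constant independent of $\Pi$ as the lemma requires. The ``delicate bookkeeping'' you defer (obtaining the one-sided bound $(A^\star x_\rho-b^\star)_i\le C/\rho$) is precisely where this breaks down; it is not bookkeeping but a missing hypothesis.

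Ironically, you had the correct direct argument in hand and discarded it as ``not obviously helpful''. No comparison with $x^*$ is needed. Fix $i$; by \eqref{PenDiscreteProbDef_Eq1}, $(A_{u_0}x_\rho-b_{u_0})_i=\rho\max_{u}\pi_i\big((b_u-A_ux_\rho)_i\big)\ge 0$. If $(A_{u_0}x_\rho-b_{u_0})_i=0$, take $u=u_0$ in the minimum: since $(b_{u_0}-A_{u_0}x_\rho)_i=0$ and $\pi_i(0)=0$, the bracket there equals $0\le C/\rho$. Otherwise the maximum of the penalties is strictly positive, so there is $u^*$ with $\pi_i\big((b_{u^*}-A_{u^*}x_\rho)_i\big)>0$, hence $(A_{u^*}x_\rho-b_{u^*})_i<0$, and the bracket at $u^*$ equals $-\pi_i\big((b_{u^*}-A_{u^*}x_\rho)_i\big)\in[-C/\rho,0]$. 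In both cases the minimum over $u$ lies in $[-C/\rho,\,C/\rho]$, which is \eqref{PenaltyApproxAccuracy_Theorem_Eq.5}; this two-case evaluation is the paper's proof. (A side remark: the Lipschitz continuity of $u\mapsto A_u,b_u$ you invoke at the end is neither assumed in the paper nor needed.)
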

\begin{proof}
For $u\in\mathbf{U}$, we have
\begin{align*}
\rho\Pi(b_u - A_u\,x_{\rho})\leq&\ \rho \max_{v\in\mathbf{U}}\Pi\big(b_v-A_v\,x_{\rho}\big)\\
=&\ A_{u_0}\,x_{\rho} - b_{u_0}\,,
\end{align*}
and, applying Lemma \ref{PenDiscreteProbDef_BoundedIndepRho} to the last expression, we get that
\begin{equation}
0\leq \Pi(b_u - A_u\, x_{\rho}) \leq C/\rho\label{PenaltyApproxAccuracy_Theorem_Eq1}
\end{equation}
for some constant $C>0$ independent of $\rho$.
Furthermore, for every $i\in\mathcal{N}$, we either have $(A_{u_0}\,x_{\rho}-b_{u_0})_i=0$, in which case
\begin{equation*}
\max\{(A_{u_0}\,x_{\rho}-b_{u_0})_i\,,0\} - \pi_i\big((b_{u_0}-A_{u_0}x_{\rho})_i\big) = 0\leq C/\rho,
\end{equation*}
or $\exists\ u^*\in\mathbf{U}$ such that $\pi_i\big((b_{u^*}-A_{u^*}\,x_{\rho})_i\big) \geq 0$, which, based on \eqref{PenaltyApproxAccuracy_Theorem_Eq1}, gives
\begin{equation*}
\max\{(A_{u^*}\,x_{\rho}-b_{u^*})_i\,,0\} - \pi_i\big((b_{u^*}-A_{u^*}x_{\rho})_i\big)
= - \pi_i\big((b_{u^*}-A_{u^*}x_{\rho})_i\big) \geq -C/\rho.
\end{equation*}
\end{proof}

For the canonical choice $\Pi(\cdot)=\max\{\cdot,0\}$, \eqref{PenaltyApproxAccuracy_Theorem_Eq.5} reduces to
\begin{equation}
\big\|\min_{u\in\mathbf{U}}\{A_u\,x_{\rho}-b_u\}\big\|_{\infty}\leq C/\rho,\label{PenaltyApproxAccuracy_Theorem_Eq2}
\end{equation}
stating that $x_{\rho}$ satisfies Problem \ref{DiscreteProbDef} to an order $O(1/\rho)$; if the set $\mathbf{U}$ is discrete,
estimate \eqref{PenaltyApproxAccuracy_Theorem_Eq2} matches similar results in
\cite{ForsythQuadraticConvergence} and \cite{WitteReisinger_PenaltyScheme_DiscreteControlledHJBEquations}.

\begin{cor}\label{PenaltyConvergenceToTrueSol}
Suppose there exists a solution $x_{\rho}$ to Problem \ref{PenDiscreteProbDef} for every $\rho>0$.
As $\rho\to\infty$, $x_{\rho}$ converges to a limit $x^*$ which solves Problem \ref{DiscreteProbDef}.
\end{cor}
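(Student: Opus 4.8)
The plan is to exploit the two uniform bounds already in hand --- Lemma \ref{PenDiscreteProbDef_BoundedIndepRho} ($\|x_\rho\|_\infty\leq C$ uniformly in $\rho$) and Lemma \ref{PenaltyApproxAccuracy_Theorem} (the penalty residual is $O(1/\rho)$) --- together with a compactness argument. First I would observe that, by Lemma \ref{PenDiscreteProbDef_BoundedIndepRho}, the family $\{x_\rho\}_{\rho>0}$ lies in a fixed compact subset of $\mathbb{R}^N$, so along any sequence $\rho_k\to\infty$ there is a convergent subsequence with some limit $x^*$. The goal is to show every such limit point solves Problem \ref{DiscreteProbDef}, and then upgrade ``subsequential limit'' to ``limit'' using the uniqueness granted by Theorem \ref{BothDiscreteProblems_UniqSolvable} (or, more self-containedly, by Lemma \ref{PenDiscreteProbDef_Uniq}-style M-matrix arguments applied to two limit points).

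The core step is passing to the limit in the residual. Specialising Lemma \ref{PenaltyApproxAccuracy_Theorem} to $\Pi(\cdot)=\max\{\cdot,0\}$ is not legitimate here because $\Pi$ is fixed by hypothesis, so instead I would work directly with \eqref{PenaltyApproxAccuracy_Theorem_Eq1}: for every $u\in\mathbf{U}$, $0\leq\Pi(b_u-A_ux_\rho)\leq C/\rho$. Letting $\rho=\rho_k\to\infty$ along the convergent subsequence and using continuity of $\mathfrak{A}$, $\mathfrak{B}$ and of each $\pi_i$, this forces $\Pi(b_u-A_ux^*)\leq 0$, hence (Assumption \ref{GenPenaltyTermDef_PropAssumption}) $\Pi(b_u-A_ux^*)=0$, i.e. $(b_u-A_ux^*)_i\leq 0$ componentwise for all $u$; taking the infimum over $u$ gives $\min_{u\in\mathbf{U}}\{A_ux^*-b_u\}\geq 0$. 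For the reverse inequality I would use \eqref{PenDiscreteProbDef_Eq1} rewritten as $A_{u_0}x_\rho-b_{u_0}=\rho\max_{u\in\mathbf{U}}\Pi(b_u-A_ux_\rho)\geq 0$, so in each component $i$ there is $u$ with $(b_u-A_ux_\rho)_i\geq 0$; picking for each $k$ and each $i$ such a $u_i^{(k)}\in\mathbf{U}$ and passing to a further subsequence along which $u_i^{(k)}\to\bar u_i\in\mathbf{U}$ (possible since $\mathbf{U}$ is compact), continuity yields $(A_{\bar u_i}x^*-b_{\bar u_i})_i\leq 0$, hence $\bigl(\min_{u\in\mathbf{U}}\{A_ux^*-b_u\}\bigr)_i\leq 0$. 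Combining the two inequalities gives $\min_{u\in\mathbf{U}}\{A_ux^*-b_u\}=0$, so $x^*$ solves Problem \ref{DiscreteProbDef}.

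The main obstacle is the diagonal/subsequence bookkeeping: the minimising controls $u_i^{(k)}$ depend on both the component $i$ and the index $k$, and they need not converge; one must extract, component by component, a common subsequence along which $x_{\rho_k}\to x^*$ and every $u_i^{(k)}\to\bar u_i$. This is routine because $N$ is finite and $\mathbf{U}$ is compact, but it is the only place where care is needed. Finally, to conclude convergence of the whole family rather than a subsequence: since every subsequential limit of $\{x_\rho\}$ solves Problem \ref{DiscreteProbDef}, and that problem has a unique solution by Theorem \ref{BothDiscreteProblems_UniqSolvable}, all subsequential limits coincide with the unique $x^*$; a standard argument (if $x_\rho\not\to x^*$ there is a subsequence bounded away from $x^*$, which by compactness has a convergent sub-subsequence with limit $\neq x^*$, contradiction) then gives $x_\rho\to x^*$ as $\rho\to\infty$.
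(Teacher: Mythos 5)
Your argument is correct and follows essentially the same route as the paper's proof: boundedness from Lemma \ref{PenDiscreteProbDef_BoundedIndepRho} gives a subsequential limit, the bound $0\leq\Pi(b_u-A_u x_\rho)\leq C/\rho$ gives feasibility $A_u x^*\geq b_u$ for all $u$, compactness of $\mathbf{U}$ lets you extract componentwise controls achieving the reverse inequality in the limit, and uniqueness (Theorem \ref{BothDiscreteProblems_UniqSolvable}) upgrades subsequential to full convergence. The only cosmetic difference is that you obtain $(\min_u\{A_ux^*-b_u\})_i\leq 0$ directly from the penalised equation rather than reading off equality from \eqref{PenaltyConvergenceToTrueSol_Eq1}, which is an immaterial variation.
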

\begin{proof}
Since $(x_{\rho})_{\rho>0}$ is bounded (as seen in Lemma \ref{PenDiscreteProbDef_BoundedIndepRho}), it has a convergent subsequence, which we do not distinguish notationally; we denote the limit of this subsequence by $x^*$.
From Lemma \ref{PenaltyApproxAccuracy_Theorem}, we may deduce that 
\begin{align}
A_u\,x^*-b_u\geq 0,\quad u\in\mathbf{U}.\label{PenaltyConvergenceToTrueSol_Eq.5}
\end{align}
Now, consider a component $i\in\mathcal{N}$.
Using again Lemma \ref{PenaltyApproxAccuracy_Theorem}, we know that, for every $\rho>0$, there exists a $u_{i\rho}\in\mathbf{U}$ such that
\begin{equation}
 0\leq (A_{u_{i\rho}}\,x_{\rho}-b_{u_{i\rho}})_i\leq C/\rho\quad\text{or}\quad 0\leq \pi_i\big((b_{u_{i\rho}}-A_{u_{i\rho}}\,x_{\rho})_i\big)\leq C/\rho.\label{PenaltyConvergenceToTrueSol_Eq1}
\end{equation}
Recalling that $\mathbf{U}$ is compact and $(u_{i\rho})_{\rho>0}\subset\mathbf{U}$, we can then infer that there exists once more a subsequence of $(x_{\rho})_{\rho> 0}$\,,
again not notationally distinguished, and a $u^*_i\in\mathbf{U}$ such that $u_{i\rho}\to u^*_i$\,; hence, recalling the continuity of $\mathfrak{A}$ and $\mathfrak{B}$ in the definition of Problem \ref{DiscreteProbDef}, it follows from \eqref{PenaltyConvergenceToTrueSol_Eq1} that
\begin{equation}
|(A_{u^*_i}\,x^*-b_{u^*_i})_i|=0.\label{PenaltyConvergenceToTrueSol_Eq2}
\end{equation}
Altogether, combining \eqref{PenaltyConvergenceToTrueSol_Eq.5} and \eqref{PenaltyConvergenceToTrueSol_Eq2}, we see that $x^*$ solves Problem \ref{DiscreteProbDef}.
Since Problem \ref{DiscreteProbDef} has a unique solution (cf.\,Theorem\,\ref{BothDiscreteProblems_UniqSolvable}), the just given result holds not only for subsequences of $(x_{\rho})_{\rho>0}$\,, but the whole sequence converges.
\end{proof}

We have finally done enough preparatory work to show that -- if we choose the penalty function $\Pi$ correctly -- the solution to Problem \ref{PenDiscreteProbDef} is indeed a good approximation of the solution to Problem \ref{DiscreteProbDef}; in fact, we can obtain an approximation that is of first order in the penalty parameter.

\begin{theorem}\label{PenaltyConvergenceToTrueSol_ErrorEstimate}
Suppose there exists a solution $x_{\rho}$ to Problem \ref{PenDiscreteProbDef} for every $\rho>0$.
Let $x^*$ be the solution to Problem \ref{DiscreteProbDef}.
If, in addition to Assumption \ref{GenPenaltyTermDef_PropAssumption}, we have $\pi_i|_{(0,\infty)}\in C^1((0,\infty))$, $i\in\mathcal{N}$, and $\inf\{\frac{\partial \pi_i}{\partial y}(y) : y\in(0,\infty),\ i\in\mathcal{N}\}\geq c_{min}>0$, then there exists a constant $C>0$, independent of $\rho$ and $\Pi$ but depending on $c_{min}$\,, such that
\begin{equation*}
\|x^*-x_{\rho}\|_{\infty}\leq C/\rho.
\end{equation*}
\end{theorem}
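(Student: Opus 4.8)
The plan is to sandwich $x^*-x_\rho$ between $0$ and $(C/\rho)\mathbf 1$ componentwise, where $\mathbf 1:=(1,\dots,1)^{tr}$, using in each direction an M-matrix in $K^o_N$ assembled row by row from suitable minimisers, exactly in the spirit of Theorem~\ref{BothDiscreteProblems_UniqSolvable} and Lemma~\ref{PenDiscreteProbDef_BoundedIndepRho}. The one genuinely new ingredient is to turn the smallness of the \emph{penalty value} $\Pi(b_u-A_u x_\rho)$ (supplied by Lemma~\ref{PenaltyApproxAccuracy_Theorem}) into smallness of the \emph{constraint violation} $(A_u x_\rho - b_u)$; this is precisely where the lower bound on $\partial\pi_i/\partial y$ is used, and it is the only place the extra hypothesis enters.

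First I would recall from the proof of Lemma~\ref{PenaltyApproxAccuracy_Theorem} that there is a constant $C_1>0$, independent of $\rho$ and $\Pi$, with
\[
0\le \Pi(b_u-A_u x_\rho)\le C_1/\rho,\qquad u\in\mathbf U .
\]
Since each $\pi_i$ is continuous on $\mathbb R$ with $\pi_i(0)=0$, is $C^1$ on $(0,\infty)$, and satisfies $\partial\pi_i/\partial y\ge c_{min}$ there, the mean value theorem (applied on $[\epsilon,y]$ and letting $\epsilon\to 0^+$, using continuity at $0$) gives the linear minorant $\pi_i(y)\ge c_{min}\max\{y,0\}$ for all $y\in\mathbb R$. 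Feeding this into the display componentwise, $c_{min}\max\{(b_u-A_u x_\rho)_i,0\}\le C_1/\rho$, hence
\[
A_u x_\rho - b_u\ \ge\ -\frac{C_1}{c_{min}}\,\frac1\rho\,\mathbf 1,\qquad u\in\mathbf U .
\]

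To bound $x^*-x_\rho$ from above I would, for each $i\in\mathcal N$, pick $u_i\in\mathbf U$ attaining $\min_{u\in\mathbf U}(A_u x^*-b_u)_i=0$ (possible because $x^*$ solves Problem~\ref{DiscreteProbDef} and the minimum is attained by compactness of $\mathbf U$ and continuity of $\mathfrak A,\mathfrak B$), let $A^*\in K^o_N$ and $b^*\in\mathbb R^N$ have $i$-th rows/entries those of $A_{u_i}$ and $b_{u_i}$, so that $A^* x^*=b^*$. The last display gives $A^* x_\rho\ge b^*-(C_1/c_{min})(1/\rho)\mathbf 1$, whence $A^*(x^*-x_\rho)\le (C_1/c_{min})(1/\rho)\mathbf 1$; applying $(A^*)^{-1}\ge 0$ and using that $\|(A^*)^{-1}\|_\infty$ is bounded by a constant independent of the $u_i$ (Corollary~\ref{BoundednessofMatrixandInv_Cor}) yields $x^*-x_\rho\le (C/\rho)\mathbf 1$ with $C$ depending on $c_{min}$. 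For the reverse inequality I would argue as in Lemma~\ref{PenDiscreteProbDef_BoundedIndepRho}: from \eqref{PenDiscreteProbDef_Eq1}, for each $i$ either $(A_{u_0}x_\rho-b_{u_0})_i=0$, or the maximum there is positive, in which case Assumption~\ref{GenPenaltyTermDef_PropAssumption} forces some control to violate the $i$-th constraint; either way there is $u_i\in\mathbf U$ with $(A_{u_i}x_\rho-b_{u_i})_i\le 0$. Assembling $A^{**}\in K^o_N$ from these rows and using that $x^*$ satisfies $A_u x^*\ge b_u$ for every $u$, one obtains $A^{**}(x^*-x_\rho)\ge 0$, hence $x^*-x_\rho\ge 0$ since $(A^{**})^{-1}\ge 0$.

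Combining the two bounds gives $0\le (x^*-x_\rho)_i\le C/\rho$ for all $i\in\mathcal N$, i.e. $\|x^*-x_\rho\|_\infty\le C/\rho$. The main obstacle is the passage from a bound on $\pi_i\big((b_u-A_u x_\rho)_i\big)$ to a bound on $(b_u-A_u x_\rho)_i$ itself: without a uniform positive lower bound on the slope of $\pi_i$ this step fails (a nearly flat penalty could tolerate an $O(1)$ violation while keeping the penalty value $O(1/\rho)$), and one must also check that the linear minorant survives at the kink $y=0$, which is why continuity of $\pi_i$ there is needed. Everything else is the M-matrix comparison already used repeatedly above, together with the uniform invertibility bound of Corollary~\ref{BoundednessofMatrixandInv_Cor}.
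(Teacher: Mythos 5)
Your proof is correct and follows essentially the same route as the paper: the $c_{min}$ slope bound converts the $O(1/\rho)$ smallness of the penalty values (from Lemma \ref{PenaltyApproxAccuracy_Theorem}) into $O(1/\rho)$ smallness of the constraint violations, and the two-sided estimate is then obtained by the familiar row-wise assembly of matrices in $K^o_N$ together with the uniform inverse bound of Corollary \ref{BoundednessofMatrixandInv_Cor}. Your variant is in fact marginally sharper on one side, since reusing the argument of Lemma \ref{PenDiscreteProbDef_BoundedIndepRho} yields $x_{\rho}\leq x^*$ exactly, whereas the paper bounds that direction by $C/\rho$ as well.
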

\begin{proof}
Without loss of generality, we may assume that $c_{min}<1$.
Let $i\in\mathcal{N}$. We use $u^*_i$ as introduced in the proof of Corollary \ref{PenaltyConvergenceToTrueSol}, which means, in particular, that \eqref{PenaltyConvergenceToTrueSol_Eq2} holds. Furthermore, for $\rho>0$, we define
$u^{min}_{i\rho}\in\mathbf{U}$ to be such that
\begin{equation*}
u^{min}_{i\rho} = 
\begin{cases} \argmax_{u\in\mathbf{U}}\pi_i\big((b_u-A_u\,x_{\rho})_i\big) & \text{if $(A_{u_0}x_{\rho}-b_{u_0})_i>0$,}\\
u_0 & \text{if $(A_{u_0}x_{\rho}-b_{u_0})_i=0$,}
\end{cases}
\end{equation*}
which means, as seen in the proof of Lemma \ref{PenaltyApproxAccuracy_Theorem}, that
\begin{align*}
&\ |(A_{u^{min}_{i\rho}}\,x_{\rho}-b_{u^{min}_{i\rho}})_i|\\
\leq&\ \frac{1}{c_{min}}\big|\max\{(A_{u^{min}_{i\rho}}\,x_{\rho}-b_{u^{min}_{i\rho}})_i\,,0\} - \pi_i\big((b_{u^{min}_{i\rho}}-A_{u^{min}_{i\rho}}x_{\rho})_i\big)\big| \leq C_1/\rho
\end{align*}
for some constant $C_1>0$ independent of $\rho$ and $\Pi$. Applying \eqref{PenaltyConvergenceToTrueSol_Eq1}, \eqref{PenaltyConvergenceToTrueSol_Eq2} and the definition of $u^{min}_{i\rho}$, we now have that
\begin{align*}
(A_{u^{min}_{i\rho}})_i\,(x_{\rho}-x^*)\leq (A_{u^{min}_{i\rho}}\, x_{\rho}-b_{u^{min}_{i\rho}})_i
\leq C_1/\rho
\end{align*}
and
\begin{align*}
(A_{u^*_i})_i\,(x^*-x_{\rho})\leq&\
(A_{u^*_i}\,x^*-b_{u^*_i})_i-
(A_{u^{min}_{i\rho}}\, x_{\rho}-b_{u^{min}_{i\rho}})_i\\
=&\ -(A_{u^{min}_{i\rho}}\, x_{\rho}-b_{u^{min}_{i\rho}})_i
\leq C_1/\rho.
\end{align*}
Denoting by $A_1^*$\,, $A_2^*\in\ K^o_N$ the matrices having as $i$-th row, $i\in\mathcal{N}$, the $i$-th rows of $A_{u^{min}_{i\rho}}$ and $A_{u^*_i}$\,, respectively, we get that
\begin{equation*}
x_{\rho}-x^*\leq \frac{\|(A^*_1)^{-1}\|_{\infty}\,C_1}{\rho}\quad\text{and}\quad x^*-x_{\rho}\leq \frac{\|(A^*_2)^{-1}\|_{\infty}\,C_1}{\rho}\,,
\end{equation*}
and, using Corollary \ref{BoundednessofMatrixandInv_Cor}, we may infer that
\begin{equation*}
\|x^*-x_{\rho}\|_{\infty}\leq C_2/\rho
\end{equation*}
for some constant $C_2>0$ independent of $\rho$ and $\Pi$.
\end{proof}

\section{Solving the Penalised HJB Equation by Iteration}\label{SectionModNewton}

In the previous section, we have seen how Problem \ref{DiscreteProbDef} can be approximated by penalisation.
We will now discuss iterative methods for the solution of the penalised problem, i.e. algorithms for the computation
of $y\in\mathbb{R}^N$ satisfying \eqref{PenDiscreteProbDef_Eq1}.\bigskip

Defining
\begin{equation*}
G(y):=
(A_{u_0}\,y - b_{u_0}) -\rho \max_{u\in\mathbf{U}}\Pi\big(b_u-A_u\,y\big),\quad y\in\mathbb{R}^N,
\end{equation*}
we need to solve $G(y)=0$.
For $i\in\mathcal{N}$, $y\in\mathbb{R}^N$, we define
\begin{equation*}
u^{min}_{\pi_i}(y) := \argmax_{v\in\mathbf{U}}\pi_i\big((b_v-A_v\,y)_i\big),%\label{SectionModNewton_Eq0.9}
\end{equation*}
and we assume that $\pi_i|_{(0,\infty)}\in C^1((0,\infty))$.
For some of the following theorems, we require further that, for $i$, $j\in\mathcal{N}$,
\begin{equation}
\frac{\partial }{\partial y_j}\Big[\max_{u\in\mathbf{U}}\pi_i\big((b_u-A_u\,y)_i\big)\Big] =
\frac{\partial \pi_i}{\partial y}\big((b_{u^{min}_{\pi_i}(y)}-A_{u^{min}_{\pi_i}(y)}\,y)_i\big)\big(A_{u^{min}_{\pi_i}(y)}\big)_{ij}\label{SectionModNewton_Eq0.95}
\end{equation}
holds whenever $\frac{\partial \pi_i}{\partial y}$ is well defined, with $u^{min}_{\pi_i}\in C(\mathbb{R}^N,\mathbf{U})$; the Jacobian of $G$ is then given by
\begin{equation}
\big(J_G(y)\big)_i=(A_{u_0})_{i}\ +\rho\frac{\partial \pi_i}{\partial y}\big((b_{u^{min}_{\pi_i}(y)}-A_{u^{min}_{\pi_i}(y)}\,y)_i\big)\big(A_{u^{min}_{\pi_i}(y)}\big)_i\label{SectionModNewton_Eq1}
\end{equation}
for $i\in\mathcal{N}$, $y\in\mathbb{R}^N$; if $\frac{\partial \pi_i}{\partial y}$ does not exist at 0, we set $\frac{\partial \pi_i}{\partial y}(0):=\lim_{y\uparrow 0}\frac{\partial \pi_i}{\partial y}(y) = 0$.
For the numerical examples following later in this paper, \eqref{SectionModNewton_Eq0.95} is generally satisfied.\bigskip

%%%%
% Throughout this section, we assume that, for $i\in\mathcal{N}$, we have $\pi_i|_{(0,\infty)}\in C^1((0,\infty))$, and we define the Jacobian of $G$ to be, for $y\in\mathbb{R}^N$ and $i\in\mathcal{N}$, given by
% \begin{equation}
% \big(J_G(y)\big)_i:=(A_{u_0})_{i}\ +\rho\frac{\partial \pi_i}{\partial y}\big((b_{u^{min}_{\pi_i}(y)}-A_{u^{min}_{\pi_i}(y)}\,y)_i\big)(A_{u^{min}_{\pi_i}(y)})_i\,,\label{SectionModNewton_Eq1}
% \end{equation}
% where
% \begin{equation*}
% u^{min}_{\pi_i}(y) := \argmax_{v\in\mathbf{U}}\pi_i\big((b_v-A_v\,y)_i\big).
% \end{equation*}
% If $\frac{\partial \pi_i}{\partial y}$ does not exist at $0$, we set $\big(J_G(0)\big)_i:=\lim_{y\uparrow 0}\big(J_G(y)\big)_i = \big(A_{u_0}\big)_i\,.$\bigskip

The next theorem states that a globally convergent iterative scheme for the solution of Problem \ref{PenDiscreteProbDef} exists
whenever $\Pi$ is a smooth and non-decreasing function.

\begin{theorem}\label{LineSearchNewton_HJB}
If, in addition to Assumption \ref{GenPenaltyTermDef_PropAssumption}, for $i\in\mathcal{N}$, $\pi_i\in C^1(\mathbb{R})$ and $\frac{\partial \pi_i}{\partial y}\geq 0$, and \eqref{SectionModNewton_Eq0.95} is satisfied, then Newton's method with line search as introduced by J.-S. Pang in \cite{Pang_NewtonsMethod} presents a globally convergent iterative scheme for the solution of Problem \ref{PenDiscreteProbDef}. In particular, this means that there exists a solution to Problem \ref{PenDiscreteProbDef}.
\end{theorem}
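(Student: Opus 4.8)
The plan is to verify that the function $G$ satisfies the structural hypotheses under which Pang's damped Newton method (the line-search Newton method of \cite{Pang_NewtonsMethod}) is globally convergent, and then to identify the limit point of the generated sequence as a solution of $G(y)=0$, which is exactly Problem \ref{PenDiscreteProbDef}. Concretely, Pang's framework applies to a system $G(y)=0$ where $G$ is a so-called \emph{B-differentiable} (or locally Lipschitz, semismooth) function admitting a suitable Newton map, and it guarantees that, starting from any $y^0$, the damped iterates $y^{k+1}=y^k+t_k d^k$ (with $d^k$ solving the Newton equation built from an element of the generalised Jacobian and $t_k$ chosen by an Armijo-type line search on $\tfrac12\|G\|_2^2$) either terminate at a zero of $G$ or have every accumulation point a zero of $G$ --- provided two things hold: (a) every matrix arising as a Newton iteration matrix is nonsingular, with inverse bounded uniformly on bounded sets, so the search directions are well-defined descent directions; and (b) the level sets $\{y : \|G(y)\|_2\le \|G(y^0)\|_2\}$ are bounded, so a bounded sequence (hence an accumulation point) is produced. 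The bulk of the proof is checking (a) and (b) in our setting.

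For (a): under the stated hypotheses ($\pi_i\in C^1(\mathbb R)$, $\pi_i'\ge 0$, and \eqref{SectionModNewton_Eq0.95}), the Jacobian of $G$ at $y$ has rows given by \eqref{SectionModNewton_Eq1}, i.e.\ $(J_G(y))_i = (A_{u_0})_i + \rho\,\pi_i'(\cdot)\,(A_{u^{min}_{\pi_i}(y)})_i$. Here $(A_{u_0})_i$ is the $i$-th row of a matrix in $K^o_N$, the coefficient $\rho\,\pi_i'(\cdot)\ge 0$, and $(A_{u^{min}_{\pi_i}(y)})_i$ is the $i$-th row of a matrix in $K^o_N$; since a nonnegative multiple of a $K^o_N$-row keeps the off-diagonal sign pattern and strict diagonal dominance, and since $K^o_N$ is closed under row-wise addition (the first bulleted property of $K^o_N$, together with the sum property), $J_G(y)\in K^o_N$ for every $y$. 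In particular $J_G(y)$ is an M-matrix, hence nonsingular with $J_G(y)^{-1}\ge 0$; moreover, arguing exactly as in Lemma \ref{BoundednessofMatrixandInv_La} and Corollary \ref{BoundednessofMatrixandInv_Cor} (the entries of $J_G(y)$ are bounded because $\mathfrak A$, $\mathfrak B$, and the $\pi_i'$ are continuous on compacts and $u^{min}_{\pi_i}$ is continuous, and the determinant is bounded below by a positive constant on any bounded set of $y$), we get $\|J_G(y)^{-1}\|_\infty\le C$ uniformly. This supplies the nonsingularity and uniform invertibility required by Pang's theorem; continuity of $J_G$ follows from $u^{min}_{\pi_i}\in C(\mathbb R^N,\mathbf U)$ and the $C^1$ assumption, so in fact $G\in C^1$ and the ordinary Newton direction is available.

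For (b): I would show the sublevel sets of $\|G(\cdot)\|_2$ are bounded by a comparison argument mirroring Lemma \ref{PenDiscreteProbDef_BoundedIndepRho}. Fix a sublevel set and any $y$ in it; for each component $i$, from the structure $G_i(y)=(A_{u_0}y-b_{u_0})_i-\rho\max_u\pi_i((b_u-A_uy)_i)$ one deduces that either $(A_{u_0}y-b_{u_0})_i$ is controlled by $\|G(y)\|_\infty$ or there is a control $u$ with $(A_uy-b_u)_i$ bounded above in terms of $\|G(y)\|_\infty$; selecting such a control row by row produces $A^\ast\in K^o_N$ with $A^\ast y\le b^\ast + (\text{const})\mathbf 1$, and then $y\le (A^\ast)^{-1}(b^\ast+\text{const}\,\mathbf 1)$ is bounded above by Corollary \ref{BoundednessofMatrixandInv_Cor}; the lower bound on $y$ comes directly from $A_{u_0}y = b_{u_0} + \rho\max_u\Pi(b_u-A_uy) + G(y) \ge b_{u_0}+G(y)$, hence $y\ge (A_{u_0})^{-1}(b_{u_0}+G(y))$, which is bounded below. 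Thus the sublevel set is bounded. With (a) and (b) in hand, Pang's theorem yields a sequence with a convergent subsequence whose limit $y^\ast$ satisfies $G(y^\ast)=0$; since $G(y^\ast)=0$ is precisely \eqref{PenDiscreteProbDef_Eq1}, $y^\ast$ solves Problem \ref{PenDiscreteProbDef}, and by Lemma \ref{PenDiscreteProbDef_Uniq} it is the unique solution, so in fact the whole sequence converges to it.

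The main obstacle is the careful bookkeeping that keeps $J_G(y)$ (and the analogous row-selected matrices in the level-set argument) inside $K^o_N$ uniformly: one must check that the $\argmax$ selector $u^{min}_{\pi_i}$ is genuinely continuous so that \eqref{SectionModNewton_Eq0.95} and hence formula \eqref{SectionModNewton_Eq1} are valid, and that the resulting Jacobian entries and their cofactor-based inverses are bounded uniformly over the relevant bounded region of $y$ --- i.e.\ transporting the fixed-control estimates of Lemma \ref{BoundednessofMatrixandInv_La}/Corollary \ref{BoundednessofMatrixandInv_Cor} to the $y$-dependent, row-assembled matrices. Everything else is an application of the cited convergence theorem of \cite{Pang_NewtonsMethod} together with the $K^o_N$ calculus already developed in Section \ref{ProblemFormulation}.
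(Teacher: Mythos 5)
Your proposal is correct and follows essentially the same route as the paper: both verify the hypotheses of Pang's global convergence theorem by showing $J_G(y)\in K^o_N$ (hence nonsingular with nonnegative inverse), establishing a uniform bound on the Newton matrices over bounded sets, and proving boundedness of the level sets of $\|G(\cdot)\|$ via a row-selection comparison argument in the spirit of Lemma \ref{PenDiscreteProbDef_BoundedIndepRho} and Corollary \ref{BoundednessofMatrixandInv_Cor}. The only differences are cosmetic: the paper obtains the uniform regularity condition by a compactness--contradiction argument rather than your determinant/cofactor bound, and it invokes Pang's Lemma 1 and Theorem 2 for the remaining technical condition, while your closing appeal to Lemma \ref{PenDiscreteProbDef_Uniq} for convergence of the whole sequence is a harmless addition.
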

\begin{proof}
We will show that conditions $(a)$-$(d)$ in Theorem 4 in \cite{Pang_NewtonsMethod} are satisfied.
First of all, since $G$ is continuously partially differentiable, it is differentiable as a function $\mathbb{R}^N\to\mathbb{R}^N$, which, in particular, guarantees $B$-differentiability of $G$.
In $(a)$, we need to show that, for arbitrary $y^0\in\mathbb{R}^N$, the set
$\{y\in\mathbb{R}^N : \|G(y)\|_{\infty}\leq \|G(y^0)\|_{\infty} \}$ is bounded; since, for $i\in\mathcal{N}$, we have $(A_{u_0}\,y)_i\geq (b_{u_0})_i$\,, and either  $(A_{u_0}\,y)_i\leq \big(G(y^0)+b_{u_0}\big)_i$
or $(A_{u^{min}_{\pi_i}(y)}\,y)_i \leq (b_{u^{min}_{\pi_i}(y)})_i$\,, this
can easily be inferred by applying Corollary \ref{BoundednessofMatrixandInv_Cor}.
The partial derivatives of $\Pi$ are non-negative, which means $J_G(y)\in K^o_N$\,, $y\in\mathbb{R}^N$, and, hence, $(b)$ holds.
Given the continuity of the partial derivatives of $G$, we can apply Lemma 1 and Theorem 2 in \cite{Pang_NewtonsMethod} to obtain $(c)$. Finally, in $(d)$, it is
sufficient to show that, for any compact set $\mathcal{B}\subset\mathbb{R}$, there exists a constant $c>0$ such that
\begin{equation}
\|J_G(y)v\|_{\infty}\geq c\label{LineSearchNewton_HJB_Eq1}
\end{equation}
for all $y\in\mathcal{B}$, $v\in\mathbb{R}^N$, $\|v\|_{\infty}=1$. We prove \eqref{LineSearchNewton_HJB_Eq1} by contradiction, i.e. suppose there exist sequences
$(y^n)_{n\geq 1}\subset\mathcal{B}$ and $(v^n)_{n\geq 1}\subset\mathbb{R}^N$, $\|v^n\|_{\infty}=1$ for all $n\geq 1$, such that $\lim_{n\to\infty}\|J_G(y^n)v^n\|_{\infty}=0$. Based on the boundedness of the sequences $(y^n)_{n\geq 1}$ and $(v^n)_{n\geq 1}$\,, we infer the existence of subsequences -- not notationally distinguished -- converging to limits $y^*\in\mathcal{B}$ and $v^*\in\mathbb{R}^N$, respectively, where $\|v^*\|_{\infty}=1$, and we get $\lim_{n\to\infty}\|J_G(y^n)v^n\|_{\infty}=\|J_G(y^*)v^*\|_{\infty}=0$.
Now, from $\frac{\partial \pi_i}{\partial y}\geq 0$, $i\in\mathcal{N}$, we have that $J_G(y^*)\in K_N^o$\,, and we note that $J_G(y^*)v^*=0$ implies $v^*=0$, which is a contradiction to $\|v^*\|_{\infty}=1$. Hence, we may conclude that \eqref{LineSearchNewton_HJB_Eq1} must hold true, which completes the proof.
\end{proof}

Having seen that Newton's method with line search can lead to a globally convergent scheme, we next come to look at
Newton's method in its classical form.

\begin{alg}\label{NewtonAlg_HJB}(Newton's Method for the pen.\ HJB Equation)
Let $x^0\in\mathbb{R}^N$ be some starting value. Then, for known $x^n$, $n\geq 0$, find $x^{n+1}$ such that
\begin{equation}
J_G(x^n)(x^{n+1}-x^n)  = -G(x^n).\label{NewtonAlg_HJB_Eq1}
\end{equation}
\end{alg}

\begin{theorem}\label{NewtonAlg_HJB_PropTheorem}
Suppose the conditions of Theorem \ref{LineSearchNewton_HJB} are satisfied, and let $x_{\rho}$ be the solution to Problem \ref{PenDiscreteProbDef}. There exists a neighbourhood $\mathcal{B}$ of $x_{\rho}$ such that, for any starting value $x^0\in\mathcal{B}$, the Newton sequence $(x^n)_{n\geq 0}$ generated by Algorithm \ref{NewtonAlg_HJB} is well defined, remains in $\mathcal{B}$ and converges to $x_{\rho}$\,; the rate of convergence is quadratic.
\end{theorem}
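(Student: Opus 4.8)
The plan is to establish local quadratic convergence of the classical Newton iteration \eqref{NewtonAlg_HJB_Eq1} by verifying the hypotheses of a standard local convergence theorem for semismooth (or here, genuinely $C^1$) Newton methods. Since we are assuming the conditions of Theorem \ref{LineSearchNewton_HJB}, each $\pi_i\in C^1(\mathbb{R})$ and \eqref{SectionModNewton_Eq0.95} holds, so $G$ is continuously differentiable on $\mathbb{R}^N$ with Jacobian given by \eqref{SectionModNewton_Eq1}. The two ingredients I would isolate are: (i) nonsingularity of $J_G(x_\rho)$ together with a uniform bound on $\|J_G(y)^{-1}\|_\infty$ for $y$ near $x_\rho$; and (ii) local Lipschitz continuity (or at least continuity plus the bound) of $y\mapsto J_G(y)$ near $x_\rho$, which together with (i) yields the usual quadratic estimate $\|x^{n+1}-x_\rho\|_\infty\leq C\|x^n-x_\rho\|_\infty^2$.

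For step (i), I would argue exactly as in the proof of Theorem \ref{LineSearchNewton_HJB}: since $\tfrac{\partial \pi_i}{\partial y}\geq 0$ for all $i$, formula \eqref{SectionModNewton_Eq1} exhibits $(J_G(y))_i$ as $(A_{u_0})_i$ plus a non-negative multiple of a row of $A_{u^{min}_{\pi_i}(y)}\in K^o_N$; by the row-replacement and additivity closure properties of $K^o_N$ recorded in Section \ref{ProblemFormulation}, we get $J_G(y)\in K^o_N\subset K_N$ for every $y$, hence $J_G(y)^{-1}\geq 0$ and, in particular, $J_G(x_\rho)$ is nonsingular. The uniform bound on $\|J_G(y)^{-1}\|_\infty$ near $x_\rho$ is the analogue of the argument already used in Lemma \ref{BoundednessofMatrixandInv_La} and in part $(d)$ of the proof of Theorem \ref{LineSearchNewton_HJB}: on a compact neighbourhood $\mathcal{B}$ of $x_\rho$, the map $y\mapsto J_G(y)$ is continuous (continuity of the partial derivatives of $G$, which itself rests on $u^{min}_{\pi_i}\in C(\mathbb{R}^N,\mathbf{U})$ and continuity of $\mathfrak{A}$, $\mathfrak{B}$), so $\{J_G(y):y\in\mathcal{B}\}$ is a compact subset of the nonsingular matrices $K^o_N$, and matrix inversion is continuous there, giving $\sup_{y\in\mathcal{B}}\|J_G(y)^{-1}\|_\infty\leq \Gamma<\infty$. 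Alternatively one reuses the contradiction argument of \eqref{LineSearchNewton_HJB_Eq1} verbatim.

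For step (ii) and the conclusion, I would invoke the classical local convergence theorem for Newton's method applied to a $C^1$ map with nonsingular Jacobian at the root — or, to match the paper's references, the local convergence parts of the results in \cite{Pang_NewtonsMethod, Qi_NonsmoothNewton}. Writing the iteration error as
\begin{equation*}
x^{n+1}-x_\rho = x^n - x_\rho - J_G(x^n)^{-1}G(x^n) = J_G(x^n)^{-1}\big(G(x_\rho) - G(x^n) - J_G(x^n)(x_\rho - x^n)\big),
\end{equation*}
the bracket is $o(\|x^n-x_\rho\|_\infty)$ by differentiability, and is $O(\|x^n-x_\rho\|_\infty^2)$ once local Lipschitz continuity of $J_G$ near $x_\rho$ is established; combined with $\|J_G(x^n)^{-1}\|_\infty\leq\Gamma$ this gives the quadratic rate, and a standard contraction-type argument shrinks $\mathcal{B}$ so that the sequence is well defined and stays in $\mathcal{B}$. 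The main obstacle is verifying the local Lipschitz property of $J_G$: this requires that $\tfrac{\partial \pi_i}{\partial y}$ be locally Lipschitz (e.g.\ $\pi_i\in C^2$ or piecewise $C^2$ on the relevant range) and that the active-control map $y\mapsto u^{min}_{\pi_i}(y)$ not merely be continuous but behave well enough under \eqref{SectionModNewton_Eq0.95} — for instance, it suffices that near $x_\rho$ the quantity $(b_{u^{min}_{\pi_i}(y)}-A_{u^{min}_{\pi_i}(y)}y)_i$ is Lipschitz in $y$, which follows from the envelope-type identity \eqref{SectionModNewton_Eq0.95} and compactness of $\mathbf{U}$. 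If one only wants to match the hypotheses actually stated, one can instead quote the local superlinear/quadratic convergence statement of the semismooth Newton theory from \cite{Qi_NonsmoothNewton}, for which strong semismoothness of $G$ at $x_\rho$ (a consequence of $\pi_i$ being piecewise smooth and the max being a finite/compact selection) plus nonsingularity of the generalized Jacobian suffices; I would remark that under the smoothness assumed here $G$ is in fact $C^1$ and the classical theorem applies directly.
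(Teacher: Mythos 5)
Your step (i) is exactly the paper's argument: nonsingularity of $J_G(x_\rho)$ comes from $\frac{\partial \pi_i}{\partial y}\geq 0$ forcing $J_G(x_\rho)\in K^o_N$ via \eqref{SectionModNewton_Eq1}, and continuity of the partial derivatives supplies the differentiability structure. Where you diverge is step (ii): you rebuild the classical local convergence estimate by hand, whereas the paper simply invokes Theorem 2 of \cite{Pang_NewtonsMethod} (continuity of the partial derivatives of $G$ yields a \emph{strong} $F$-derivative at $x_\rho$, coinciding with $J_G$) and then Theorem 3 of \cite{Pang_NewtonsMethod}, which delivers the well-definedness, invariance of the neighbourhood, convergence and the quadratic rate directly from the existence of a strong, nonsingular $F$-derivative at the zero. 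The paper's route is therefore a two-line citation argument that stays entirely within the stated hypotheses.

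The genuine gap in your version is the quadratic rate. Your error recursion gives $\|x^{n+1}-x_\rho\|_\infty=o(\|x^n-x_\rho\|_\infty)$, i.e.\ superlinear convergence, from $C^1$ differentiability and the uniform bound on $\|J_G(y)^{-1}\|_\infty$; but the $O(\|x^n-x_\rho\|_\infty^2)$ bound needs local Lipschitz continuity of $J_G$, which you correctly identify as the obstacle and then cannot obtain from the hypotheses of Theorem \ref{LineSearchNewton_HJB} -- $\pi_i\in C^1$ with $u^{min}_{\pi_i}$ merely continuous does not make $y\mapsto J_G(y)$ Lipschitz, and your proposed fixes ($\pi_i\in C^2$, Lipschitz behaviour of the active-control map, or strong semismoothness in the sense of \cite{Qi_NonsmoothNewton}) all strengthen the assumptions beyond what the theorem states rather than verify them. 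Your closing remark that ``$G$ is $C^1$ so the classical theorem applies directly'' does not repair this, since the classical $C^1$ theorem only yields superlinear convergence. To prove the theorem as stated you should do what the paper does: pass through the notion of a strong $F$-derivative (Theorem 2 of \cite{Pang_NewtonsMethod}) and then quote the local convergence result, Theorem 3 of \cite{Pang_NewtonsMethod}, which attaches the quadratic rate to strong $F$-differentiability plus nonsingularity, rather than to a Lipschitz Jacobian.
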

\begin{proof}
If $G$ has a strong and non-singular $F$-derivative at $x_{\rho}$, then the result can be found in Theorem 3 in \cite{Pang_NewtonsMethod}. Given the continuity of the partial derivatives of $G$, we can apply Theorem 2 in \cite{Pang_NewtonsMethod} to obtain the existence of a strong $F$-derivative of $G$, which
coincides with $J_G$. Since $\frac{\partial \pi_i}{\partial y}\geq 0$, $i\in\mathcal{N}$, implies $J_G(x_{\rho})\in K_N^o$\,, the strong $F$-derivative of $G$ at $x_{\rho}$ is indeed non-singular.
\end{proof}

Now, if we use $\Pi(\cdot)=\max\{\cdot,0\}$ (in which case, based on \eqref{SectionModNewton_Eq1}, $J_G$ is still well defined), \eqref{NewtonAlg_HJB_Eq1} simplifies, and we get the following special case of Algorithm \ref{NewtonAlg_HJB}.

\begin{alg}\label{ModifiedNewtonAlg} (Newton-like Method for pen.\ HJB Eq.)
For $i\in\mathcal{N}$ and $y\in\mathbb{R}^N$, define
$u^{min}_i(y)\in\mathbf{U}$ to be such that
\begin{equation}
u^{min}_i(y) = \argmax_{v\in\mathbf{U}}\Big[\max\{(b_v-A_v\,y)_i\,,0\}\Big],\label{ModifiedNewtonAlg_Eq0.5}
\end{equation}
and set $A^{min}(y)\in\mathbb{R}^{N\times N}$ and $b^{min}(y)\in\mathbb{R}^N$ to be matrix and vector consisting of
\begin{itemize}
\item rows $(A_{u^{min}_i(y)})_i$ and $(b_{u^{min}_i(y)})_i$\,, $i\in\mathcal{N}$, respectively, if we have $(b_{u^{min}_i(y)}-A_{u^{min}_i(y)}\,y)_i>0$
\item and having zero rows if $(b_{u^{min}_i(y)}-A_{u^{min}_i(y)}\,y)_i\leq0$.
\end{itemize} 
If $\Pi(y)=\max\{y,0\}$, we now have $J_G(y)=A_{u_0} +  \rho A^{min}(y)\in K^o_N$\,, and \eqref{NewtonAlg_HJB_Eq1} becomes
\begin{multline*}
\big(A_{u_0} + \rho A^{min}(x^n)\big)(x^{n+1}-x^n)
=
-(A_{u_0}\,x^n - b_{u_0})+
b^{min}(x^n) -\rho A^{min}(x^n)x^n\,,
\end{multline*}
which is equivalent to
\begin{align}
\big(A_{u_0} + \rho A^{min}(x^n)\big)x^{n+1}
= b_{u_0} +\rho\,b^{min}(x^n).\label{ModifiedNewtonAlg_Eq2}
\end{align}
\end{alg}

\begin{la}\label{NewtonAlgMonotonicity}
Let $x^0\in\mathbb{R}^N$ be some starting value, and let $(x^n)^{\infty}_{n=0}$ be the sequence generated
by Algorithm \ref{ModifiedNewtonAlg}. We have $x^n\leq x^{n+1}$ for $n\geq 1$.
\end{la}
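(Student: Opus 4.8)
The plan is to prove the monotonicity $x^n \le x^{n+1}$ for $n \ge 1$ directly from the iteration scheme~\eqref{ModifiedNewtonAlg_Eq2}. The key structural fact is that, for each $n$, the matrix $A_{u_0} + \rho A^{min}(x^n)$ lies in $K^o_N$ (as noted in Algorithm~\ref{ModifiedNewtonAlg}), hence is an M-matrix with non-negative inverse. So it suffices to show that
\[
\big(A_{u_0} + \rho A^{min}(x^{n})\big)(x^{n+1} - x^{n}) \ge 0 \quad \text{for } n \ge 1,
\]
and the result follows by applying $\big(A_{u_0} + \rho A^{min}(x^{n})\big)^{-1} \ge 0$.

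First I would use~\eqref{ModifiedNewtonAlg_Eq2} at step $n$ to write $\big(A_{u_0} + \rho A^{min}(x^n)\big)x^{n} = b_{u_0} + \rho\,b^{min}(x^{n})$ \emph{only in those rows where equality is what we want}; more precisely, I would substitute the expression for $\big(A_{u_0} + \rho A^{min}(x^n)\big)x^{n+1} = b_{u_0} + \rho\,b^{min}(x^n)$ and then subtract the analogous relation from step $n-1$, i.e.\ $\big(A_{u_0} + \rho A^{min}(x^{n-1})\big)x^{n} = b_{u_0} + \rho\,b^{min}(x^{n-1})$. This yields, row by row for $i \in \mathcal{N}$,
\[
\big(A_{u_0} + \rho A^{min}(x^n)\big)_i\,(x^{n+1} - x^{n})
= \rho\,(b^{min}(x^n))_i - \rho\,(b^{min}(x^{n-1}))_i - \rho\,\big(A^{min}(x^n) - A^{min}(x^{n-1})\big)_i\,x^n.
\]
Now the crux is to show the right-hand side is non-negative in each component. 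Fix $i$. By the defining property~\eqref{ModifiedNewtonAlg_Eq0.5} of $u^{min}_i(x^n)$ as the argmax of $\max\{(b_v - A_v\,x^n)_i, 0\}$ over $v \in \mathbf{U}$, we have
\[
\max\{(b_{u^{min}_i(x^n)} - A_{u^{min}_i(x^n)}\,x^n)_i, 0\} \ge \max\{(b_{u^{min}_i(x^{n-1})} - A_{u^{min}_i(x^{n-1})}\,x^n)_i, 0\}.
\]
The left-hand side equals $(b^{min}(x^n))_i - (A^{min}(x^n)x^n)_i$ when the $i$-th row of $A^{min}(x^n)$ is non-zero, and equals $0$ otherwise (in which case $(b^{min}(x^n))_i = 0$ and $(A^{min}(x^n))_i = 0$); the right-hand side is $\ge (b_{u^{min}_i(x^{n-1})} - A_{u^{min}_i(x^{n-1})}\,x^n)_i$, which equals $(b^{min}(x^{n-1}))_i - (A^{min}(x^{n-1})x^n)_i$ when the $i$-th row of $A^{min}(x^{n-1})$ is non-zero, and is $\ge 0 = (b^{min}(x^{n-1}))_i - (A^{min}(x^{n-1})x^n)_i$ otherwise. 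Combining these with the max-inequality above gives exactly
\[
(b^{min}(x^n))_i - (A^{min}(x^n)x^n)_i \ \ge\ (b^{min}(x^{n-1}))_i - (A^{min}(x^{n-1})x^n)_i,
\]
which, after multiplying by $\rho > 0$ and rearranging, is the non-negativity of the right-hand side displayed above.

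The main obstacle is the bookkeeping around the two cases in the definition of $A^{min}$ and $b^{min}$ (non-zero row versus zero row), and making sure the inequality $\max\{a,0\} \ge \max\{b,0\}$ translates correctly into the four combinations of ``row active / row inactive'' for $x^n$ and for $x^{n-1}$; in every case one checks that the quantity $(b^{min}(\cdot))_i - (A^{min}(\cdot)x^n)_i$ is precisely $\max\{(b_{u^{min}_i(\cdot)} - A_{u^{min}_i(\cdot)}\,x^n)_i, 0\}$ up to the argmax being evaluated at the ``wrong'' iterate, which is controlled by the argmax property. Note the restriction $n \ge 1$ is essential: the argument compares step $n$ with step $n-1$ via~\eqref{ModifiedNewtonAlg_Eq2}, and for $n = 0$ there is no preceding iterate of that form (the starting value $x^0$ is arbitrary). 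Once the componentwise non-negativity is established, the M-matrix inverse-positivity of $A_{u_0} + \rho A^{min}(x^n) \in K^o_N$ closes the proof.
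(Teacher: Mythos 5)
Your proposal is correct and follows essentially the same route as the paper: subtract the iteration \eqref{ModifiedNewtonAlg_Eq2} at consecutive steps, reduce everything to the componentwise inequality $(b^{min}(x^n)-A^{min}(x^n)x^n)_i \ge (b^{min}(x^{n-1})-A^{min}(x^{n-1})x^n)_i$ via the argmax property of $u^{min}_i(x^n)$, and conclude with the inverse-positivity of $A_{u_0}+\rho A^{min}(x^n)\in K^o_N$. The only difference is presentational (your unified treatment through $\max\{\cdot,0\}$ versus the paper's two-case split on whether row $i$ is active at $x^{n-1}$), so no further comment is needed.
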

\begin{proof}
Writing $\eqref{ModifiedNewtonAlg_Eq2}$ for $x^n$ and $x^{n+1}$, we obtain
\begin{align*}
\big(A_{u_0} + \rho\,A^{min}(x^n)\big)x^{n+1}
=&\ b_{u_0} +\rho\,b^{min}(x^n)\\
\text{and}\quad
\big(A_{u_0} + \rho\,A^{min}(x^{n-1})\big)x^n
=&\ b_{u_0} +\rho\,b^{min}(x^{n-1}),
\end{align*}
and subtracting yields
\begin{multline}
\big(A_{u_0} + \rho\,A^{min}(x^n)\big)(x^{n+1}-x^n)\\
=
\rho\,b^{min}(x^n)
- \rho\,A^{min}(x^n)\,x^n
-\rho\,b^{min}(x^{n-1})
+ \rho\,A^{min}(x^{n-1})\,x^n\,.\label{NewtonAlgMonotonicity_Eq1}
\end{multline}
In this, $A_{s_0} + \rho\,A^{min}(x^n)\in K^o_N$ has a non-negative inverse, and, therefore, the proof is complete if we can show that the right-hand side of expression \eqref{NewtonAlgMonotonicity_Eq1} is non-negative; to do this, we consider the rows $i\in\mathcal{N}$ of the right-hand side of \eqref{NewtonAlgMonotonicity_Eq1} separately.
\begin{itemize}
\item[(i)] If $\big(b^{min}(x^{n-1})
-A^{min}(x^{n-1})\,x^{n-1}\big)_i=0$, the right-hand side of \eqref{NewtonAlgMonotonicity_Eq1} equals
$\big(\rho\,b^{min}(x^n)
- \rho\,A^{min}(x^n)\,x^n\big)_i\geq 0$.
\item[(ii)] If $\big(b^{min}(x^{n-1})
-A^{min}(x^{n-1})\,x^{n-1}\big)_i>0$, the right-hand side of \eqref{NewtonAlgMonotonicity_Eq1} equals
\begin{align*}
&\ \big(\rho\,b^{min}(x^n)
- \rho\,A^{min}(x^n)\,x^n
-\rho\,b^{min}(x^{n-1})
+ \rho\,A^{min}(x^{n-1})\,x^n\big)_i\\
\geq&\ \big(\rho\,b^{min}(x^n)
- \rho\,A^{min}(x^n)\,x^n
-\rho\,b^{min}(x^n)
+ \rho\,A^{min}(x^n)\,x^n\big)_i=0.
\end{align*}
\end{itemize}
This completes the proof.
\end{proof}

\begin{la}\label{NewtonAlg_FiniteTermination}
Let $(x^n)^{\infty}_{n=0}$ be the sequence generated
by Algorithm \ref{ModifiedNewtonAlg}. There exists a constant $C>0$ such that, for every starting value $x^0$\,, $\|x^n\|_{\infty}\leq C$, $n\geq 1$.
\end{la}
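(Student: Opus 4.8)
The plan is to reuse the mechanism from Lemma~\ref{PenDiscreteProbDef_BoundedIndepRho} and from the uniqueness part of Theorem~\ref{BothDiscreteProblems_UniqSolvable}: fix $n\geq 0$, and show that the iterate $x^{n+1}$ — which, since $A_{u_0}+\rho A^{min}(x^n)\in K^o_N$, is the unique solution of \eqref{ModifiedNewtonAlg_Eq2} — satisfies a one-sided inequality $A^*x^{n+1}\leq b^*$, where $A^*\in K^o_N$ is assembled row by row out of the matrices $A_u$ and $b^*$ out of the vectors $b_u$. Once this is in hand, $(A^*)^{-1}\geq 0$ gives $x^{n+1}\leq (A^*)^{-1}b^*$, each component of which is at most $\|(A^*)^{-1}\|_\infty\,\|b^*\|_\infty$; the first factor is bounded by a constant independent of the chosen controls (hence of $n$ and $x^0$, and also of $\rho$) by Corollary~\ref{BoundednessofMatrixandInv_Cor}, and the second by $\max_{u\in\mathbf U}\|b_u\|_\infty<\infty$ using continuity of $\mathfrak B$ on the compact set $\mathbf U$.

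For the row-wise extraction, write $v_i:=u^{min}_i(x^n)$ and look at the $i$-th row of \eqref{ModifiedNewtonAlg_Eq2}. If the $i$-th rows of $A^{min}(x^n)$ and $b^{min}(x^n)$ vanish, then the $i$-th equation reads $(A_{u_0}x^{n+1}-b_{u_0})_i=0$ and I take $w_i:=u_0$. Otherwise these rows are $(A_{v_i})_i$ and $(b_{v_i})_i$, so the $i$-th equation is $(A_{u_0}x^{n+1}-b_{u_0})_i+\rho(A_{v_i}x^{n+1}-b_{v_i})_i=0$; since $\rho>0$ and the sum is zero, at least one of $(A_{u_0}x^{n+1}-b_{u_0})_i$, $(A_{v_i}x^{n+1}-b_{v_i})_i$ is $\leq 0$, and I let $w_i$ be $u_0$ or $v_i$ accordingly. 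In every row I obtain $(A_{w_i}x^{n+1})_i\leq (b_{w_i})_i$, so the matrix $A^*$ with $i$-th row $(A_{w_i})_i$ — which lies in $K^o_N$ by the row-replacement property and is of exactly the form covered by Corollary~\ref{BoundednessofMatrixandInv_Cor} — together with $b^*$ having $i$-th entry $(b_{w_i})_i$ satisfies $A^*x^{n+1}\leq b^*$. The lower bound is the mirror image: in the non-degenerate rows at least one of the two quantities is $\geq 0$, so one extracts $\bar w_i$ with $(A_{\bar w_i}x^{n+1})_i\geq (b_{\bar w_i})_i$, builds $\bar A^*\in K^o_N$, and gets $x^{n+1}\geq (\bar A^*)^{-1}\bar b^*\geq -C$. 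Taking the larger of the two constants yields $\|x^n\|_\infty\leq C$ for all $n\geq1$.

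The one delicate point I expect is the sign bookkeeping in the non-degenerate rows: one must notice that the identity coming from \eqref{ModifiedNewtonAlg_Eq2} already forces the needed sign onto one of the two terms, so a valid one-sided inequality can always be produced with no a priori control on $x^{n+1}$ — in particular the upper and lower bounds are derived independently of each other, so there is no circularity — and one must check that the assembled matrices $A^*,\bar A^*$ genuinely fall within the scope of Corollary~\ref{BoundednessofMatrixandInv_Cor}, so that the final constant depends neither on $n$, nor on $x^0$, nor on $\rho$. Everything else is a verbatim repetition of the argument in Lemma~\ref{PenDiscreteProbDef_BoundedIndepRho}.
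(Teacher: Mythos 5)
Your proposal is correct, but it takes a different route from the paper. The paper's proof is a one-liner: it rewrites \eqref{ModifiedNewtonAlg_Eq2} as $x^{n+1}=\big(A_{u_0}+\rho A^{min}(x^n)\big)^{-1}\big(b_{u_0}+\rho\,b^{min}(x^n)\big)$ and asserts boundedness of the right-hand side via Corollary \ref{BoundednessofMatrixandInv_Cor}, i.e.\ it works with the inverse of the full, $\rho$-weighted Newton matrix (so the constant is, at least implicitly, allowed to depend on $\rho$, and the corollary has to be read as extending to the family $A_{u_0}+\rho A^{min}$). You instead replay the mechanism of Lemma \ref{PenDiscreteProbDef_BoundedIndepRho} at the level of the iterates: from each row of \eqref{ModifiedNewtonAlg_Eq2} you observe that either $(A_{u_0}x^{n+1}-b_{u_0})_i=0$ (degenerate row) or $(A_{u_0}x^{n+1}-b_{u_0})_i+\rho(A_{v_i}x^{n+1}-b_{v_i})_i=0$ forces one term to be $\leq 0$ and one to be $\geq 0$, and you extract controls $w_i$, $\bar w_i\in\{u_0,v_i\}$ giving $A^*x^{n+1}\leq b^*$ and $\bar A^*x^{n+1}\geq\bar b^*$ with $A^*,\bar A^*$ exactly of the form $A_{u_1\ldots u_N}$ covered by Corollary \ref{BoundednessofMatrixandInv_Cor}. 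The sign bookkeeping and the non-circularity of the two one-sided bounds are fine, and the two bounds combine to $\|x^{n+1}\|_\infty\leq C$ for all $n\geq 0$ and all $x^0$, as required. What your version buys is that it only ever inverts matrices from the compact family of the corollary, so the constant is manifestly independent of $\rho$ (and of $\Pi$) as well as of $n$ and $x^0$ — slightly stronger than the statement and in the spirit of Lemma \ref{PenDiscreteProbDef_BoundedIndepRho}; what the paper's version buys is brevity, at the cost of leaning on a uniform bound for the inverses of the $\rho$-dependent matrices $A_{u_0}+\rho A^{min}(x^n)$ which is not literally the statement of Corollary \ref{BoundednessofMatrixandInv_Cor}.
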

\begin{proof}
We may write $\eqref{ModifiedNewtonAlg_Eq2}$ as
\begin{align}
x^{n+1}
= \big(A_{u_0} + \rho A^{min}(x^n)\big)^{-1}\,\big(b_{u_0} +\rho\,b^{min}(x^n)\big).\label{ModifiedNewtonAlg_Eq3}
\end{align}
The boundedness of the right-hand side of \eqref{ModifiedNewtonAlg_Eq3} follows from Corollary \ref{BoundednessofMatrixandInv_Cor}.
\end{proof}

\begin{theorem}\label{NewtonLimitSolvesPenProblem}
Independently of the starting value $x^0$\,, Algorithm \ref{ModifiedNewtonAlg} converges to a limit $x_{\rho}$ which solves Problem \ref{PenDiscreteProbDef}.
\end{theorem}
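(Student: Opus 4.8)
The plan is to show that the sequence $(x^n)_{n\geq 0}$ generated by Algorithm~\ref{ModifiedNewtonAlg} converges, and then identify its limit as a solution of Problem~\ref{PenDiscreteProbDef}. First I would invoke the monotonicity established in Lemma~\ref{NewtonAlgMonotonicity}, which tells us that $(x^n)_{n\geq 1}$ is a non-decreasing sequence (componentwise), together with the uniform bound from Lemma~\ref{NewtonAlg_FiniteTermination}, namely $\|x^n\|_{\infty}\leq C$ for all $n\geq 1$. A monotone bounded sequence in $\mathbb{R}^N$ converges, so there exists $x_{\rho}\in\mathbb{R}^N$ with $x^n\to x_{\rho}$.

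It remains to verify that this limit solves \eqref{PenDiscreteProbDef_Eq1}. The iteration in the form \eqref{ModifiedNewtonAlg_Eq2} reads
\begin{equation*}
\big(A_{u_0} + \rho A^{min}(x^n)\big)x^{n+1} = b_{u_0} +\rho\,b^{min}(x^n).
\end{equation*}
The difficulty here — and the main obstacle — is that $A^{min}(\cdot)$ and $b^{min}(\cdot)$ are \emph{not} continuous in general, because the active set (the indices $i$ with $(b_{u^{min}_i(y)}-A_{u^{min}_i(y)}\,y)_i>0$) can jump, and even on the active set the maximiser $u^{min}_i(y)$ need not depend continuously on $y$. So I cannot simply pass to the limit inside the equation. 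Instead, I would argue componentwise using a pigeonhole/finite-activity argument combined with the compactness of $\mathbf U$: for each $i\in\mathcal N$, pass to a subsequence along which the ``active/inactive'' status of row $i$ is constant and, in the active case, along which $u^{min}_i(x^n)$ converges in $\mathbf U$ to some $u^*_i$ (using that $\mathbf U$ is compact). Along such a subsequence the relevant rows of $A^{min}(x^n)$, $b^{min}(x^n)$ converge, by continuity of $\mathfrak A$ and $\mathfrak B$, to $(A_{u^*_i})_i$, $(b_{u^*_i})_i$ (or to zero rows), and passing to the limit in row $i$ of \eqref{ModifiedNewtonAlg_Eq2} gives either $(A_{u_0}x_{\rho}-b_{u_0})_i = \rho\,\big(A_{u^*_i}x_{\rho}-b_{u^*_i}\big)_i$ with the sign information inherited from the active-set condition, or $(A_{u_0}x_{\rho}-b_{u_0})_i = 0$ in the inactive case.

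The final step is to check that these limiting relations, together with the definition of $u^{min}_i$ as a maximiser of $\max\{(b_v - A_v\,y)_i,0\}$, force $x_{\rho}$ to satisfy \eqref{PenDiscreteProbDef_Eq1} with the penalty $\Pi(\cdot)=\max\{\cdot,0\}$; here one uses that in the active case $(A_{u^*_i}x_{\rho}-b_{u^*_i})_i \le 0$ (so the $\max$ picks out the penalised branch and $u^*_i$ is a maximiser at the limit), and in the inactive case $\max_{u\in\mathbf U}\max\{(b_u-A_u x_{\rho})_i,0\}=0$ by a continuity/limit argument on the maximiser. Since the limit is then a solution of Problem~\ref{PenDiscreteProbDef}, and solutions are unique by Lemma~\ref{PenDiscreteProbDef_Uniq}, every convergent subsequence of the original (already convergent) sequence has the same limit $x_{\rho}$, so the notation is consistent and the proof is complete. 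I would remark that the whole argument is independent of $x^0$ because both Lemma~\ref{NewtonAlgMonotonicity} and Lemma~\ref{NewtonAlg_FiniteTermination} hold for arbitrary starting values.
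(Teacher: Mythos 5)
Your proposal is correct, and its first half (monotonicity from Lemma \ref{NewtonAlgMonotonicity} plus the uniform bound of Lemma \ref{NewtonAlg_FiniteTermination}, giving convergence of a monotone bounded sequence) coincides with the paper's. Where you genuinely diverge is in identifying the limit: the paper never passes to the limit in the fixed-point form \eqref{ModifiedNewtonAlg_Eq2}, and therefore never meets the discontinuity of $A^{min}(\cdot)$, $b^{min}(\cdot)$ that you work around. It instead uses the increment form \eqref{NewtonAlg_HJB_Eq1}: since $x^{n+1}-x^n\to 0$ and $J_G(x^n)=A_{u_0}+\rho A^{min}(x^n)$ is uniformly bounded, the left-hand side of \eqref{NewtonAlg_HJB_Eq1} tends to zero, hence $G(x^n)\to 0$, and since $G$ itself is continuous (a maximum over the compact set $\mathbf{U}$ of functions continuous in $(u,y)$ -- continuity of $G$ survives even though $J_G$ is discontinuous), $G(x_{\rho})=\lim_n G(x^n)=0$. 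Your componentwise subsequence/active-set analysis is a legitimate substitute: it buys an explicit description of the limiting maximisers $u^*_i$ and of which rows are penalised in the limit, at the price of pigeonhole and compactness bookkeeping that the paper's two-line argument avoids. Two small remarks: the displayed limiting relation in the active case should read $(A_{u_0}x_{\rho}-b_{u_0})_i=\rho\,(b_{u^*_i}-A_{u^*_i}x_{\rho})_i$ (your version has the sign of the penalty term flipped, though the surrounding discussion shows you intend the correct relation); and the closing appeal to uniqueness via Lemma \ref{PenDiscreteProbDef_Uniq} is unnecessary, since the whole sequence already converges and every row-wise limit you derive concerns the single limit $x_{\rho}$.
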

\begin{proof}
From Lemmas \ref{NewtonAlgMonotonicity} and \ref{NewtonAlg_FiniteTermination}, we have the existence of a limit $x_{\rho}$ such that $x^n\to x_{\rho}$ as $n\to\infty$. Hence, it only remains to show that the limit $x_{\rho}$ satisfies $G(x_{\rho})=0$. This follows easily from \eqref{NewtonAlg_HJB_Eq1}, the boundedness of $J_G$ and the continuity of the function $G$.
\end{proof}

\section{Penalising the Discrete Obstacle Problem}\label{Penalise_Discr_Obst_prob}

Having dealt with Problem \ref{DiscreteProbDef} (termed \textit{HJB equation}) in the previous 	sections, we now
consider Problem \ref{DiscreteProbDef_MinMax} (termed \textit{HJB obstacle problem}), which has to be treated differently due
to the combination of ``min'' and ``max'' operators. We use the same penalty term $\Pi$ as introduced in \eqref{GenPenaltyTermDef}, and we again make Assumption \ref{GenPenaltyTermDef_PropAssumption}.

\begin{prob}\label{PenDiscreteProbDef_MinMax}
Let $\rho>0$. Find $z_{\rho}\in\mathbb{R}^N$ such that
\begin{equation}
\max_{u\in\mathbf{U}}\{A_u\,z_{\rho}-b_u\}-\rho\Pi\big(\tilde{b}-\tilde{A}z_{\rho}\big) =0.\label{PenDiscreteProbDef_MinMax_Eq1}
\end{equation}
\end{prob}

If one treats the ``max'' inside the ``min'' in \eqref{DiscreteProbDef_MinMax_Eq1} as if $|\mathbf{U}|=1$, then
the penalty formulation \eqref{PenDiscreteProbDef_MinMax_Eq1} corresponds to the penalisation technique
introduced in \cite{ForsythQuadraticConvergence}, and, thus, \eqref{PenDiscreteProbDef_MinMax_Eq1} has a heuristic interpretation; however, mathematically, the situation is a bit more subtle, and we will now study various properties of Problem \ref{PenDiscreteProbDef_MinMax}.

\begin{la}\label{PenDiscreteProbDef_MinMax_Uniq}
If there exists a solution to Problem \ref{PenDiscreteProbDef_MinMax}, then it is unique.
\end{la}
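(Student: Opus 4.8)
The plan is to mimic the uniqueness arguments already used in Theorem~\ref{BothDiscreteProblems_UniqSolvable} and Lemma~\ref{PenDiscreteProbDef_Uniq}, namely to produce, from two supposed solutions $z_{\rho 1}$ and $z_{\rho 2}$, a matrix $A^*\in K^o_N$ with $A^*(z_{\rho 1}-z_{\rho 2})\geq 0$; then $(A^*)^{-1}\geq 0$ forces $z_{\rho 1}-z_{\rho 2}\geq 0$, and by symmetry $z_{\rho 1}=z_{\rho 2}$. The key observation is that the ``max'' over $u$ behaves just as in Lemma~\ref{PenDiscreteProbDef_Uniq} and the $\Pi$-term just as the penalty term there, so the case analysis combines the two.

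First I would subtract equation \eqref{PenDiscreteProbDef_MinMax_Eq1} written for $z_{\rho 1}$ and for $z_{\rho 2}$ and work row by row. Fix $i\in\mathcal{N}$ and pick $u^1_i,u^2_i\in\mathbf{U}$ attaining the respective maxima, i.e. $\max_{u\in\mathbf{U}}(A_u z_{\rho j}-b_u)_i=(A_{u^j_i}z_{\rho j}-b_{u^j_i})_i$ for $j\in\{1,2\}$, exactly as in \eqref{DiscreteProb_MinMax_UniqTh_Eq0.5}. The row equation then reads
\begin{equation*}
(A_{u^1_i}z_{\rho 1}-b_{u^1_i})_i-(A_{u^2_i}z_{\rho 2}-b_{u^2_i})_i-\rho\pi_i\big((\tilde b-\tilde A z_{\rho 1})_i\big)+\rho\pi_i\big((\tilde b-\tilde A z_{\rho 2})_i\big)=0.
\end{equation*}
As in \eqref{PenDiscreteProbDef_Uniq_Eq3}, the difference of the ``max'' terms can be bounded below by replacing $u^1_i$ with $u^2_i$ in the $z_{\rho 1}$ term, giving $(A_{u^1_i}z_{\rho 1}-b_{u^1_i})_i-(A_{u^2_i}z_{\rho 2}-b_{u^2_i})_i\geq (A_{u^2_i}(z_{\rho 1}-z_{\rho 2}))_i$. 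One then distinguishes the two cases according to the sign of $\pi_i\big((\tilde b-\tilde A z_{\rho 1})_i\big)-\pi_i\big((\tilde b-\tilde A z_{\rho 2})_i\big)$, paralleling the bullet points of Lemma~\ref{PenDiscreteProbDef_Uniq}: if this difference is $\leq 0$ the penalty contribution to the row equation is $\geq 0$, forcing $(A_{u^2_i}(z_{\rho 1}-z_{\rho 2}))_i\leq 0$ to be balanced, i.e. one can take $(A^*)_i=(A_{u^2_i})_i$ after a sign bookkeeping, or — in the complementary sub-case — one uses monotonicity of $\pi_i$ to get $(\tilde A(z_{\rho 1}-z_{\rho 2}))_i\geq 0$ and sets $(A^*)_i=(\tilde A)_i$, which lies in $K^o_N$ by hypothesis.

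The main obstacle is the sign bookkeeping: unlike Lemma~\ref{PenDiscreteProbDef_Uniq}, here there is no ``anchor'' operator $A_{u_0}$ sitting outside the $\max$, so one has to be a little careful that in every case the chosen row $(A^*)_i$ comes from one of $\{A_{u^2_i}:i\}\cup\{\tilde A\}$ (all in $K^o_N$) and that the corresponding scalar inequality is genuinely $(A^*(z_{\rho 1}-z_{\rho 2}))_i\geq 0$ rather than $\leq 0$; this is handled by, where necessary, swapping the roles of $z_{\rho 1}$ and $z_{\rho 2}$ and running the argument in the other direction, exactly as in the ``conversely'' step of Theorem~\ref{BothDiscreteProblems_UniqSolvable}. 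Once the row choices are made, the matrix $A^*$ assembled from these rows is in $K^o_N$ by the row-replacement property of $K^o_N$, its inverse is non-negative, and the conclusion $z_{\rho 1}=z_{\rho 2}$ follows. I would close by remarking that the argument is a direct combination of the proofs of Theorem~\ref{BothDiscreteProblems_UniqSolvable} and Lemma~\ref{PenDiscreteProbDef_Uniq}, so it can be stated compactly.
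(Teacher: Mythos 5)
Your overall plan is indeed the paper's: subtract the two copies of \eqref{PenDiscreteProbDef_MinMax_Eq1} row by row, reduce the difference of the two maxima to a single-control difference, and assemble a matrix $A^*\in K^o_N$ with all rows taken from $\{A_u : u\in\mathbf{U}\}\cup\{\tilde A\}$. However, the case analysis in your middle paragraph is crossed and, as written, does not deliver the matrix inequality. Write $s_j:=(\tilde b-\tilde A z_{\rho j})_i$. With your replacement ($u^1_i\to u^2_i$ in the $z_{\rho 1}$ term), combining your displayed bound with the row identity gives the \emph{upper} bound
\begin{equation*}
\big(A_{u^2_i}(z_{\rho 1}-z_{\rho 2})\big)_i\ \leq\ \rho\,\pi_i(s_1)-\rho\,\pi_i(s_2),
\end{equation*}
so in the case $\pi_i(s_1)\leq\pi_i(s_2)$ you obtain $\big(A_{u^2_i}(z_{\rho 1}-z_{\rho 2})\big)_i\leq 0$, a row inequality oriented towards $z_{\rho 2}\geq z_{\rho 1}$, not towards your announced goal $A^*(z_{\rho 1}-z_{\rho 2})\geq 0$. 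In the complementary case $\pi_i(s_1)>\pi_i(s_2)$, monotonicity of $\pi_i$ forces $s_1>s_2$, i.e. $\big(\tilde A(z_{\rho 1}-z_{\rho 2})\big)_i<0$; your claim that this case yields $\big(\tilde A(z_{\rho 1}-z_{\rho 2})\big)_i\geq 0$ has the wrong sign. So one of your two cases is false as stated, and the two cases you describe point in opposite directions; your proposed remedy of swapping the roles of $z_{\rho 1}$ and $z_{\rho 2}$ ``where necessary'' cannot be applied row by row, since the conclusion via $(A^*)^{-1}\geq 0$ requires all $N$ rows to carry an inequality of the same orientation.

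The repair is small and lands exactly on the paper's proof. Either keep your bound and observe that \emph{both} cases give rows oriented the same way, namely $\big(A^*(z_{\rho 2}-z_{\rho 1})\big)_i\geq 0$ with $(A^*)_i=(A_{u^2_i})_i$ when $\pi_i(s_1)\leq\pi_i(s_2)$ and $(A^*)_i=(\tilde A)_i$ when $\pi_i(s_1)>\pi_i(s_2)$; conclude $z_{\rho 2}\geq z_{\rho 1}$ and then rerun the whole argument with the indices exchanged. Or do as the paper does: replace $u^2_i$ by $u^1_i$ in the $z_{\rho 2}$ term instead (using that $u^2_i$ is the maximiser for $z_{\rho 2}$), which gives the \emph{lower} bound $\big(A_{u^1_i}(z_{\rho 1}-z_{\rho 2})\big)_i\geq \rho\,\pi_i(s_1)-\rho\,\pi_i(s_2)$; then the case $\pi_i(s_1)\geq\pi_i(s_2)$ supplies the row $(A_{u^1_i})_i$ and the case $\pi_i(s_1)<\pi_i(s_2)$ supplies, via monotonicity, the row $(\tilde A)_i$, all with orientation $\geq 0$ for $z_{\rho 1}-z_{\rho 2}$, which is precisely the reduction to the two bullet points of Lemma \ref{PenDiscreteProbDef_Uniq} that the paper invokes.
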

\begin{proof}
Suppose we have two solutions $z_{\rho 1}$ and $z_{\rho 2}$\,.
For $i\in\mathcal{N}$, we use $u_i^1$\,, $u_i^2\in\mathbf{U}$ as defined in \eqref{DiscreteProb_MinMax_UniqTh_Eq0.5}. From equation \eqref{PenDiscreteProbDef_MinMax_Eq1}, we then get that, for $i\in\mathcal{N}$,
\begin{multline*}
\big(A_{u^1_i}(z_{\rho 1} - z_{\rho 2})\big)_i - \rho \pi_i\big((\tilde{b}- \tilde{A}z_{\rho 1})_i\big)
+\rho \pi_i\big((\tilde{b}- \tilde{A}z_{\rho 2})_i\big)\\
\geq
(A_{u^1_i}z_{\rho 1} -b _{u^1_i} - A_{u^2_i}z_{\rho 2} + b _{u^2_i})_i
 -\rho \pi_i\big((\tilde{b}- \tilde{A}z_{\rho 1})_i\big)
+\rho \pi_i\big((\tilde{b}- \tilde{A}z_{\rho 2})_i\big)
=0.
\end{multline*}
At this point, we have an expression similar to \eqref{PenDiscreteProbDef_Uniq_Eq2.8}, and we can finish off by following the lines of the proof of Lemma \ref{PenDiscreteProbDef_Uniq}.
\end{proof}

\begin{la}\label{PenDiscreteProbDef_MinMax_BoundednessINdepOfRhoandPi}
Assume there exists a solution $z_{\rho}$ to Problem \ref{PenDiscreteProbDef_MinMax} for every $\rho>0$.
There exists a constant $C>0$, independent of $\rho$ and $\Pi$, such that
\begin{equation*}
\|(z_{\rho})_{\rho>0}\|_{\infty}\leq C.
\end{equation*}
\end{la}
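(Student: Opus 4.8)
The plan is to imitate the M--matrix argument used for Lemma~\ref{PenDiscreteProbDef_BoundedIndepRho}: I want to manufacture two matrices $A^\flat,A^\sharp\in K^o_N$, each assembled row by row from the rows of the $A_u$ ($u\in\mathbf U$) and of $\tilde A$, together with corresponding right--hand sides $b^\flat,b^\sharp$ built from the $b_u$ and $\tilde b$, such that $A^\flat z_\rho\ge b^\flat$ and $A^\sharp z_\rho\le b^\sharp$. Because replacing individual rows of a $K^o_N$ matrix by rows of another $K^o_N$ matrix keeps it in $K^o_N$, both $A^\flat$ and $A^\sharp$ lie in $K^o_N$ and hence have non-negative inverses; multiplying the two inequalities by $(A^\flat)^{-1}\ge0$ and $(A^\sharp)^{-1}\ge0$ respectively gives $z_\rho\ge (A^\flat)^{-1}b^\flat$ and $z_\rho\le(A^\sharp)^{-1}b^\sharp$. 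The vectors $b^\flat,b^\sharp$ are bounded in $\|\cdot\|_\infty$ by $\max\{\sup_{u\in\mathbf U}\|b_u\|_\infty,\|\tilde b\|_\infty\}$, which is finite by continuity of $\mathfrak B$ on the compact set $\mathbf U$, and the inverse matrices are bounded in $\|\cdot\|_\infty$ by Corollary~\ref{BoundednessofMatrixandInv_Cor}. Since none of these bounds involves $\rho$ or $\Pi$, the claim will follow.

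It remains to build the two matrices. For the lower bound I would use that, by Assumption~\ref{GenPenaltyTermDef_PropAssumption}, $\pi_i\ge0$, so that \eqref{PenDiscreteProbDef_MinMax_Eq1} forces $\max_{u\in\mathbf U}(A_u z_\rho-b_u)_i=\rho\,\pi_i\big((\tilde b-\tilde A z_\rho)_i\big)\ge0$ for every $i\in\mathcal N$. Hence for each $i$ there is a $u_i\in\mathbf U$ (a maximiser of $u\mapsto(A_u z_\rho-b_u)_i$, existing by compactness of $\mathbf U$ and continuity) with $(A_{u_i}z_\rho)_i\ge(b_{u_i})_i$; taking $(A^\flat)_i:=(A_{u_i})_i$ and $(b^\flat)_i:=(b_{u_i})_i$ yields $A^\flat z_\rho\ge b^\flat$. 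For the upper bound I would split on the sign of $(\tilde b-\tilde A z_\rho)_i$: if $(\tilde b-\tilde A z_\rho)_i>0$ then $(\tilde A z_\rho)_i<(\tilde b)_i$ and I set $(A^\sharp)_i:=(\tilde A)_i$, $(b^\sharp)_i:=(\tilde b)_i$; if $(\tilde b-\tilde A z_\rho)_i\le0$ then $\pi_i\big((\tilde b-\tilde A z_\rho)_i\big)=0$, so \eqref{PenDiscreteProbDef_MinMax_Eq1} gives $\max_{u\in\mathbf U}(A_u z_\rho-b_u)_i=0$, whence $(A_u z_\rho)_i\le(b_u)_i$ for every $u\in\mathbf U$, and picking any fixed $u'\in\mathbf U$ I set $(A^\sharp)_i:=(A_{u'})_i$, $(b^\sharp)_i:=(b_{u'})_i$. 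In both cases $(A^\sharp z_\rho)_i\le(b^\sharp)_i$, so $A^\sharp z_\rho\le b^\sharp$, completing the construction.

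The only genuinely non-routine point is the appeal to Corollary~\ref{BoundednessofMatrixandInv_Cor} for composite matrices that may contain rows of the fixed matrix $\tilde A$: strictly speaking that corollary is stated for matrices whose rows come from the family $\{A_u\}_{u\in\mathbf U}$. I would remove this gap either by enlarging the control set to $\mathbf U\cup\{\ast\}$ with $A_\ast:=\tilde A$, $b_\ast:=\tilde b$ (again a compact parameter set with continuous data, covered by the footnote to Problem~\ref{DifferentialOperatorProblem}), or simply by observing that the proof of Corollary~\ref{BoundednessofMatrixandInv_Cor} (positive continuous lower bound on the determinant, continuous cofactors) goes through verbatim once $\tilde A$ is included. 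Everything else is a direct copy of the M--matrix bookkeeping already carried out for the HJB equation, and I expect it to be routine.
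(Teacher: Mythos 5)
Your proof is correct and is essentially the paper's argument: the paper simply says to apply Corollary \ref{BoundednessofMatrixandInv_Cor} and argue as in Lemma \ref{PenDiscreteProbDef_BoundedIndepRho}, which is exactly the row-wise M-matrix construction you carry out (maximiser rows for the lower bound, $\tilde{A}$- or $A_{u'}$-rows depending on the sign of $(\tilde{b}-\tilde{A}z_\rho)_i$ for the upper bound). Your observation that Corollary \ref{BoundednessofMatrixandInv_Cor} must be extended to allow rows of $\tilde{A}$, together with either of your proposed fixes, is a legitimate and welcome tightening of a detail the paper glosses over.
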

\begin{proof}
Applying Corollary \ref{BoundednessofMatrixandInv_Cor}, we can argue as in the proof of Lemma \ref{PenDiscreteProbDef_BoundedIndepRho}.
\end{proof}

\begin{la}\label{PenDiscreteProbDef_MinMax_AllKindsofRhoConvergenceResults}
Assume there exists a solution $z_{\rho}$ to Problem \ref{PenDiscreteProbDef_MinMax} for every $\rho>0$.
There exists a constant $C>0$, independent of $\rho$ and $\Pi$, such that
\begin{equation}
\Big\|\min\Big\{\max_{u\in\mathbf{U}}\{A_u\,z_{\rho}-b_u\},\max\{\tilde{A}z_{\rho}-\tilde{b},0\}-\Pi(\tilde{b}-\tilde{A}z_{\rho})\Big\}\Big\|_{\infty}\leq C/\rho.\label{PenDiscreteProbDef_MinMax_AllKindsofRhoConvergenceResults_Eq.5}
\end{equation}
\end{la}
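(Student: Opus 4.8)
The plan is to mimic the structure of the proof of Lemma \ref{PenaltyApproxAccuracy_Theorem}, working componentwise and exploiting the $\rho$-uniform bound on $z_\rho$ from Lemma \ref{PenDiscreteProbDef_MinMax_BoundednessINdepOfRhoandPi}. First I would fix $i\in\mathcal{N}$ and rewrite \eqref{PenDiscreteProbDef_MinMax_Eq1} in the $i$-th component as
\[
\max_{u\in\mathbf{U}}(A_u z_\rho - b_u)_i = \rho\,\pi_i\big((\tilde b - \tilde A z_\rho)_i\big).
\]
Since the left-hand side is, by Lemma \ref{PenDiscreteProbDef_MinMax_BoundednessINdepOfRhoandPi} together with Lemma \ref{BoundednessofMatrixandInv_La} (boundedness of $\|A_u\|_\infty$ and $\|b_u\|_\infty$), bounded by a constant $C$ independent of $\rho$ and $\Pi$, we immediately obtain
\[
0\le \pi_i\big((\tilde b - \tilde A z_\rho)_i\big)\le C/\rho,
\]
the non-negativity coming from Assumption \ref{GenPenaltyTermDef_PropAssumption}. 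This is the direct analogue of \eqref{PenaltyApproxAccuracy_Theorem_Eq1}.

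Next I would split into the same two cases used in the proof of Lemma \ref{PenaltyApproxAccuracy_Theorem}, now applied to the obstacle term. Either $(\tilde A z_\rho - \tilde b)_i\le 0$, in which case $\pi_i\big((\tilde b - \tilde A z_\rho)_i\big)\ge 0 = \max\{(\tilde A z_\rho - \tilde b)_i,0\}$ may exceed the max-part but is itself at most $C/\rho$, so $\big|\max\{(\tilde A z_\rho - \tilde b)_i,0\} - \pi_i\big((\tilde b - \tilde A z_\rho)_i\big)\big|\le C/\rho$; or $(\tilde A z_\rho - \tilde b)_i > 0$, in which case $\pi_i\big((\tilde b - \tilde A z_\rho)_i\big) = \pi_i(\text{negative}) = 0$ by Assumption \ref{GenPenaltyTermDef_PropAssumption}, so $\max\{(\tilde A z_\rho - \tilde b)_i,0\} - \pi_i\big((\tilde b - \tilde A z_\rho)_i\big) = (\tilde A z_\rho - \tilde b)_i \ge 0$, and one still needs to bound this from above. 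For that upper bound I would use \eqref{PenDiscreteProbDef_MinMax_Eq1} again: in this case $\rho\,\pi_i\big((\tilde b - \tilde A z_\rho)_i\big) = 0 = \max_{u}(A_u z_\rho - b_u)_i$, so the inner ``$\min$'' in \eqref{PenDiscreteProbDef_MinMax_AllKindsofRhoConvergenceResults_Eq.5} is the minimum of $0$ and a non-negative quantity, hence equals $0$, which is trivially $\le C/\rho$.

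The cleanest way to organise the conclusion is to observe that the $i$-th component of the vector inside the norm in \eqref{PenDiscreteProbDef_MinMax_AllKindsofRhoConvergenceResults_Eq.5} is always pinched between $-C/\rho$ and $C/\rho$: when $\max_{u}(A_u z_\rho - b_u)_i = 0$ (the case $(\tilde A z_\rho - \tilde b)_i>0$) the min is $0$; and when $\max_u(A_u z_\rho - b_u)_i = \rho\,\pi_i(\cdot) \ge 0$, the min of this with $\max\{(\tilde A z_\rho-\tilde b)_i,0\}-\pi_i\big((\tilde b-\tilde A z_\rho)_i\big)$ is bounded below by $\min\{0, -C/\rho\} = -C/\rho$ using the case analysis above, and bounded above by $\max\{(\tilde A z_\rho-\tilde b)_i,0\}-\pi_i\big((\tilde b-\tilde A z_\rho)_i\big)\le \pi_i\big((\tilde b - \tilde A z_\rho)_i\big) \le C/\rho$ when the obstacle term is the smaller one, or else equals $\rho\,\pi_i(\cdot)\in[0,C/\rho]$. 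Taking the maximum over $i\in\mathcal{N}$ of these componentwise bounds gives \eqref{PenDiscreteProbDef_MinMax_AllKindsofRhoConvergenceResults_Eq.5}.

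The only mildly delicate point — the ``main obstacle'', such as it is — is bookkeeping the sign in the case $(\tilde A z_\rho - \tilde b)_i > 0$: there the penalty term vanishes so $\max\{(\tilde A z_\rho-\tilde b)_i,0\}-\pi_i(\cdot) = (\tilde A z_\rho-\tilde b)_i$ need not itself be $O(1/\rho)$, and one must instead exploit that the \emph{other} argument of the ``$\min$'', namely $\max_u(A_u z_\rho - b_u)_i = \rho\,\pi_i\big((\tilde b-\tilde A z_\rho)_i\big) = 0$, forces the $\min$ to be $0$. Everything else is a direct transcription of the arguments in Lemmas \ref{PenDiscreteProbDef_MinMax_BoundednessINdepOfRhoandPi} and \ref{PenaltyApproxAccuracy_Theorem}, with $\|A_u\|_\infty$, $\|b_u\|_\infty$, $\|\tilde A\|_\infty$, $\|\tilde b\|_\infty$ all bounded uniformly in $u$ and the bound on $z_\rho$ uniform in $\rho$ and $\Pi$, so the resulting constant $C$ inherits the same independence.
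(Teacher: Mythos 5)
Your proposal is correct and takes essentially the same route as the paper: both deduce $0\le\Pi(\tilde b-\tilde A z_\rho)\le C/\rho$ from the penalised equation together with Lemma \ref{PenDiscreteProbDef_MinMax_BoundednessINdepOfRhoandPi}, and then argue componentwise that either $\max_{u\in\mathbf{U}}(A_u\,z_\rho-b_u)_i=0$ (so the minimum is zero) or the obstacle term $\max\{(\tilde A z_\rho-\tilde b)_i,0\}-\pi_i\big((\tilde b-\tilde A z_\rho)_i\big)$ lies in $[-C/\rho,0]$ and is the smaller argument. The only slip is cosmetic: the minimum can never equal a strictly positive $\rho\,\pi_i(\cdot)$ (your estimate gives $\rho\pi_i\le C$, not $\le C/\rho$), because whenever $\pi_i>0$ the obstacle argument is negative and is therefore the minimum, so that sub-case of your final pinching argument is vacuous rather than needing the stated bound.
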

\begin{proof}
Based on Lemma \ref{PenDiscreteProbDef_MinMax_BoundednessINdepOfRhoandPi}, we can directly infer from \eqref{PenDiscreteProbDef_MinMax_Eq1} that
\begin{equation*}
0\leq \Pi\big(\tilde{b}-\tilde{A}z_{\rho}\big) \leq C/\rho.
\end{equation*}
At this stage, we obtain estimate \eqref{PenDiscreteProbDef_MinMax_AllKindsofRhoConvergenceResults_Eq.5} by pointing out that, for $i\in\mathcal{N}$, either $\max_{u\in\mathbf{U}}(A_u\,z_{\rho}-b_u)_i=0$ (which also means $\pi_i\big((\tilde{b}-\tilde{A}z_{\rho})_i\big)=0$) or $\max\{(\tilde{A}z_{\rho}-\tilde{b})_i\,,0\}-\pi_i\big((\tilde{b}-\tilde{A}z_{\rho})_i\big) \geq -C/\rho$.
\end{proof}

For the canonical choice $\Pi(\cdot)=\max\{\cdot,0\}$, \eqref{PenDiscreteProbDef_MinMax_AllKindsofRhoConvergenceResults_Eq.5} becomes
\begin{equation*}
\Big\|\min\Big\{\max_{u\in\mathbf{U}}\{A_u\,z_{\rho}-b_u\},\tilde{A}z_{\rho}-\tilde{b}\Big\}\Big\|_{\infty}\leq C/\rho,
\end{equation*}
meaning that $z_{\rho}$ satisfies Problem \ref{DiscreteProbDef_MinMax} to an order $O(1/\rho)$; this property is consistent
with the penalty approximation of the HJB equation discussed in Section \ref{Section_PenalisationDiscrProb}.

\begin{cor}\label{PenaltyConvergenceToTrueSol_MinMax}
Suppose there exists a solution $z_{\rho}$ to Problem \ref{PenDiscreteProbDef_MinMax} for every $\rho>0$.
As $\rho\to\infty$, $z_{\rho}$ converges to a limit $z^*$ which solves Problem \ref{DiscreteProbDef_MinMax}.
\end{cor}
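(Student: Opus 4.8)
The plan is to mimic closely the proof of Corollary \ref{PenaltyConvergenceToTrueSol}, adapting it to the presence of the extra operator $\tilde{A}$ and the outer ``min''. First I would invoke Lemma \ref{PenDiscreteProbDef_MinMax_BoundednessINdepOfRhoandPi}, which tells us that $(z_\rho)_{\rho>0}$ is bounded independently of $\rho$; hence it admits a convergent subsequence, whose limit I call $z^*$ and which (as in the cited corollary) I do not distinguish notationally.

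Next I would extract the limiting inequalities from Lemma \ref{PenDiscreteProbDef_MinMax_AllKindsofRhoConvergenceResults}. Letting $\rho\to\infty$ in estimate \eqref{PenDiscreteProbDef_MinMax_AllKindsofRhoConvergenceResults_Eq.5}, and using the continuity of $\mathfrak{A}$, $\mathfrak{B}$, $\Pi$ together with the fact that $\max_{u\in\mathbf{U}}\{A_u z - b_u\}$ depends continuously on $z$ (the ``max'' of continuously parametrised affine functions over a compact set), I obtain first that $0\le \Pi(\tilde b - \tilde A z^*) \le \lim C/\rho = 0$, hence (by Assumption \ref{GenPenaltyTermDef_PropAssumption}, which forces $\pi_i>0$ on $(0,\infty)$) $\tilde A z^* - \tilde b \ge 0$; and second that
\begin{equation*}
\min\Big\{\max_{u\in\mathbf{U}}\{A_u z^* - b_u\},\ \max\{\tilde A z^* - \tilde b,0\}\Big\} = 0.
\end{equation*}
Since $\tilde A z^* - \tilde b \ge 0$ the second entry equals $\tilde A z^* - \tilde b$ itself, so this reads $\min\{\max_{u\in\mathbf{U}}\{A_u z^* - b_u\},\ \tilde A z^* - \tilde b\} = 0$, which is exactly \eqref{DiscreteProbDef_MinMax_Eq1}; thus $z^*$ solves Problem \ref{DiscreteProbDef_MinMax}. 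I would be a little careful taking the limit inside the outer ``min'': since $\min\{\cdot,\cdot\}$ is continuous, and each of the two arguments converges (one by the above, the other because $\tilde A z_\rho - \tilde b \to \tilde A z^* - \tilde b$), the limit passes through without trouble.

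Finally, since Theorem \ref{BothDiscreteProblems_UniqSolvable} gives uniqueness of the solution to Problem \ref{DiscreteProbDef_MinMax}, every convergent subsequence of the bounded family $(z_\rho)_{\rho>0}$ has the same limit $z^*$, and therefore the whole family converges to $z^*$ as $\rho\to\infty$. I expect the only genuinely delicate point to be justifying that the limit of $\max_{u\in\mathbf{U}}\{A_u z_\rho - b_u\}$ along the chosen subsequence equals $\max_{u\in\mathbf{U}}\{A_u z^* - b_u\}$ — i.e. the joint continuity of $(z,u)\mapsto A_u z - b_u$ on $\mathbb{R}^N\times\mathbf{U}$ combined with compactness of $\mathbf{U}$ — but this is the same type of argument already used for the componentwise selection $u_{i\rho}\to u_i^*$ in the proof of Corollary \ref{PenaltyConvergenceToTrueSol}, so it is routine here.
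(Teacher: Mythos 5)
Your proof is correct and takes essentially the same route as the paper, which simply says to argue as in Corollary \ref{PenaltyConvergenceToTrueSol}: boundedness via Lemma \ref{PenDiscreteProbDef_MinMax_BoundednessINdepOfRhoandPi}, extraction of a convergent subsequence, passage to the limit in the penalisation-accuracy estimate (your direct continuity argument for $z\mapsto\max_{u\in\mathbf{U}}\{A_u z-b_u\}$ is a harmless streamlining of the componentwise control-subsequence selection used there), and uniqueness from Theorem \ref{BothDiscreteProblems_UniqSolvable} to upgrade subsequential to full convergence. The only cosmetic point is that the bound $0\le\Pi(\tilde b-\tilde A z_\rho)\le C/\rho$ is most directly read off from the penalised equation \eqref{PenDiscreteProbDef_MinMax_Eq1} together with Lemma \ref{PenDiscreteProbDef_MinMax_BoundednessINdepOfRhoandPi} (as in the first line of the proof of Lemma \ref{PenDiscreteProbDef_MinMax_AllKindsofRhoConvergenceResults}), though it does also follow from \eqref{PenDiscreteProbDef_MinMax_AllKindsofRhoConvergenceResults_Eq.5} itself since the minimum there is bounded above by the obstacle term.
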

\begin{proof}
This can be shown by arguing as in the proof of Corollary \ref{PenaltyConvergenceToTrueSol}.
\end{proof}

\begin{theorem}\label{PenaltyConvergenceToTrueSol_ErrorEstimate_Isaacs}
Suppose there exists a solution $z_{\rho}$ to Problem \ref{PenDiscreteProbDef_MinMax} for every $\rho>0$.
Let $z^*$ be the solution to Problem \ref{DiscreteProbDef_MinMax}.
If, in addition to Assumption \ref{GenPenaltyTermDef_PropAssumption}, we have $\pi_i\in C^1((0,\infty))$, $i\in\mathcal{N}$, and $\inf\{\frac{\partial \pi_i}{\partial y}(y) : y\in(0,\infty),\ i\in\mathcal{N}\}\geq c_{min}>0$, then there exists a constant $C>0$, independent of $\rho$ and $\Pi$ but depending on $c_{min}$\,, such that
\begin{equation*}
\|z^*-z_{\rho}\|_{\infty}\leq C/\rho.
\end{equation*}
\end{theorem}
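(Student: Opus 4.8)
The plan is to follow the proof of Theorem \ref{PenaltyConvergenceToTrueSol_ErrorEstimate} as closely as possible, the one genuinely new feature being that the obstacle operator $\tilde{A}$ forces a case distinction of the same flavour as the one used in the uniqueness proof for Problem \ref{DiscreteProbDef_MinMax} (cases (i)--(iv) in the proof of Theorem \ref{BothDiscreteProblems_UniqSolvable}). As always, I would estimate $z_{\rho}-z^{*}$ and $z^{*}-z_{\rho}$ separately: for each direction I would assemble a matrix $A^{*}$ obtained from $A_{u_{0}}$ by successively replacing rows with rows of suitable $A_{u}$, $u\in\mathbf{U}$, or of $\tilde{A}$, so that $A^{*}\in K^{o}_{N}$ and $A^{*}(z_{\rho}-z^{*})\leq C/\rho$ (resp.\ $A^{*}(z^{*}-z_{\rho})\leq C/\rho$) in every component; since then $(A^{*})^{-1}\geq 0$ and $\|(A^{*})^{-1}\|_{\infty}$ is bounded uniformly in $\rho$ and $\Pi$, this yields the claimed bound. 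The uniform bound on $\|(A^{*})^{-1}\|_{\infty}$ requires only a cosmetic extension of Corollary \ref{BoundednessofMatrixandInv_Cor}, namely running the cofactor argument of Lemma \ref{BoundednessofMatrixandInv_La} over the compact set of admissible rows, which now also contains the finitely many rows of the fixed matrix $\tilde{A}$.

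First I would collect the ingredients. From Lemma \ref{PenDiscreteProbDef_MinMax_BoundednessINdepOfRhoandPi} and \eqref{PenDiscreteProbDef_MinMax_Eq1} one gets $0\leq\Pi(\tilde{b}-\tilde{A}z_{\rho})\leq C/\rho$; since $\pi_{i}(0)=0$ and $\partial\pi_{i}/\partial y\geq c_{min}$ on $(0,\infty)$ imply $\pi_{i}(y)\geq c_{min}\,y$ for $y>0$, this gives, for every $i\in\mathcal{N}$, the one-sided bound $(\tilde{A}z_{\rho}-\tilde{b})_{i}\geq -C_{1}/\rho$ with $C_{1}$ independent of $\rho$ and $\Pi$ (depending on $c_{min}$). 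Moreover \eqref{PenDiscreteProbDef_MinMax_Eq1} shows $\max_{u\in\mathbf{U}}(A_{u}z_{\rho}-b_{u})_{i}=\rho\,\pi_{i}\big((\tilde{b}-\tilde{A}z_{\rho})_{i}\big)\geq 0$, which moreover equals $0$ whenever $(\tilde{A}z_{\rho}-\tilde{b})_{i}\geq 0$. On the other side, $z^{*}$ solves Problem \ref{DiscreteProbDef_MinMax}, so for each $i$ both $\max_{u\in\mathbf{U}}(A_{u}z^{*}-b_{u})_{i}\geq 0$ and $(\tilde{A}z^{*}-\tilde{b})_{i}\geq 0$, with at least one of the two vanishing. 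For each $i$ I fix a maximiser $\hat{u}_{i}$ with $(A_{\hat{u}_{i}}z_{\rho}-b_{\hat{u}_{i}})_{i}=\max_{u}(A_{u}z_{\rho}-b_{u})_{i}$ and a maximiser $u^{*}_{i}$ with $(A_{u^{*}_{i}}z^{*}-b_{u^{*}_{i}})_{i}=\max_{u}(A_{u}z^{*}-b_{u})_{i}$.

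For the estimate of $z_{\rho}-z^{*}$ I would set, row by row, $(A^{*}_{1})_{i}:=(\tilde{A})_{i}$ if $(\tilde{A}z_{\rho}-\tilde{b})_{i}<0$ --- giving $\big(A^{*}_{1}(z_{\rho}-z^{*})\big)_{i}=(\tilde{A}z_{\rho}-\tilde{b})_{i}-(\tilde{A}z^{*}-\tilde{b})_{i}<0$ --- and otherwise $(A^{*}_{1})_{i}:=(A_{u^{*}_{i}})_{i}$, in which case $\max_{u}(A_{u}z_{\rho}-b_{u})_{i}=0$ forces $(A_{u^{*}_{i}}z_{\rho}-b_{u^{*}_{i}})_{i}\leq 0$ and hence $\big(A^{*}_{1}(z_{\rho}-z^{*})\big)_{i}=(A_{u^{*}_{i}}z_{\rho}-b_{u^{*}_{i}})_{i}-\max_{u}(A_{u}z^{*}-b_{u})_{i}\leq 0$. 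For the estimate of $z^{*}-z_{\rho}$ I would set $(A^{*}_{2})_{i}:=(\tilde{A})_{i}$ if $(\tilde{A}z^{*}-\tilde{b})_{i}=0$ --- giving $\big(A^{*}_{2}(z^{*}-z_{\rho})\big)_{i}=-(\tilde{A}z_{\rho}-\tilde{b})_{i}\leq C_{1}/\rho$ by the first ingredient --- and otherwise, which forces $\max_{u}(A_{u}z^{*}-b_{u})_{i}=0$, set $(A^{*}_{2})_{i}:=(A_{\hat{u}_{i}})_{i}$, so that $\big(A^{*}_{2}(z^{*}-z_{\rho})\big)_{i}=(A_{\hat{u}_{i}}z^{*}-b_{\hat{u}_{i}})_{i}-\max_{u}(A_{u}z_{\rho}-b_{u})_{i}\leq 0-0=0$. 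In all four subcases the relevant component is $\leq C_{1}/\rho$; inverting $A^{*}_{1},A^{*}_{2}\in K^{o}_{N}$ and invoking the (extended) Corollary \ref{BoundednessofMatrixandInv_Cor} then gives $\|z^{*}-z_{\rho}\|_{\infty}\leq C/\rho$.

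I expect the crux to be the choice of control index in the two ``$A_{u}$-active'' subcases. In contrast to Theorem \ref{PenaltyConvergenceToTrueSol_ErrorEstimate}, here the penalty acts only on the obstacle residual $\tilde{b}-\tilde{A}z_{\rho}$, so individual residuals $(A_{u}z_{\rho}-b_{u})_{i}$ are \emph{not} bounded below by $-C/\rho$; only their maximum over $u$ is controlled (it equals the non-negative penalty term). This dictates using the $z_{\rho}$-maximiser $\hat{u}_{i}$ when bounding $z^{*}-z_{\rho}$ but the $z^{*}$-maximiser $u^{*}_{i}$ when bounding $z_{\rho}-z^{*}$; getting this pairing right, together with the routine check that the constructed matrices stay in $K^{o}_{N}$ with uniformly bounded inverse even though they carry rows of $\tilde{A}$, is essentially all there is to the argument.
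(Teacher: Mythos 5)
Your proposal is correct and follows essentially the same route as the paper: componentwise one-sided estimates for $z_{\rho}-z^{*}$ and $z^{*}-z_{\rho}$ obtained by selecting, per row, either $(\tilde{A})_i$ or the row of the appropriate maximising control ($u^*_i$ for one direction, $\hat{u}_i$ for the other), assembling the rows into matrices in $K^o_N$ with uniformly bounded non-negative inverses, and using $c_{min}$ to turn the penalty bound into an $O(1/\rho)$ bound on the obstacle residual. The only (harmless) difference is bookkeeping: you extract the intermediate bounds directly from the penalised equation (e.g.\ $\max_u(A_uz_\rho-b_u)_i=\rho\pi_i((\tilde{b}-\tilde{A}z_\rho)_i)$ and $(\tilde{A}z_\rho-\tilde{b})_i\geq -C_1/\rho$) and split cases on the sign of the obstacle residual, whereas the paper routes through Lemma \ref{PenDiscreteProbDef_MinMax_AllKindsofRhoConvergenceResults} and splits on which term attains the minimum.
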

\begin{proof}
We proceed similarly to the proof of Theorem \ref{PenaltyConvergenceToTrueSol_ErrorEstimate}.
Without loss of generality, we may assume that $c_{min}<1$.
Let $i\in\mathcal{N}$. Let $(A_i^*\,,b^*_i)\in\{(\tilde{A},\tilde{b})\}\cup\{(A_u\,,b_u) : u\in\mathbf{U}\}$ be such that 
\begin{equation*}
\min\Big\{ \max_{u\in\mathbf{U}}(A_u\,z^*-b_u)_i\,,(\tilde{A}z^*-\tilde{b})_i\,\Big\}=(A^*_i\,z^*-b^*_i)_i=0.
\end{equation*}
Similarly, applying Lemma \ref{PenDiscreteProbDef_MinMax_AllKindsofRhoConvergenceResults}, let $(A_{i\rho}\,,b_{i\rho})\in\{(\tilde{A},\tilde{b})\}\cup\{(A_u\,,b_u) : u\in\mathbf{U}\}$ be such that 
\begin{multline*}
|(A_{i\rho}\,z_{\rho}-b_{i\rho})_i|
=\Big|\min\Big\{ \max_{u\in\mathbf{U}}(A_u\,z_{\rho}-b_u)_i\,,(\tilde{A}z_{\rho}-\tilde{b})_i\,\Big\}\Big|\\
\leq \frac{1}{c_{min}}\Big|\min\Big\{\max_{u\in\mathbf{U}}(A_u\,z_{\rho}-b_u)_i\,,\max\{(\tilde{A}z_{\rho}-\tilde{b})_i\,,0\}-\pi_i\big((\tilde{b}-\tilde{A}z_{\rho})_i\big)\Big\}\Big|\\
\leq \frac{C_1}{c_{min}\,\rho}
\end{multline*}
for some constant $C_1>0$ independent of $\rho$ and $\Pi$.
Furthermore, as already in \eqref{DiscreteProb_MinMax_UniqTh_Eq0.5}, let $u_i^*$\,, $u^{\rho}_i\in\mathbf{U}$ such that $(A_{u^*_i}\,z^*-b_{u^*_i})_i = \max_{u\in\mathbf{U}}(A_u\,z^*-b_u)_i$ and $(A_{u^{\rho}_i}\,z_{\rho}-b_{u^{\rho}_i})_i = \max_{u\in\mathbf{U}}(A_u\,z_{\rho}-b_u)_i$\,. We first consider $z_{\rho}-z^*$, for which we distinguish two cases. ($C_2$\,,\,$C_3>0$ are taken to be constants independent of $\rho$ and $\Pi$.)
\begin{itemize}
\item[(i)] If $A_{i\rho}=\tilde{A}$, we have $\big(A_{i\rho}(z_{\rho}-z^*)\big)_i\leq (\tilde{A}z_{\rho}-\tilde{b})_i-(A^*_i\,z^*-b^*_i)_i\leq C_2/\rho$.
\item[(ii)] If $A_{i\rho}=A_{u^{\rho}_i}$\,, we have $\big(A_{u^*_i}(z_{\rho}-z^*)\big)_i\leq (A_{u^{\rho}_i}\,z_{\rho}-b_{u^{\rho}_i})_i-(A^*_i\,z^*-b^*_i)_i\leq C_3/\rho$.
\end{itemize}
Now, for the reverse case, $z^*-z_{\rho}$\,, similar estimates can be obtained by swapping the roles of ``$*$'' and ``$\rho$''.
The proof can be completed by following the lines of the proof of Theorem \ref{PenaltyConvergenceToTrueSol_ErrorEstimate}.
\end{proof}

\section{Solving the Penalised HJB Obstacle Problem by Iteration}\label{SectionModNewton_Isaacs}

In Section \ref{SectionModNewton}, we have discussed how to iteratively solve the penalty approximation of the discrete HJB equation.
Similarly, in this section, we discuss iterative methods for the penalty approximation of the discrete HJB obstacle problem.\bigskip

Defining
\begin{equation*}
H(y):=
\max_{u\in\mathbf{U}}\{A_u\,y-b_u\}-\rho\Pi\big(\tilde{b}-\tilde{A}y,0\big) = 0
\end{equation*}
for $y\in\mathbb{R}^N$, to solve \eqref{PenDiscreteProbDef_MinMax_Eq1}, we need to compute $y$ such that $H(y)=0$.
For $i\in\mathcal{N}$, $y\in\mathbb{R}^N$, we define
\begin{equation}
u^{max}_i(y) := \argmax_{v\in\mathbf{U}}\,(A_v\,y-b_v)_i\,,\label{SectionModNewton_Isaacs_Eq0.9}
\end{equation}
and we assume that $\pi_i|_{(0,\infty)}\in C^1((0,\infty))$, $u^{max}_i\in C(\mathbb{R}^{N},\mathbf{U})$, and
\begin{equation}
\frac{\partial }{\partial y_j}\Big[\max_{u\in\mathbf{U}}(A_u\,y-b_u)_i\Big] = \big(A_{u^{max}_i(y)}\big)_{ij}\in C(\mathbb{R}^N,\mathbb{R})\label{SectionModNewton_Isaacs_Eq1}
\end{equation}
for $i$, $j\in\mathcal{N}$. The Jacobian of $H$ is then given by
\begin{equation}
\big(J_H(y)\big)_i:=\big(A_{u^{max}_i(y)}\big)_{i}\ +\rho\frac{\partial \pi_i}{\partial y}\big((\tilde{b}-\tilde{A}y)_i\big)(\tilde{A})_i\label{SectionModNewton_Isaacs_Eq2}                                                                                                                                           
\end{equation}for $i\in\mathcal{N}$, $y\in\mathbb{R}^N$, whenever $\frac{\partial \pi_i}{\partial y}$ is well defined; if $\frac{\partial \pi_i}{\partial y}$ does not exist at 0, we set $\frac{\partial \pi_i}{\partial y}(0):=\lim_{y\uparrow 0}\frac{\partial \pi_i}{\partial y}(y) = 0$. Like in Section \ref{SectionModNewton}, we point out that, for the numerical examples following later in this paper, \eqref{SectionModNewton_Isaacs_Eq1} is generally satisfied.\bigskip

% Defining
% \begin{equation*}
% H(y):=
% \max_{u\in\mathbf{U}}\{A_u\,y-b_u\}-\rho\Pi\big(\tilde{b}-\tilde{A}y,0\big) = 0
% \end{equation*}
% for $y\in\mathbb{R}^N$, to solve \eqref{PenDiscreteProbDef_MinMax_Eq1}, we need to compute $y$ such that $H(y)=0$. For $i\in\mathcal{N}$, $y\in\mathbb{R}$, we take $u^{max}_i(y)\in\mathbf{U}$ to satisfy
% \begin{equation*}
% \max_{u\in\mathbf{U}}(A_u\,y-b_u)_i = (A_{u^{max}_i(y)}\,y-b_u)_i\,,
% \end{equation*}
% and we point out that $A_{u^{max}_i(y)}\in K^o_N$.
% We assume that, for $i\in\mathcal{N}$, we have $\pi_i|_{(0,\infty)}\in C^1((0,\infty))$
% and
% \begin{equation}
% \frac{\partial }{\partial y_j}\Big[\max_{u\in\mathbf{U}}(A_u\,y-b_u)_i\Big] = (A_{u^{max}_i(y)})_{ij}\in C(\mathbb{R}),\quad y\in\mathbb{R},\ j\in\mathcal{N},\label{SectionModNewton_Isaacs_Eq1}
% \end{equation}
% where $(A_{u^{max}_i(y)})_{ij}$ denotes the $j$-th element of the row vector $(A_{u^{max}_i(y)})_{i}$\,, such that the Jacobian of $H$ is given by
% \begin{equation}
% \big(J_H(y)\big)_i:=(A_{u^{sup}_i(y)})_{i}\ +\rho\frac{\partial \pi_i}{\partial y}\big((\tilde{b}-\tilde{A}y)_i\big)(\tilde{A})_i\,,\quad y\in\mathbb{R},\ i\in\mathcal{N}.\label{SectionModNewton_Isaacs_Eq2}
% \end{equation}
% If $\frac{\partial \pi_i}{\partial y}$ does not exist at $0$, we set $\big(J_H(0)\big)_i:=\lim_{y\uparrow 0}\big(J_H(y)\big)_i =(A_{u^{sup}_i(y)})_{i}\,.$\bigskip

As already in Theorem \ref{LineSearchNewton_HJB}, whenever $\Pi$ is a smooth and non-decreasing function, we can find a globally convergent iterative scheme for the solution of Problem \ref{PenDiscreteProbDef_MinMax}.

\begin{theorem}\label{LineSearchNewton}
If, in addition to Assumption \ref{GenPenaltyTermDef_PropAssumption}, for $i\in\mathcal{N}$, $\pi_i\in C^1(\mathbb{R})$ and $\frac{\partial \pi_i}{\partial y}\geq 0$, then Newton's method with line search as introduced by J.-S. Pang in \cite{Pang_NewtonsMethod} presents a globally convergent iterative scheme for the solution of Problem \ref{PenDiscreteProbDef_MinMax}. In particular, this means that there exists a solution to Problem \ref{PenDiscreteProbDef_MinMax}.
\end{theorem}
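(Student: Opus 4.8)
The plan is to mirror the proof of Theorem \ref{LineSearchNewton_HJB} and verify that conditions $(a)$--$(d)$ of Theorem 4 in \cite{Pang_NewtonsMethod} hold for the map $H$ and its Jacobian $J_H$ given in \eqref{SectionModNewton_Isaacs_Eq2}. Since $\pi_i\in C^1(\mathbb{R})$ with $\frac{\partial\pi_i}{\partial y}\geq 0$, and since \eqref{SectionModNewton_Isaacs_Eq1} provides a continuous derivative for $\max_{u}(A_u\,y-b_u)_i$ with $u^{max}_i\in C(\mathbb{R}^N,\mathbf{U})$, the map $H$ is continuously partially differentiable, hence differentiable and in particular $B$-differentiable; $J_H$ as in \eqref{SectionModNewton_Isaacs_Eq2} is the genuine Jacobian. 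First I would observe that the key structural fact is $J_H(y)\in K^o_N$ for all $y\in\mathbb{R}^N$: each row is $(A_{u^{max}_i(y)})_i$, which already lies in the $i$-th row of a $K^o_N$ matrix, plus the non-negative multiple $\rho\,\frac{\partial\pi_i}{\partial y}\big((\tilde b-\tilde A y)_i\big)(\tilde A)_i$ of a row of $\tilde A\in K^o_N$; by the row-replacement and additivity properties of $K^o_N$ stated in Section \ref{ProblemFormulation}, the sum is in $K^o_N$. This simultaneously gives condition $(b)$ (the relevant monotonicity/M-matrix hypothesis of \cite{Pang_NewtonsMethod}).

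For condition $(a)$, I would show the level set $\{y : \|H(y)\|_\infty\leq \|H(y^0)\|_\infty\}$ is bounded, imitating the argument in Theorem \ref{LineSearchNewton_HJB}: for each $i$, if $(H(y))_i$ is bounded then either the penalty term is small, forcing $(\tilde A y)_i\leq (\tilde b)_i + \text{const}$, or the penalty term is large, in which case $\max_u(A_u y - b_u)_i$ is correspondingly large and in particular $(A_{u^{max}_i(y)}\,y-b_{u^{max}_i(y)})_i$ is bounded below; picking the appropriate row (from $\tilde A$ or from some $A_{u^{max}_i(y)}$) for each $i$ assembles a matrix in $K^o_N$ whose uniform inverse bound (Corollary \ref{BoundednessofMatrixandInv_Cor}, together with the analogous trivial bound for the single matrix $\tilde A$) yields boundedness of $y$. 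Condition $(c)$ follows, as in the HJB case, from the continuity of the partial derivatives of $H$ via Lemma 1 and Theorem 2 of \cite{Pang_NewtonsMethod}. For condition $(d)$ I would reproduce the contradiction argument: suppose $y^n\in\mathcal B$ (compact), $\|v^n\|_\infty=1$, and $\|J_H(y^n)v^n\|_\infty\to 0$; pass to convergent subsequences $y^n\to y^*\in\mathcal B$, $v^n\to v^*$ with $\|v^*\|_\infty=1$; by continuity of all the ingredients (here using $u^{max}_i\in C(\mathbb{R}^N,\mathbf{U})$ and continuity of $\frac{\partial\pi_i}{\partial y}$) obtain $J_H(y^*)v^*=0$ with $J_H(y^*)\in K^o_N$, hence $v^*=0$, contradicting $\|v^*\|_\infty=1$; this gives the required uniform lower bound $\|J_H(y)v\|_\infty\geq c>0$ on $\mathcal B\times\{\|v\|_\infty=1\}$.

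Once $(a)$--$(d)$ are verified, Theorem 4 of \cite{Pang_NewtonsMethod} gives global convergence of Newton's method with line search, and in particular the iterates converge to a zero of $H$, so a solution to Problem \ref{PenDiscreteProbDef_MinMax} exists; uniqueness was already established in Lemma \ref{PenDiscreteProbDef_MinMax_Uniq}. I expect the main obstacle to be the careful bookkeeping in condition $(a)$: unlike the HJB equation, where the ``base'' operator $A_{u_0}$ is a fixed $K^o_N$ matrix giving a clean lower bound $(A_{u_0}y)_i\geq(b_{u_0})_i$, here the non-penalised part is itself a $\max$ over controls, so one must be slightly more careful in the dichotomy to always extract a genuine $K^o_N$ row with a one-sided bound; the $K^o_N$ closure properties handle this, but the case analysis needs to be written out explicitly. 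A secondary subtlety is ensuring that the smoothness/regularity hypotheses actually licence applying the cited theorems verbatim to $H$, which relies on \eqref{SectionModNewton_Isaacs_Eq1} being assumed rather than derived — but this is exactly the standing assumption of the section.
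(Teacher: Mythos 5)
Your proposal is correct and follows essentially the same route as the paper, whose proof of Theorem \ref{LineSearchNewton} consists precisely of carrying over the verification of Pang's conditions $(a)$--$(d)$ from Theorem \ref{LineSearchNewton_HJB} to $H$ and $J_H$, using the standing assumption \eqref{SectionModNewton_Isaacs_Eq1} and the $K^o_N$ closure properties. The only point to tidy is the dichotomy in $(a)$: it should be split according to whether $\pi_i\big((\tilde b-\tilde A y)_i\big)$ is zero (then $\max_u(A_u y-b_u)_i=(H(y))_i$ is bounded, giving a row of some $A_u$) or strictly positive (then $(\tilde A y)_i<(\tilde b)_i$, giving the row of $\tilde A$), with the lower bound on $y$ coming uniformly from $\max_u(A_u y-b_u)_i\geq (H(y))_i$, exactly as you anticipate in your closing remark.
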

\begin{proof}
The proof is a minor modification of the proof of Theorem \ref{LineSearchNewton_HJB}.
\end{proof}

Knowing that Problem \ref{PenDiscreteProbDef_MinMax} has a solution for sufficiently smooth penalty terms $\Pi$, we can now proceed to showing that there also exists a solution if we penalise using the $max$-function.

\begin{cor}\label{ExistenceOfSolution_Cor_PiEqualsMaxFun}
If $\Pi(\cdot) = \max\{\cdot,0\}$, then there exists
a solution to Problem \ref{PenDiscreteProbDef_MinMax}.
\end{cor}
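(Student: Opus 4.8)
The plan is to obtain the existence of a solution to Problem \ref{PenDiscreteProbDef_MinMax} with $\Pi(\cdot) = \max\{\cdot,0\}$ by a limiting argument, approximating the non-smooth penalty function $\max\{\cdot,0\}$ by a family of smooth, non-decreasing penalty functions to which Theorem \ref{LineSearchNewton} applies. First I would fix, for each $\varepsilon>0$, a function $\pi^{\varepsilon}\in C^1(\mathbb{R})$ satisfying Assumption \ref{GenPenaltyTermDef_PropAssumption} with $\frac{\partial \pi^{\varepsilon}}{\partial y}\geq 0$ (for instance a standard mollification of $\max\{\cdot,0\}$, which is $0$ on $(-\infty,0]$, strictly positive and $C^1$ on $(0,\infty)$, and converges uniformly to $\max\{\cdot,0\}$ as $\varepsilon\to 0$). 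Setting $\Pi^{\varepsilon}:=(\pi^{\varepsilon},\ldots,\pi^{\varepsilon})^{tr}$, Theorem \ref{LineSearchNewton} guarantees the existence of a solution $z^{\varepsilon}\in\mathbb{R}^N$ to the corresponding penalised problem
\begin{equation*}
\max_{u\in\mathbf{U}}\{A_u\,z^{\varepsilon}-b_u\}-\rho\,\Pi^{\varepsilon}\big(\tilde{b}-\tilde{A}z^{\varepsilon}\big)=0,
\end{equation*}
for each fixed $\rho>0$.

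Next I would show that the family $(z^{\varepsilon})_{\varepsilon>0}$ is bounded, uniformly in $\varepsilon$. This follows exactly as in the proof of Lemma \ref{PenDiscreteProbDef_MinMax_BoundednessINdepOfRhoandPi} (via Lemma \ref{PenDiscreteProbDef_BoundedIndepRho} and Corollary \ref{BoundednessofMatrixandInv_Cor}): one notes that for every component $i\in\mathcal{N}$ either $(\tilde{A}z^{\varepsilon}-\tilde{b})_i$ contributes a sign constraint through $\Pi^{\varepsilon}\geq 0$, or there is a $u\in\mathbf{U}$ with $(A_u z^{\varepsilon}-b_u)_i\leq 0$, and one assembles a matrix in $K^o_N$ row by row to get a one-sided bound, with the reverse bound coming from $A_{u_0}z^{\varepsilon}\geq b_{u_0}$ (here, for the obstacle problem, one uses that $\max_{u}\{A_u z^{\varepsilon}-b_u\}=\rho\Pi^{\varepsilon}(\tilde b-\tilde A z^{\varepsilon})\geq 0$, so $A_u z^\varepsilon \geq b_u$ for the maximising $u$ in each row). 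Crucially this bound is independent of the penalty function, so it holds uniformly in $\varepsilon$.

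By Bolzano--Weierstrass, extract a subsequence $z^{\varepsilon}\to z_{\rho}$ as $\varepsilon\to 0$. The final step is to pass to the limit in the defining equation. The maps $y\mapsto\max_{u\in\mathbf{U}}\{A_u y-b_u\}$ and $y\mapsto\max\{\cdot,0\}\circ(\tilde b-\tilde A y)$ are continuous (the former because $\mathbf{U}$ is compact and $\mathfrak{A},\mathfrak{B}$ are continuous, so the pointwise maximum of a compact family of continuous functions is continuous), and $\Pi^{\varepsilon}\to\max\{\cdot,0\}$ uniformly on the bounded set containing $\{\tilde b-\tilde A z^{\varepsilon}\}$; hence
\begin{equation*}
\max_{u\in\mathbf{U}}\{A_u\,z_{\rho}-b_u\}-\rho\max\{\tilde{b}-\tilde{A}z_{\rho},0\}=0,
\end{equation*}
i.e.\ $z_{\rho}$ solves Problem \ref{PenDiscreteProbDef_MinMax} with $\Pi(\cdot)=\max\{\cdot,0\}$. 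The main obstacle is the uniform-in-$\varepsilon$ boundedness step: one must be careful that the estimate coming from the $K^o_N$-matrix argument genuinely does not degrade as $\pi^{\varepsilon}$ flattens out near $0$, which is exactly why Lemmas \ref{PenDiscreteProbDef_BoundedIndepRho} and \ref{PenDiscreteProbDef_MinMax_BoundednessINdepOfRhoandPi} were stated with constants independent of $\Pi$; everything else is routine continuity and compactness. (Uniqueness, already covered by Lemma \ref{PenDiscreteProbDef_MinMax_Uniq}, then shows the whole family converges, though that is not needed for the statement.)
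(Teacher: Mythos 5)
Your proposal is correct and follows essentially the same route as the paper: approximate $\max\{\cdot,0\}$ by smooth non-decreasing penalties covered by Theorem \ref{LineSearchNewton}, use the $\Pi$-independent bound of Lemma \ref{PenDiscreteProbDef_MinMax_BoundednessINdepOfRhoandPi} to extract a convergent subsequence, and pass to the limit via uniform convergence of $\Pi^{\varepsilon}$ and continuity of the max operations. The only cosmetic caveat is that a symmetric mollification of $\max\{\cdot,0\}$ is not identically zero on $(-\infty,0]$, so one should take a one-sided smoothing (e.g.\ quadratic on $[0,\varepsilon]$, linear beyond) to retain Assumption \ref{GenPenaltyTermDef_PropAssumption}, exactly as the paper implicitly does when positing its family $(\Pi^{\epsilon})_{\epsilon>0}$.
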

\begin{proof} Let $(\Pi^{\epsilon})_{\epsilon>0}$ be a sequence of penalty functions satisfying $\pi^{\epsilon}_i\in C^{\infty}(\mathbb{R})$, $0\leq \frac{\partial \pi^{\epsilon}_i}{\partial y}\leq 1$ and $\|\max\{y,0\}-\Pi^{\epsilon}(y)\|_{\infty}\leq \epsilon$ for $i\in\mathcal{N}$, $y\in\mathbb{R}$, $\epsilon>0$, and let $(z^*_{\epsilon})_{\epsilon>0}$ be the sequence of solutions to Problem \ref{PenDiscreteProbDef_MinMax} corresponding
to $(\Pi^{\epsilon})_{\epsilon>0}$\,. Based on Lemma \ref{PenDiscreteProbDef_MinMax_BoundednessINdepOfRhoandPi}, we know that
there exists a subsequence of $(z^*_{\epsilon})_{\epsilon>0}$\,, not notationally distinguished, that converges to a limit $z^*\in\mathbb{R}^N$ as $\epsilon\to 0$. We aim to show that $z^*$ solves Problem \ref{PenDiscreteProbDef_MinMax} for $\Pi(\cdot) = \max\{\cdot,0\}$. We have
\begin{align*}
& \ \|\max_{u\in\mathbf{U}}\{A_u\,z^*-b_u\}-\rho\max\{\tilde{b}-\tilde{A}z^*,0\}\|_{\infty}\\
=&\ \|\max_{u\in\mathbf{U}}\{A_u\,z_{\epsilon}-b_u\}-\rho\Pi^{\epsilon}\big(\tilde{b}-\tilde{A}z_{\epsilon}\big) - \max_{u\in\mathbf{U}}\{A_u\,z^*-b_u\}+\rho\max\{\tilde{b}-\tilde{A}z^*,0\}\|_{\infty}\\
\leq &\ \|\max_{u\in\mathbf{U}}\{A_u\,z_{\epsilon}-b_u\} - \max_{u\in\mathbf{U}}\{A_u\,z^*-b_u\}\|_{\infty}\\
&\quad\quad\quad\quad\quad\quad\quad\quad\quad\quad\quad\quad+\|\rho\Pi^{\epsilon}\big(\tilde{b}-\tilde{A}z_{\epsilon}\big) -\rho\max\{\tilde{b}-\tilde{A}z^*,0\}\|_{\infty}\,,
\end{align*}
in which we obtain
\begin{equation*}
\lim_{\epsilon\to 0} \|\max_{u\in\mathbf{U}}\{A_u\,z_{\epsilon}-b_u\} - \max_{u\in\mathbf{U}}\{A_u\,z^*-b_u\}\|_{\infty} = 0
\end{equation*}
if we argue as in the proof of Corollary \ref{PenaltyConvergenceToTrueSol}, and
\begin{align*}
&\ \|\rho\Pi^{\epsilon}\big(\tilde{b}-\tilde{A}z_{\epsilon}\big) -\rho\max\{\tilde{b}-\tilde{A}z^*,0\}\|_{\infty}\\
\leq&\
\|\rho\Pi^{\epsilon}\big(\tilde{b}-\tilde{A}z^*\big) -\rho\max\{\tilde{b}-\tilde{A}z^*,0\}\|_{\infty}
+ \|\rho\Pi^{\epsilon}\big(\tilde{b}-\tilde{A}z^*\big) -\rho\Pi^{\epsilon}\big(\tilde{b}-\tilde{A}z_{\epsilon}\big)\|_{\infty}\\
\leq &\ \rho\epsilon\ + \rho\|z_{\epsilon}-z^*\|_{\infty}
\end{align*}
as $\epsilon\to 0$ by the properties of $(\Pi^{\epsilon})_{\epsilon>0}$\,; hence, we have $\max_{u\in\mathbf{U}}\{A_u\,z^*-b_u\}-\rho\max\{\tilde{b}-\tilde{A}z^*,0\}=0$.
\end{proof}

Having established that Newton's method with line search can be used to solve the penalised HJB obstacle problem, we proceed
to Newton's method in its classical form.

\begin{alg}\label{NewtonAlg_Isaacs}(Newton's Method for the pen.\ HJB Obstacle Prob.)
Let $z^0\in\mathbb{R}^N$ be some starting value. Then, for known $z^n$, $n\geq 0$, find $z^{n+1}$ such that
\begin{equation}
J_H(z^n)(z^{n+1}-z^n)  = -H(z^n).\label{NewtonAlg_Isaacs_Eq1}
\end{equation}
\end{alg}

\begin{theorem}\label{NewtonAlg_Isaacs_PropTheorem}
Suppose the conditions of Theorem \ref{LineSearchNewton} are satisfied, and let $z_{\rho}$ be the solution to Problem \ref{PenDiscreteProbDef_MinMax}. There exists a neighbourhood $\mathcal{B}$ of $z_{\rho}$ such that, for any starting value $z^0\in\mathcal{B}$, the Newton sequence $(z^n)_{n\geq 0}$ generated by Algorithm \ref{NewtonAlg_Isaacs} is well defined, remains in $\mathcal{B}$ and converges to $z_{\rho}$\,; the rate of convergence is quadratic.
\end{theorem}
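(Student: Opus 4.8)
The plan is to mirror the proof of Theorem \ref{NewtonAlg_HJB_PropTheorem}, invoking the local convergence result of J.-S. Pang. First I would appeal to Theorem 3 in \cite{Pang_NewtonsMethod}: if $H$ possesses a strong and non-singular $F$-derivative at the solution $z_\rho$ of Problem \ref{PenDiscreteProbDef_MinMax}, then there is a neighbourhood $\mathcal{B}$ of $z_\rho$ from which the classical Newton iteration \eqref{NewtonAlg_Isaacs_Eq1} is well defined, remains in $\mathcal{B}$, and converges quadratically to $z_\rho$. Thus the whole task reduces to establishing these two properties of the derivative of $H$ at $z_\rho$.

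Next I would verify strong $F$-differentiability. Under the hypotheses of Theorem \ref{LineSearchNewton} we have $\pi_i\in C^1(\mathbb{R})$ with $\frac{\partial \pi_i}{\partial y}\ge 0$, and the standing assumptions $u^{max}_i\in C(\mathbb{R}^N,\mathbf{U})$ together with \eqref{SectionModNewton_Isaacs_Eq1} make $H$ continuously partially differentiable on $\mathbb{R}^N$, with Jacobian given by \eqref{SectionModNewton_Isaacs_Eq2}; in particular $H$ is $B$-differentiable. Continuity of the partial derivatives then lets me invoke Theorem 2 in \cite{Pang_NewtonsMethod} to conclude that $H$ admits a strong $F$-derivative everywhere, which coincides with $J_H$, and so in particular at $z_\rho$.

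Then I would show non-singularity of $J_H(z_\rho)$ by proving $J_H(z_\rho)\in K^o_N$. Row $i$ reads $(J_H(z_\rho))_i = \big(A_{u^{max}_i(z_\rho)}\big)_i + \rho\,\frac{\partial \pi_i}{\partial y}\big((\tilde b-\tilde A z_\rho)_i\big)(\tilde A)_i$. Since $A_{u^{max}_i(z_\rho)}\in K^o_N$ and $\tilde A\in K^o_N$, both $\big(A_{u^{max}_i(z_\rho)}\big)_i$ and $(\tilde A)_i$ have non-positive off-diagonal entries and are strictly row-diagonally dominant in position $i$; as the scalar $\rho\,\frac{\partial \pi_i}{\partial y}(\cdot)\ge 0$, the sum inherits both properties. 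Assembling the rows and using the row-replacement and additivity properties of $K^o_N$ recorded in Section \ref{ProblemFormulation}, I obtain $J_H(z_\rho)\in K^o_N\subset K_N$, hence $J_H(z_\rho)$ is an M-matrix and in particular non-singular. Combining this with the previous step and Pang's Theorem 3 completes the proof.

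I expect the main obstacle to be checking that Pang's abstract differentiability framework genuinely applies — specifically, confirming via \eqref{SectionModNewton_Isaacs_Eq1} that the selected rows $\big(A_{u^{max}_i(y)}\big)_{ij}$ depend continuously on $y$, since this is precisely where the ``$\max$'' over the continuum $\mathbf{U}$ enters the first term of $H$ and where a careless reduction to the finite-control case would fail. Once continuous partial differentiability (and hence the strong $F$-derivative) is secured, the $K^o_N$ bookkeeping for non-singularity is routine.
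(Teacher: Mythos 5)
Your proposal is correct and follows essentially the same route as the paper's proof: reduce to Theorem 3 of Pang, obtain the strong $F$-derivative from continuous partial differentiability (using \eqref{SectionModNewton_Isaacs_Eq1} and Theorem 2 of Pang, as in Theorem \ref{NewtonAlg_HJB_PropTheorem}), and deduce non-singularity from $J_H(z_\rho)\in K^o_N$. The extra row-by-row bookkeeping you give for the $K^o_N$ membership is just a spelled-out version of what the paper leaves implicit.
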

\begin{proof}
The result follows from Theorem 3 in \cite{Pang_NewtonsMethod} if $H$ is $B$-differentiable and has a strong and non-singular $F$-derivative at $z_{\rho}$\,. Recalling \eqref{SectionModNewton_Isaacs_Eq1}, we can argue as in the proof of Theorem \ref{NewtonAlg_HJB_PropTheorem} to obtain $B$-differentiability of $H$. From Theorem 2 in \cite{Pang_NewtonsMethod}, it follows that $H$ has a strong $F$-derivative, which is non-singular since $J_H(y)\in K^o_N$ for all $y\in\mathbb{R}^N$.
\end{proof}

We point out that, based on the assumptions of Theorem \ref{LineSearchNewton}, the Lipschitz property of $J_H$\,, additionally required in Theorem \ref{NewtonAlg_Isaacs_PropTheorem}, depends entirely on the regularity of $y\mapsto \max_{u\in\mathbf{U}}\{A_u\,y-b_u\}$ in \eqref{SectionModNewton_Isaacs_Eq1}.\bigskip

If we use $\Pi(\cdot)=\max\{\cdot,0\}$ (in which case, based on \eqref{SectionModNewton_Isaacs_Eq2}, $J_H$ is still well defined), \eqref{NewtonAlg_Isaacs_Eq1} simplifies, and we get the following special case of Algorithm \ref{NewtonAlg_Isaacs}.

\begin{alg}\label{ModifiedNewtonAlg_Isaacs} (Newton-like Method for the pen.\ HJB Obstacle Prob.)
% For $i\in\mathcal{N}$ and $y\in\mathbb{R}^N$, define $u^{max}_i(y)\in\mathbf{U}$ to be such that
% \begin{align}
% u^{max}_i(y) = \argmax_{v\in\mathbf{U}} (A_v\,y-b_v)_i\,,\label{ModifiedNewtonAlg_Isaac_Eq0.5}
% \end{align}
% and set $u^{max}(y):=(u^{max}_1(y)\,,u^{max}_2(y)\,,\ldots,u^{max}_N(y))^{tr}$.
For $i\in\mathcal{N}$ and $y\in\mathbb{R}^N$, define $u^{max}_i(y)\in\mathbf{U}$ as in \eqref{SectionModNewton_Isaacs_Eq0.9}, and set
\begin{equation*}
u^{max}(y):=(u^{max}_1(y)\,,u^{max}_2(y)\,,\ldots,u^{max}_N(y))^{tr}.
\end{equation*}
Furthermore, define $A^{+}(y)\in\mathbb{R}^{N\times N}$ and $b^{+}(y)\in\mathbb{R}^N$ to be matrix and vector consisting of
\begin{itemize}
\item rows $(\tilde{A})_i$ and $(\tilde{b})_i$\,, $i\in\mathcal{N}$, respectively, if $\max\{(\tilde{b}-\tilde{A}y)_i\,,0\}>0$
\item and having zero rows if $\max\{(\tilde{b}-\tilde{A}y)_i\,,0\}=0$.
\end{itemize} 
If $\Pi(y)=\max\{y,0\}$, we now have $J_H(y)=A_{u^{max}(y)} +  \rho A^{+}(y)\in K^o_N$\,, and \eqref{NewtonAlg_Isaacs_Eq1} becomes
\begin{multline*}
\big(A_{u^{max}(z^n)} +  \rho A^{+}(z^n)\big)(z^{n+1}-z^n)
=
-(A_{u^{max}(z^n)}(z^n)\,z^n - b_{u^{max}(z^n)})\\
-\rho A^{+}(z^n)z^n + \rho b^{+}(z^n),
\end{multline*}
which is equivalent to
\begin{align}
\big(A_{u^{max}(z^n)} +  \rho A^{+}(z^n)\big)z^{n+1}
=b_{u^{max}(z^n)} + \rho b^{+}(z^n).\label{ModifiedNewtonAlg_Isaac_Eq2}
\end{align}
\end{alg}

From \eqref{ModifiedNewtonAlg_Isaac_Eq2}, it is easy to see that Algorithm \ref{ModifiedNewtonAlg_Isaacs} is well defined.
The next theorem states that, given certain conditions on the matrix $\tilde{A}$, local
quadratic convergence of Algorithm \ref{ModifiedNewtonAlg_Isaacs} to the solution of Problem \ref{PenDiscreteProbDef_MinMax} for $\Pi(\cdot)=\max\{\cdot,0\}$ (see also Corollary \ref{ExistenceOfSolution_Cor_PiEqualsMaxFun}) can be guaranteed.

\begin{theorem}\label{NewtonAlg_Isaac_Properties}
Let $(z^n)^{\infty}_{n=0}$ be the sequence generated by Algorithm \ref{ModifiedNewtonAlg_Isaacs}.
There exists a constant $C>0$ such that, for every starting value $z^0$\,, it is $\|z^n\|_{\infty}\leq C$, $n\geq 1$.
Furthermore, if $\tilde{A}=I_N$\,, where $I_N\in\mathbb{R}^{N\times N}$ denotes the identity matrix, then there exists a neighbourhood
$\mathcal{B}$ of $z_{\rho}$ such that, for any starting value $z^0\in\mathcal{B}$, $(z^n)_{n\geq 0}$ remains in $\mathcal{B}$ and converges
to the solution $z_{\rho}$ at a quadratic rate.
\end{theorem}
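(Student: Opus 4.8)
The plan is to establish the two assertions of Theorem \ref{NewtonAlg_Isaac_Properties} separately, reusing as much of the machinery from Section \ref{SectionModNewton} as possible. For the uniform bound $\|z^n\|_\infty \le C$, I would argue exactly as in Lemma \ref{NewtonAlg_FiniteTermination}: rewrite the iteration \eqref{ModifiedNewtonAlg_Isaac_Eq2} as
\begin{equation*}
z^{n+1} = \big(A_{u^{max}(z^n)} + \rho A^{+}(z^n)\big)^{-1}\big(b_{u^{max}(z^n)} + \rho b^{+}(z^n)\big),
\end{equation*}
observe that $A_{u^{max}(z^n)} + \rho A^{+}(z^n) \in K^o_N$ (each row is either a row of some $A_u$, or such a row plus $\rho$ times a row of $\tilde A$, and $K^o_N$ is closed under these row operations by the properties stated at the start of Section \ref{ProblemFormulation}), and then invoke Corollary \ref{BoundednessofMatrixandInv_Cor} applied to the augmented control set that also ranges over the on/off pattern of $A^+$. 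Since there are only finitely many such patterns and each gives a compact, continuously parametrised family of M-matrices in $K^o_N$ with uniformly bounded inverses, the right-hand side is bounded uniformly in $n$ and in $z^0$; the negative part is controlled as in Lemma \ref{PenDiscreteProbDef_BoundedIndepRho} using $A_{u^{max}(z^n)} z^{n+1} \le b_{u^{max}(z^n)} + \rho b^+(z^n) - \rho A^+(z^n)z^{n+1}$ together with nonnegativity of inverses.

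For the local quadratic convergence under $\tilde A = I_N$, the natural route is to recognise Algorithm \ref{ModifiedNewtonAlg_Isaacs} as the classical Newton iteration \eqref{NewtonAlg_Isaacs_Eq1} for $H$ with the nonsmooth penalty $\Pi(\cdot) = \max\{\cdot,0\}$, and then verify the hypotheses of Theorem 3 in \cite{Pang_NewtonsMethod}, exactly as was done in Theorem \ref{NewtonAlg_Isaacs_PropTheorem} for smooth $\Pi$. Concretely: (i) $H$ is $B$-differentiable --- this follows because $y\mapsto \max_{u\in\mathbf U}\{A_u y - b_u\}$ is piecewise affine hence $B$-differentiable, using \eqref{SectionModNewton_Isaacs_Eq1}, and $y\mapsto \max\{\tilde b - \tilde A y, 0\}$ is likewise piecewise affine; (ii) $H$ has a strong $F$-derivative at $z_\rho$ that coincides with $J_H(z_\rho)$ as given by \eqref{SectionModNewton_Isaacs_Eq2} with $\partial\pi_i/\partial y$ replaced by the indicator $\mathbf 1_{\{(\tilde b - \tilde A z_\rho)_i > 0\}}$, provided $(\tilde b - \tilde A z_\rho)_i \ne 0$ for all $i$ --- here is exactly where $\tilde A = I_N$ is needed, since then $(\tilde b - \tilde A z_\rho)_i = \tilde b_i - (z_\rho)_i$ and one can argue, via Lemma \ref{PenDiscreteProbDef_MinMax_AllKindsofRhoConvergenceResults} or directly from \eqref{PenDiscreteProbDef_MinMax_Eq1}, that the "max" branch and the penalty branch cannot be simultaneously active in a degenerate way; (iii) $J_H(z_\rho) \in K^o_N$, hence nonsingular, and the map $y\mapsto J_H(y)$ is locally constant (piecewise constant) near $z_\rho$, which gives the required strong (indeed exact, locally) approximation and hence quadratic --- in fact eventually exact --- convergence. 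One also needs the Newton subproblems to be solvable in the neighbourhood, which again follows from $J_H \in K^o_N$.

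The main obstacle is step (ii): showing that $\tilde A = I_N$ forces the "kink" of the penalty to be inactive at the solution, i.e. that no component has $(\tilde b - z_\rho)_i = 0$ with the constraint simultaneously binding in a way that makes $J_H$ genuinely multivalued at $z_\rho$. The clean way to handle this is to avoid claiming nondegeneracy outright and instead show that with $\tilde A = I_N$ the $i$-th row of $J_H(y)$ depends on $y$ only through the sign of $\tilde b_i - y_i$ and through the (piecewise-constant) selection $u^{max}_i(y)$, so that $J_H$ is piecewise constant with $z_\rho$ lying in the closure of finitely many cells; on each such cell $H$ is affine, so one Newton step from anywhere in the cell containing $z_\rho$ lands exactly on $z_\rho$, and a standard argument shows that once the iterates enter a sufficiently small ball they stay in the "correct" cell. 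This reduces the quadratic-rate claim to a finite-termination-type statement analogous to Lemma \ref{NewtonAlgMonotonicity} and Theorem \ref{NewtonLimitSolvesPenProblem}, combined with the nonsingularity $J_H(z_\rho)\in K^o_N$ inherited from the first part of the proof.
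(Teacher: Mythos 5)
Your first claim (the uniform bound on $\|z^n\|_{\infty}$) is argued exactly as in the paper: rewrite \eqref{ModifiedNewtonAlg_Isaac_Eq2}, note $A_{u^{max}(z^n)}+\rho A^{+}(z^n)\in K^o_N$, and invoke the uniform inverse bounds of Corollary \ref{BoundednessofMatrixandInv_Cor} as in Lemma \ref{NewtonAlg_FiniteTermination}. That part is fine.

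The quadratic-convergence part has a genuine gap. Your fallback argument rests on the claims that $y\mapsto\max_{u\in\mathbf{U}}\{A_u y-b_u\}$ is piecewise affine, that $u^{max}_i(y)$ is piecewise constant, hence that $J_H$ is locally constant near $z_\rho$ and one Newton step from the correct ``cell'' lands exactly on $z_\rho$. This is the finite-control picture and is false in the setting of this paper, where $\mathbf{U}$ is a compact interval: the maximum over a continuum of affine functions is convex but in general smooth and genuinely nonlinear (the paper's own obstacle example has $\max_{u}\{2u\psi_y-u^2\}=\psi_y^2$), so there are no affine cells, $J_H$ is not piecewise constant, and the exact one-step/finite-termination reduction collapses. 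Moreover, the degeneracy you correctly identify --- a component with $(\tilde b-z_\rho)_i=0$ --- is then left unresolved: Pang's Theorem 3, which you want to apply as in Theorem \ref{NewtonAlg_Isaacs_PropTheorem}, requires a strong $F$-derivative of $H$ at $z_\rho$, and that genuinely fails at such a kink; nothing in the hypothesis $\tilde A=I_N$ rules this case out. The paper avoids both problems by switching frameworks: it applies the semismooth Newton theorem (Theorem 3.2 in \cite{Qi_NonsmoothNewton}), verifying semismoothness of $H$ on the kink set $\mathcal{D}=\{y:\exists\,i,\ y_i=(\tilde b)_i\}$ via the directional-derivative condition \eqref{NewtonAlg_Isaac_Properties_Eq1} (this is where $\tilde A=I_N$ is used, to make the only nonsmooth term $\max\{\tilde b-y,0\}$ explicit), and then noting that \emph{every} element of Clarke's generalised Jacobian at the kink, $A_{u^{max}}+\lambda\rho\tilde A$ with $\lambda$ ranging over the interval, lies in $K^o_N$ and is therefore nonsingular. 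That route needs no nondegeneracy assumption and no piecewise-affine structure; to repair your proof you would need to adopt it (or something equivalent), since the Pang-Theorem-3 route cannot cover the degenerate case.
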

\begin{proof}
The boundedness of $(z^n)_{n\geq 1}$ follows from \eqref{ModifiedNewtonAlg_Isaac_Eq2} by arguing as in the proof of Lemma \ref{NewtonAlg_FiniteTermination}. The quadratic local convergence property of $(z^n)_{n\geq 0}$ follows from Theorem 3.2 in \cite{Qi_NonsmoothNewton} if $H$ is semi-smooth and all $V\in\partial H(z_{\rho})$ are non-singular, where $\partial H$ denotes the generalised Jacobian as introduced in \cite{Clarke_OptimNonsmoothAnalysis}. By the assumptions at the beginning of this section, $H$ is continuously partially differentiable (and, in particular, semi-smooth) everywhere except on the set $\mathcal{D}:=\{y\in\mathbb{R}^N : \exists\ i\in\mathcal{N}\text{ s.t. }y_i=(\tilde{b})_i\}$, since the only critical term is $\max\{\tilde{b}-\tilde{A}y,0\}=\max\{\tilde{b}-y,0\}$.
Denoting the directional derivative at $y\in\mathbb{R}^N$ in direction $h\in\mathbb{R}^N$ by $H'(y;h)$, it can easily be verified that
\begin{equation}
\lim_{h\to 0}\frac{H'(y+h;h)-H'(y;h)}{\|h\|_{\infty}}=0,\quad y\in\mathcal{D},\label{NewtonAlg_Isaac_Properties_Eq1}
\end{equation}
which, by Theorem 2.3 in \cite{Qi_NonsmoothNewton}, gives semi-smoothness on $\mathcal{D}$, and, since the generalised Jacobian at $\tilde{b}$ is given by
\begin{equation*}
\partial H(\tilde{b})=\{A_{u^{max}(\tilde{b})} + \lambda \tilde{A} : \lambda\in[0,1]\}\subset K^o_N\,,
\end{equation*}
we see that all $V\in\partial H(\tilde{b})$ are in fact non-singular.
\end{proof}

% Further to the quadratic local convergence stated in the previous theorem, Algorithm \ref{ModifiedNewtonAlg_Isaacs}
% has local convergence of higher order if the function $H$ is sufficiently smooth.
% 
% \begin{cor}\label{NewtonAlg_Isaac_Properties_Cor}
% Suppose $\tilde{A}=I_N$\,. In the situation of Theorem \ref{NewtonAlg_Isaac_Properties}, we get
% local convergence of order $p+1$ if the function
% \begin{equation*}
% \mathcal{H}_1:\mathbb{R}^N\to\mathbb{R}^N : y\mapsto\max_{u\in\mathbf{U}}\{A_u\,y-b_u\}
% \end{equation*}
% is p-order semi-smooth (as defined in \cite{Qi_NonsmoothNewton}).
% \end{cor}
% \begin{proof}
% The limit in \eqref{NewtonAlg_Isaac_Properties_Eq1} gives us semi-smoothness of first order; if the limit also holds for
% denominator $\|h\|^p_{\infty}$ , then we have semi-smoothness of order $p$ (cf.\,\cite{Qi_NonsmoothNewton}).
% In the situation of Theorem \ref{NewtonAlg_Isaac_Properties}, if $H$ is $p$-order semi-smooth, then we get local convergence of order $p+1$
% by Theorem 3.2 in \cite{Qi_NonsmoothNewton}.
% Now, since the function
% \begin{equation*}
% \mathcal{H}_2:\mathbb{R}^N\to\mathbb{R}^N : y\mapsto\max\{\tilde{b}-y,0\}
% \end{equation*}
% can be found to satisfy $\mathcal{H}_2'(y+h;h)=\mathcal{H}_2'(y;h)$ for all $y\in\mathcal{D}$, $h\in\mathbb{R}^N$, $\|h\|_{\infty}$ small, it is semi-smooth of arbitrary order, and, hence, the function $H=\mathcal{H}_1 + \mathcal{H}_2$ is $p$-order semi-smooth if and only if
% $\mathcal{H}_1$ is $p$-order semi-smooth.
% \end{proof}

\begin{remark}
Since, in practice, Problem \ref{PenDiscreteProbDef_MinMax} frequently results from a time-stepping routine, the solution
from the previous time step can be used as a starting value $z^0$, and, for small enough time-steps, $z^0\in\mathcal{B}$
as required by Theorems \ref{NewtonAlg_Isaacs_PropTheorem} and \ref{NewtonAlg_Isaac_Properties} can be expected to hold.
\end{remark}

\section{Evaluating the Continuous Control in Practice}\label{ContinuousControl_InPractice}

If we solve Problems \ref{DiscreteProbDef} and \ref{DiscreteProbDef_MinMax} using penalisation, we have to numerically deal with the continuous control, e.g. when computing $u^{min}_i(x^n)$ and $u^{max}_i(z^n)$ in \eqref{ModifiedNewtonAlg_Eq0.5} and \eqref{SectionModNewton_Isaacs_Eq0.9} for given $x^n$ and $z^n$, respectively.
Now, if the control is well behaved (as in the examples following later in this paper), the exact minimum/maximum can be found by differentiating and using analytical techniques; however, this is not necessarily always possible, i.e. we might have to approximate $(A_u)_{u\in\mathbf{U}}$ and $(b_u)_{u\in\mathbf{U}}$ by functions that are easier to handle numerically. The following remark states that such an approximation is legitimate within the framework of our algorithms.

\begin{remark} (Stability in the Control)\label{Remark_StabilityInTheControl}
Suppose we have families of functions $(A^{\epsilon}_u)_{u\in\mathbf{U}}$ and $(b^{\epsilon}_u)_{u\in\mathbf{U}}$\,, $\epsilon>0$, satisfying the following properties.
\begin{itemize}
\item For every $\epsilon>0$, $(A^{\epsilon}_u)_{u\in\mathbf{U}}$ and $(b^{\epsilon}_u)_{u\in\mathbf{U}}$ satisfy the same conditions as required in the definition of Problem \ref{DiscreteProbDef}
and in the setup of Sections \ref{SectionModNewton} and \ref{SectionModNewton_Isaacs}.
\item For every $u\in\mathbf{U}$, it is $\lim_{\epsilon\to 0}A^{\epsilon}_u = A_u$ and $\lim_{\epsilon\to 0}b^{\epsilon}_u = b_u$\,.
\end{itemize}
Suppose we solve Problems \ref{PenDiscreteProbDef} and \ref{PenDiscreteProbDef_MinMax} by using the approximations $(A^{\epsilon}_u)_{u\in\mathbf{U}}$ and $(b^{\epsilon}_u)_{u\in\mathbf{U}}$ for some $\epsilon>0$, running any of the algorithms discussed in Sections \ref{SectionModNewton} and \ref{SectionModNewton_Isaacs}, obtaining solutions $x^{*,\epsilon}$ and $z^{*,\epsilon}$, respectively.
In the limit $\epsilon\to 0$, we have $x^{*,\epsilon}\to x^*$ and $z^{*,\epsilon}\to z^*$, where $x^*$ and $z^*$, respectively, denote the solutions to Problems \ref{PenDiscreteProbDef} and \ref{PenDiscreteProbDef_MinMax} for $(A_u)_{u\in\mathbf{U}}$ and $(b_u)_{u\in\mathbf{U}}$.
\end{remark}
\begin{proof}
For fixed $\epsilon>0$, all results from the previous sections hold, and it only remains to show that $x^{*,\epsilon}\to x^*$ and $z^{*,\epsilon}\to z^*$ as $\epsilon\to 0$. Since all involved functions are continuous on the compact interval $\mathbf{U}$, the convergence
$\lim_{\epsilon\to 0}A^{\epsilon}_u = A_u$ and $\lim_{\epsilon\to 0}b^{\epsilon}_u = b_u$ is uniform in $u\in\mathbf{U}$; hence, we can reproduce results of Corollary \ref{BoundednessofMatrixandInv_Cor} with constants independent of $\epsilon>0$, which, following Lemmas \ref{PenDiscreteProbDef_BoundedIndepRho} and \ref{PenDiscreteProbDef_MinMax_BoundednessINdepOfRhoandPi}, means that $x^{*,\epsilon}$ and $z^{*,\epsilon}$ are bounded independently of $\epsilon>0$. We then infer the existence of converging subsequences, not notationally distinguished, of $(x^{*,\epsilon})_{\epsilon>0}$ and $(z^{*,\epsilon})_{\epsilon>0}$\,, respectively. Since the convergence in $\epsilon$ is uniform in $u\in\mathbf{U}$, we may swap ``$\lim_{\epsilon\to 0}$''
and ``$\max_{u\in\mathbf{U}}$'' in expressions \eqref{PenDiscreteProbDef_Eq1} and \eqref{PenDiscreteProbDef_MinMax_Eq1}, and we see that the limits of the sequences $(x^{*,\epsilon})_{\epsilon>0}$ and $(z^{*,\epsilon})_{\epsilon>0}$ do indeed solve Problems \ref{PenDiscreteProbDef} and \ref{PenDiscreteProbDef_MinMax}, respectively. The uniqueness of the solutions (see Lemmas \ref{PenDiscreteProbDef_Uniq} and \ref{PenDiscreteProbDef_MinMax_Uniq}) means that not only subsequences but the whole sequences converge.
\end{proof}

We point out that the rather general definition of approximating functions used in the above remark includes most common numerical approximations, e.g. piecewise constant, piecewise linear and other finite subspace approximations.

% \bigskip
% 
% \begin{remark} (Stability in the Control II)
% 
% \end{remark}

\section{Numerical Results}\label{Part_Numerics}

Finally, we come to test and analyse the applicability of the previously introduced numerical techniques by solving
two models from mathematical finance, presented in Sections \ref{Section_Examples_Indifference} and \ref{Section_Examples_EarlyExercise}, which lead directly to equations as given in Problems \ref{DifferentialOperatorProblem} and \ref{DifferentialOperatorProblem_MinMax}, respectively.

\subsection{Example: An Incomplete Market Investment Problem}\label{Section_Examples_Indifference}

In this section, we solve an incomplete market problem taken from \cite{Zariphopoulou_UnhedgeableRisk}.
More precisely, we look at an optimal investment model in which an agent has to distribute his money between a risk free
bond and a risky stock; the market incompleteness arises from the stochastic volatility of the stock price process.\bigskip

Let $b$, $a$, $\sigma:\mathbb{R}\to\mathbb{R}$ be bounded and globally Lipschitz functions, and suppose that there exists a constant $c$, independent of $y$, such that $\sigma(y)\geq c$.
Let $T>0$ be some finite time horizon.
Let $\mu>r>0$ and $B_0\,$, $S_0\,$, $Y_0> 0$.
Suppose we have a bond price process $(B_t)_{0\leq t\leq T}$\,, a stochastic volatility process $(Y_t)_{0\leq t\leq T}$ and a stock price process $(S_t)_{0\leq t\leq T}$ solving, respectively,
\begin{align*}
dB_t =&\ rB_t\,dt\,,\\
dY_t =&\ b(Y_t)\,dt + a(Y_t)\,dW^1_t\\
\text{and}\quad dS_t =&\ \mu S_t\,dt + \sigma(Y_t)S_t\,dW^2_t\,,
\end{align*}
where $(W^i_t)_{0\leq t\leq T}$\,, $i\in\{1,2\}$, are two Brownian motions defined on a probability space $(\Omega,\mathcal{F},P)$ with a correlation coefficient $\varrho\in[-1,1]$.\bigskip

Staying exactly in the framework of \cite{Zariphopoulou_UnhedgeableRisk}, we consider an investor who can invest in the stock and in the bond.
We suppose that the investor has initial wealth $X_0\geq0$ and that he may rebalance his portfolio $X_t=\pi^0_t+\pi_t$ at any time $t\in[0,T]$; here, $\pi^0_t$ and $\pi_t$ denote the amounts invested, respectively, in the bond and in the stock.
The investor's wealth process solves
\begin{equation*}
dX_t = rX_t\,dt + (\mu-r)\pi_t
\,dt + \sigma(Y_t)\pi_t\,dW^1_t\,,\quad t\in[0,T],
\end{equation*}
and must satisfy $X_t\geq 0$ for every $t\in[0,T]$.
We take the investor's utility function to be of CRRA-type and given by
\begin{equation*}
U(x)=\frac{1}{\gamma}x^{\gamma},\quad x\in\mathbb{R},
\end{equation*}
for some constant $0<\gamma<1$. Now, trying to maximise the final utility, the investor's value function is given by
\begin{equation}
\phi(x,y,t): = \sup_{\pi\in\mathcal{A}}\mathbb{E}[U(X_T)|X_t=x,Y_t=y],\quad (x,y,t)\in[0,\infty)\times\mathbb{R}\times[0,T],\label{Example_InvestorValueFunction}
\end{equation}
where $\mathcal{A}$ is the set of admissible trading strategies (for details, see \cite{Zariphopoulou_UnhedgeableRisk}). We cite the following result, which shows how this utility maximisation problem can be solved.

\begin{prop}\label{IncompleteMarketProb_EqFormulation}
The value function introduced in \eqref{Example_InvestorValueFunction} can be written as
\begin{equation*}
\phi(x,y,t) = \frac{x^{\gamma}}{\gamma} \varphi(y,t),\quad(x,y,t)\in[0,\infty)\times\mathbb{R}\times[0,T],
\end{equation*}
where $\varphi$ solves
\begin{multline}
\frac{1}{\gamma}\big[\varphi_t + \frac{1}{2}a^2(y)\varphi_{yy} + b(y)\varphi_y\big] + r\varphi \\ + \max_{u\in\mathbf{U}}\big[\frac{1}{2}(\gamma-1)\sigma^2(y)u^2\varphi+\varrho\sigma(y)a(y)u\varphi_y+(\mu-r)u\varphi\big]=0,\label{IncompleteMarketProb_EqFormulation_HJBEq}
\end{multline}
with $\varphi(y,T)\equiv 1$ and $\mathbf{U}\subset\mathbb{R}$ an appropriately chosen compact set, and $\tilde{\varphi}(\cdot,\cdot):= \varphi(\cdot,\cdot)^{\frac{1-\gamma+\varrho^2\gamma}{1-\gamma}}$ satisfies
\begin{multline}
\tilde{\varphi}_t + \frac{1}{2}a(y)^2\tilde{\varphi}_{yy} + \Big[b(y) + \varrho\frac{\gamma(\mu-r)a(y)}{(1-\gamma)\sigma(y)}\Big]\tilde{\varphi}_y\\
+ \frac{\gamma(1-\gamma+\varrho^2\gamma)}{1-\gamma}\Big[r + \frac{(\mu-r)^2}{2\sigma^2(a)(1-\gamma)}\Big]\tilde{\varphi}=0.\label{IncompleteMarketProb_EqFormulation_LinParPDE}
\end{multline}
The notion of solution used in this context has to be understood in the viscosity sense (cf.\,\cite{UsersGuide_ViscositySols}), and $\varphi$, $\tilde{\varphi}$ are the unique solutions to \eqref{IncompleteMarketProb_EqFormulation_HJBEq} and \eqref{IncompleteMarketProb_EqFormulation_LinParPDE}, respectively.
\end{prop}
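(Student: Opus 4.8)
The plan is to derive \eqref{IncompleteMarketProb_EqFormulation_HJBEq} and \eqref{IncompleteMarketProb_EqFormulation_LinParPDE} from the dynamic programming principle (DPP) for the controlled diffusion $(X_t,Y_t)$ and then to identify $\phi$ through the resulting HJB equation. First I would reparametrise the control, writing the amount held in the stock as $\pi_t = X_t u_t$, so that $u_t$ is the \emph{fraction} of wealth in the risky asset; since the wealth SDE is linear in $X_t$, this makes the state coefficients depend on $u$ and $X$ only through $X$-homogeneous expressions. The DPP then yields, formally,
\[
\phi_t+\tfrac12 a^2(y)\phi_{yy}+b(y)\phi_y+r x\,\phi_x+\sup_{u\in\mathbf{U}}\Big[(\mu-r)u x\,\phi_x+\tfrac12\sigma^2(y)u^2x^2\phi_{xx}+\varrho\,\sigma(y)a(y)u x\,\phi_{xy}\Big]=0
\]
on $[0,\infty)\times\mathbb{R}\times[0,T)$, with terminal datum $\phi(x,y,T)=U(x)=\tfrac1\gamma x^\gamma$; the control set $\mathbf{U}\subset\mathbb{R}$ will be pinned down at the end.

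Next I would exploit the homotheticity of the problem: $U$ is positively homogeneous of degree $\gamma$, and under any fraction-of-wealth strategy $X_T$ is linear in the initial wealth $x$, so $x\mapsto\phi(x,y,t)$ is homogeneous of degree $\gamma$. This motivates the ansatz $\phi(x,y,t)=\tfrac{x^\gamma}{\gamma}\varphi(y,t)$ with $\varphi>0$ (positivity inherited from $U>0$). Computing $\phi_t,\phi_x,\phi_{xx},\phi_y,\phi_{yy},\phi_{xy}$ in terms of $\varphi$, substituting, and dividing through by $x^\gamma$, every power of $x$ cancels and one recovers exactly \eqref{IncompleteMarketProb_EqFormulation_HJBEq} with $\varphi(\cdot,T)\equiv1$. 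Positivity of $\varphi$ together with $\gamma-1<0$ makes the inner supremum a strictly concave quadratic in $u$, attained at $u^*=\tfrac{\varrho a(y)\varphi_y}{(1-\gamma)\sigma(y)\varphi}+\tfrac{\mu-r}{(1-\gamma)\sigma^2(y)}$; since $a$ is bounded, $\sigma$ is bounded and bounded below by $c$, and $\varphi_y/\varphi$ admits an a priori bound (from the gradient estimate for the linear equation below), $u^*$ stays in a fixed compact set, which is the ``appropriately chosen $\mathbf{U}$'' and simultaneously shows that restricting to $\mathbf{U}$ does not change the optimum.

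Then I would carry out the distortion transformation. Inserting $u^*$ back into \eqref{IncompleteMarketProb_EqFormulation_HJBEq} produces a quasilinear equation for $\varphi$ whose only nonlinearity is a term proportional to $\varphi_y^2/\varphi$. The power change of unknown $\tilde\varphi:=\varphi^{\,\delta}$ with $\delta=\tfrac{1-\gamma+\varrho^2\gamma}{1-\gamma}$ is chosen precisely so that the $\tilde\varphi_y^2$ contributions coming from $\tfrac12 a^2\varphi_{yy}$ and from the $\varphi_y^2/\varphi$ term cancel; after dividing by the strictly positive factor $\tfrac1\gamma\,\delta^{-1}\tilde\varphi^{1/\delta-1}$ one is left with the linear parabolic equation \eqref{IncompleteMarketProb_EqFormulation_LinParPDE}, again with terminal value $\equiv1$. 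This equation has bounded continuous coefficients and a positive, bounded zeroth-order term, so standard linear theory (or the Feynman--Kac representation $\tilde\varphi(y,t)=\mathbb{E}[\exp(\int_t^T c(Y_s)\,ds)\mid Y_t=y]$ against the modified diffusion) provides a unique bounded solution $\tilde\varphi\ge 1$, classical where $a$ is nondegenerate and a viscosity solution in general, together with the Lipschitz-in-$y$ estimate that bounds $\varphi_y/\varphi$ and was used above.

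Finally, for the rigorous identification and uniqueness I would invoke the standard theory for controlled diffusions with Lipschitz coefficients and compact control set (as in \cite{Zariphopoulou_UnhedgeableRisk}, or \cite{Fleming_Soner_ControlledMarkovProcesses}): the function defined by the supremum in \eqref{Example_InvestorValueFunction} satisfies the DPP and is therefore a continuous viscosity solution (cf.\,\cite{UsersGuide_ViscositySols}) of the HJB equation, and a comparison principle for that equation — equivalently, uniqueness for the linear PDE transported back through the monotone map $\tilde\varphi\mapsto\tilde\varphi^{1/\delta}$ — yields uniqueness; hence $\phi=\tfrac{x^\gamma}{\gamma}\varphi$ with $\varphi$ the unique solution of \eqref{IncompleteMarketProb_EqFormulation_HJBEq} and $\tilde\varphi$ the unique solution of \eqref{IncompleteMarketProb_EqFormulation_LinParPDE}. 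I expect the main obstacle to be the rigorous passage at the viscosity level: a naive verification argument requires smoothness of $\phi$, so one must either prove enough interior regularity of $\tilde\varphi$ from the linear equation and then run verification, or argue directly with test functions that the ansatz-reduced equation and the original HJB equation share the same viscosity solutions. Secondary technical points are the a priori gradient bound underpinning the compactness of $\mathbf{U}$ and the purely computational (but error-prone) check that the exponent $\delta$ indeed annihilates the quadratic gradient term and reproduces all coefficients of \eqref{IncompleteMarketProb_EqFormulation_LinParPDE}.
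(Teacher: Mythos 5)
Your proposal is correct and follows essentially the same route as the paper: the paper's own ``proof'' is merely a citation to \cite{Zariphopoulou_UnhedgeableRisk} (for the homothetic ansatz and distortion transformation) and \cite{Barles_ErrorBoundsMonotoneApproxSchemes} (for the viscosity-solution existence/uniqueness), and your sketch reconstructs exactly that argument. Your computations check out — the ansatz $\phi=\frac{x^\gamma}{\gamma}\varphi$ reproduces \eqref{IncompleteMarketProb_EqFormulation_HJBEq}, the maximiser $u^*$ is as you state, and the exponent $\delta=\frac{1-\gamma+\varrho^2\gamma}{1-\gamma}$ indeed cancels the quadratic gradient term and yields \eqref{IncompleteMarketProb_EqFormulation_LinParPDE} — while the viscosity-level verification and the gradient bound you flag as the remaining technical work are precisely what is delegated to the cited references.
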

\begin{proof}
The main result can be found in \cite{Zariphopoulou_UnhedgeableRisk}. See \cite{Barles_ErrorBoundsMonotoneApproxSchemes} for details
on the existence of the viscosity solutions.
\end{proof}

% In \eqref{IncompleteMarketProb_EqFormulation_HJBEq}, we make the assumption of $\mathbf{U}$ being a compact set for two reasons.
% First, for the technical reason that there is no viscosity solution theory established for unbounded sets $\mathbf{U}$. Second, for every time $t\in[0,T)$, the maximum in \eqref{IncompleteMarketProb_EqFormulation_HJBEq} is assumed at $u=\pi^*_t$\,, where $(\pi^*_t)_{0\leq t\leq T}$ is the investor's optimal trading policy (cf.\,\cite{Zariphopoulou_UnhedgeableRisk}), which should not reach infinity in a meaningful financial model.\bigskip

In \eqref{IncompleteMarketProb_EqFormulation_HJBEq}, we make the assumption of $\mathbf{U}$ being a compact set since, for every time $t\in[0,T)$, the maximum in \eqref{IncompleteMarketProb_EqFormulation_HJBEq} is assumed at $u=\pi^*_t$\,, where $(\pi^*_t)_{0\leq t\leq T}$ is the investor's optimal trading policy (cf.\,\cite{Zariphopoulou_UnhedgeableRisk}), which should not reach infinity in a meaningful financial model.\bigskip

Clearly, from Theorem \ref{IncompleteMarketProb_EqFormulation}, to find $\phi$, we need to compute $\varphi$ or $\tilde{\varphi}$ and solve either equation \eqref{IncompleteMarketProb_EqFormulation_LinParPDE} or \eqref{IncompleteMarketProb_EqFormulation_HJBEq}; we have deliberately chosen a problem which can be linearised such that we can obtain a reference solution by standard methods. In the next few sections, we will select parameters and present and compare several approaches of computing $\phi$.

\subsubsection{Choosing Model Parameters and Functions}
We set $r=0.3$, $\mu=0.7$, $\varrho = -0.2$, $\gamma = 0.5$, and $T=1$. Furthermore, we introduce $y_{min} :=\kappa := 0.1$ and $y_{max}:=1$, and, for $y\in[y_{min}\,,y_{max}]$, we use
\begin{align*}
a(y) =& -2.5(y-0.5-0.5\kappa)^2 + 2.5(-0.5+0.5\kappa)^2,\\
b(y) =& -y+0.55,\\
\text{and}\quad\sigma(y) =&\ y.
\end{align*}
Functions $a$, $b$ and $\sigma$ are shown in Figure \ref{fig:Fun_a_b_sigma}; they satisfy the technical conditions listed in the previous section and guarantee $(Y_t)_{t\geq 0}\subset [\kappa,1]$.

\begin{figure}[ht]
\centering
\includegraphics[width=8cm,height=8cm]{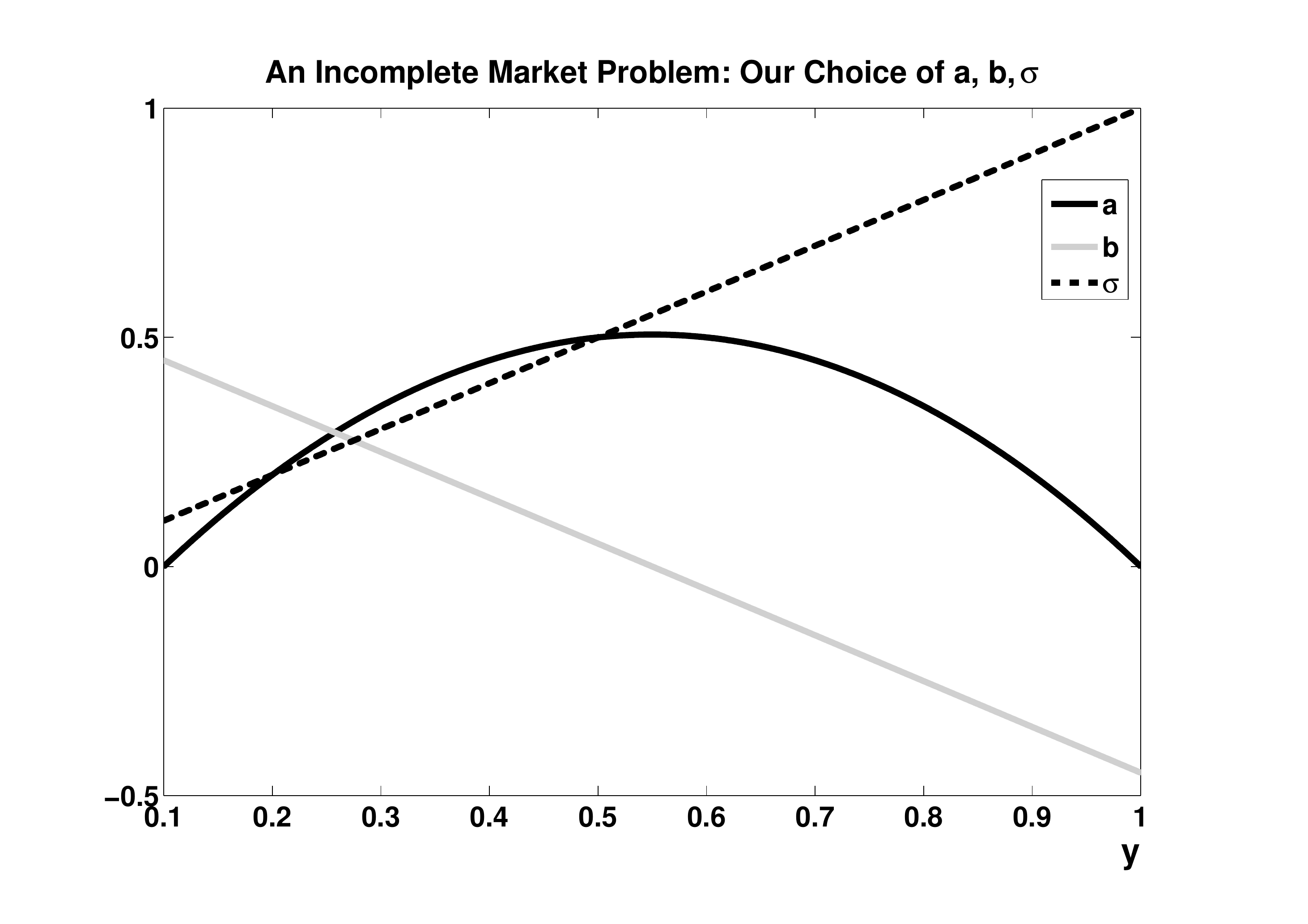}%56
\caption{The functions $a$, $b$ and $\sigma$ as chosen for the volatility process in the investment problem introduced in Section \ref{Section_Examples_Indifference}. As can easily be observed, whenever $y$ approaches $\kappa$ (or $1$), $a(y)$ goes to zero and $b(y)$ is positive (negative); this guarantees that the diffusion $dY_t = b(Y_t)\,dt + a(Y_t)\,dW^1_t$ stays in $[\kappa,1]$. Additionally, when (numerically) solving the two equations \eqref{IncompleteMarketProb_EqFormulation_LinParPDE} and \eqref{IncompleteMarketProb_EqFormulation_HJBEq} on the interval $[\kappa,1]$, outgoing characteristics (cf.\,\cite{Strikwerda_FinDiff_PDE}) replace our boundary conditions at $\kappa$ and $1$.}
\label{fig:Fun_a_b_sigma}
\end{figure}

\subsubsection{Discretisation of the Continuous Equations}\label{DiscretisationsOfEquations_ContControl}
Numerically, we solve equations \eqref{IncompleteMarketProb_EqFormulation_LinParPDE} and \eqref{IncompleteMarketProb_EqFormulation_HJBEq} backwards in time, starting at expiry $T$, on a grid
$\{ (ih+\kappa,jk) : 0\leq i\leq N,\ 0\leq j\leq M\}$, where $h:= (1-\kappa)/N$ and $k:=T/M$. We perform a fully implicit finite difference discretisation, using one-sided differences for all first derivatives (including the time derivative) and central differences for all second derivatives. (For a general overview of basic finite difference concepts for PDEs, e.g.\ see \cite{Seydel_ToolsCompFinance}.) In particular, when discretising $\mathfrak{d}\varphi_y$ (or $\mathfrak{d}\tilde{\varphi}_y$), where $\mathfrak{d}=\mathfrak{d}(y,u)$ denotes the combined coefficient of the first $y$-derivative in \eqref{IncompleteMarketProb_EqFormulation_LinParPDE} (or \eqref{IncompleteMarketProb_EqFormulation_HJBEq}), we switch between left-sided and right-sided differences in accordance with a positive or negative sign of the coefficient $\mathfrak{d}$; this way, we can guarantee the following two properties.
\begin{itemize}
\item The tridiagonal discretisation matrix implied by our fully implicit scheme has positive entries on the diagonal and non-positive entries on the upper/lower diagonals.
\item Since $a(\kappa)=a(1)=0$ and $b(\kappa)>0>b(1)$, the boundary conditions at $\kappa$ and $1$ are replaced by finite differences pointing inwards. (cf.\,\cite{Strikwerda_FinDiff_PDE}).
\end{itemize}
Proceeding as just described, the linear parabolic PDE in \eqref{IncompleteMarketProb_EqFormulation_LinParPDE} is approximated by a simple linear system of equations. Furthermore, taking $\mathbf{U}$ to be $[-l,l]$, with $l=150$, our discretisation of equation \eqref{IncompleteMarketProb_EqFormulation_HJBEq} matches Problem \ref{DiscreteProbDef}, with \eqref{DiscreteProbDef_ConFunDef2} and \eqref{DiscreteProbDef_ConFunDef1} satisfying all assumptions.

\begin{remark}
Based on our choice of functions $a$ and $b$ (cf.\,Figure \ref{fig:Fun_a_b_sigma}), the diffusion $(Y_t)_{0\leq t\leq T}$
will always stay in $[\kappa,1]$. Conceptually, this means that no `outside' information -- like Dirichlet boundary conditions -- is required for the PDE to be completely specified on the interval $[\kappa,1]$, since the flow of information can be thought of as coming from `within'. Mathematically, this means that, by taking $a(y):=0$, $y\in\mathbb{R}\backslash [\kappa,1]$, and using one-sided inwards pointing finite difference stencils at $\kappa$ and 1, we obtain a consistent (and monotone) discretisation scheme without requiring any Dirichlet boundary conditions.
\end{remark}

\begin{remark}\label{ConvergenceRemark_HJBExample}
It is generally non-trivial to prove convergence of finite difference schemes applied to a (possibly nonlinear) PDE for which only viscosity solutions can be shown to exist. The standard reference is \cite{BarlesMainArticle}, where -- loosely speaking -- it is shown that every stable and monotone discretisation converges if the equation satisfies a strong comparison principle (cf.\,\cite{UsersGuide_ViscositySols}); other (more specific) approaches include \cite{Krylov_RateOfConcergence_VariableCoeff,Barles_OntheConvergenceRate_ApproxHJBEq,Jakobsen_OnRateConvergence_ApproxSchemes_BellmanEq,Krylov_RateOfConvergence_LipschitzCoeff,Barles_ErrorBoundsMonotoneApproxSchemes,Jakobsen_ErrorEstimates_BellmanEquations_ControlledJumpDiffusion}.
For the current example, consistency is straightforward to prove (e.g.\,cf.\,\cite{WitteReisinger_PenaltyScheme_DiscreteControlledHJBEquations}), stability can be shown following \cite{Forsyth_Controlled_HJB_PDEs_Finance} since we have what they call a ``positive coefficient discretisation'', and a strong
comparison result can be found in \cite{Barles_ErrorBoundsMonotoneApproxSchemes}; hence, altogether, the results
of \cite{BarlesMainArticle} are applicable and convergence to the unique viscosity solutions
of \eqref{IncompleteMarketProb_EqFormulation_LinParPDE} and \eqref{IncompleteMarketProb_EqFormulation_HJBEq} can be guaranteed.
% A strong comparison result can be obtained if we assume the existence of appropriate Dirichlet boundary conditions (cf.\,\cite{UsersGuide_ViscositySols,SebChaumont_StronCompPrinc}) or work with growth constraints of the solution (cf.\,\cite{Amadori_NonlinINtegroDiffProb_OptionPricing_ViscositySolApproach,Lio_Ley_UniqenessResult_SecondOrderBellmanIsaacsEquations_QuadraticGrowthAssumptions,Amadori_ObstacleProblem_NonlinearIntegroDifferentialEq});
% growth constraints of $\phi$ and $\tilde{\varphi}$ in Theorem \ref{IncompleteMarketProb_EqFormulation} can be found in \cite{Zariphopoulou_UnhedgeableRisk}. At this point, we will not pursue the question of convergence any further since the main focus of this paper is on the solution of the nonlinear discrete systems rather than on the study of the incomplete market problem.
\end{remark}
% \cite{Krylov_RateOfConcergence_VariableCoeff} %2000
% \cite{Barles_OntheConvergenceRate_ApproxHJBEq} %2002
% \cite{Jakobsen_OnRateConvergence_ApproxSchemes_BellmanEq} %2003
% \cite{Krylov_RateOfConvergence_LipschitzCoeff} %2005
% \cite{Barles_ErrorBoundsMonotoneApproxSchemes} %2007
% \cite{Jakobsen_ErrorEstimates_BellmanEquations_ControlledJumpDiffusion} %2008

\subsubsection{Solution of the Discrete Systems}\label{SolvingDiscrSystems_ContControl}

In this section, we will compare the following three ways of numerically solving the incomplete market problem represented by the equations in Theorem \ref{IncompleteMarketProb_EqFormulation}. All computations are done in Matlab.
\begin{itemize}
\item We solve the linear system of equations resulting -- for every time step -- from a discretisation of the linear parabolic PDE in \eqref{IncompleteMarketProb_EqFormulation_LinParPDE}.
\item We solve Problem \ref{DiscreteProbDef} (corresponding to one time step of a fully implicit discretisation of the non-linear parabolic PDE in \eqref{IncompleteMarketProb_EqFormulation_HJBEq}) by
\begin{itemize}
\item the penalty method devised in Sections \ref{Section_PenalisationDiscrProb} and \ref{SectionModNewton} of this paper, and by
\item the method of policy iteration.
\end{itemize}
\end{itemize}
The use of policy iteration has been studied in \cite{Forsyth_Controlled_HJB_PDEs_Finance,Bokanowski_Howard_Algorithm} and is briefly summarised in Appendix \ref{Appendix_PolicyIteration_ContinuouslyControlled}.
Following Section \ref{ContinuousControl_InPractice}, we
approximate $\mathbf{U}=[-l,l]=[-150,150]$ by $\widetilde{\mathbf{U}}:=\{-150 + \frac{3r}{10} : r = 0,1,\ldots,1000\}$, thereby discretising
$\mathbf{U}$ by using a very fine grid of 1001 points. (Effectively, in the notation of Remark \ref{Remark_StabilityInTheControl}, we approximate $(A_u)_{u\in\mathbf{U}}$ and $(b_u)_{u\in\mathbf{U}}$ by piecewise constant functions.)
Numerically, when computing candidate solutions to
equations \eqref{DiscreteProbDef_Eq1} and \eqref{PenDiscreteProbDef_Eq1} by policy iteration and Algorithm \ref{ModifiedNewtonAlg}, respectively, we terminate the iterations according to the following two checks of accuracy, using $tol=$1e-08.
\begin{itemize}
\item For \eqref{DiscreteProbDef_Eq1}, we terminate if our candidate solution $x^{n}\in\mathbb{R}^N$ satisfies
\begin{equation}
\frac{\|A_{\tilde{u}^{sup}(x^n)}\,x^n-b_{\tilde{u}^{sup}(x^n)}\|_{\infty}}{\|b_{\tilde{u}^{sup}(x^n)}\|_{\infty}}\leq tol,\label{SolvingDiscrSystems_ContControl_Eq0.9}
%\min\{A_u\,x-b_u : u\in\mathbf{U}\}=0.
\end{equation}
where $\tilde{u}^{sup}(x^n)$ satisfies $A_{\tilde{u}^{sup}(x^n)}\,x^n-b_{\tilde{u}^{sup}(x^n)} = \min\{A_{\tilde{u}}\,x^n-b_{\tilde{u}} : \tilde{u}\in\widetilde{\mathbf{U}}\}$.
\item For \eqref{PenDiscreteProbDef_Eq1}, we terminate if our candidate solution $x^{n}_{\rho}\in\mathbb{R}^N$ satisfies
\begin{equation}
\frac{\|(A_{u_0}\,x^n_{\rho} - b_{u_0})-\rho\max\{b_{\tilde{u}^{sup}_\rho(x^n_{\rho})}-A_{\tilde{u}^{sup}_\rho(x^n_{\rho})}\,x^n_{\rho}\,,0\}\|_{\infty}}{\|b_{u_0} + \rho\,b^+_{\tilde{u}^{sup}_{\rho}(x^n_{\rho})}\|_{\infty}}\leq tol,\label{SolvingDiscrSystems_ContControl_Eq1}
\end{equation}
where $\tilde{u}^{sup}_{\rho}(x^n)$ satisfies 
$b_{\tilde{u}^{sup}_\rho(x^n)}-A_{\tilde{u}^{sup}_\rho(x^n)}\,x^n_{\rho} = \max\{b_{\tilde{u}_\rho}-A_{\tilde{u}_\rho}\,x^n_{\rho} : \tilde{u}_\rho\in\widetilde{\mathbf{U}}\}$ and
\begin{equation*}
b^+_{\tilde{u}^{sup}_{\rho}(x^n_{\rho})}:= 
\begin{cases} b_{\tilde{u}^{sup}_{\rho}(x^n_{\rho})} & \text{if $b_{\tilde{u}^{sup}_\rho(x^n)}-A_{\tilde{u}^{sup}_\rho(x^n)}\,x^n_{\rho}>0$,}
\\
0 &\text{else.}
\end{cases}
\end{equation*}
\end{itemize}\bigskip

% Throughout, we use $u_0=-l=-150$, which is a natural first choice when not having any a priori knowledge about where to find
% the optimal control. Interestingly, in Figure \ref{fig:u0_Dependence}, we see that, for $u_0$ being close to the optimal control, the quality approximation improves greatly; i.e., having a good initial guess of where to look for the optimal control can be very beneficial.\bigskip

Figure \ref{fig:u0_Dependence} shows the impact of the a priori unspecified parameter $u_0$ in Problem \ref{PenDiscreteProbDef} on the penalisation error. We first remark that, if $u_0$ could be chosen to be the optimal control of the discretised problem, the penalisation error in Problem \ref{PenDiscreteProbDef} would be identical to zero by construction. Now, clearly, the optimal control is unknown, and generally a function of the state variable and time, and, therefore, a constant value $u_0$ generally gives a non-zero (in fact, negative) penalisation error, which we know to converge to zero of first order in $1/\rho$. This is the underlying convergence mechanism of the proposed method and does not require a cunning choice of $u_0$. Figure \ref{fig:u0_Dependence} does show, however, that the penalisation error for fixed $\rho$ can be reduced by diligent choice of $u_0$. It is thereby sufficient to choose $u_0$ of the same order of magnitude as the optimal control. One can thus take advantage of a priori knowledge of the approximate control size. If such an estimate is not available, one might first determine a rough approximation by producing a crude version of Figure \ref{fig:u0_Dependence} on a coarse mesh -- coarse in parameter space as well as time and state space -- which is computationally cheap, and pick $u_0$ accordingly. We do not take advantage of this information in the following computations, and, throughout, we use $u_0 = -l =-150$, which appears to be the worst-case choice.
We also tested the impact of $u_0$ on the Newton method, and found the required number of iterations virtually unaffected in all settings.\bigskip

% , which performs identically to the other obvious choices, $u_0=l=150$ and $u_0=0$.
% Arbitrarily trying other values for $u_0\in[-l,l]$ leads to, in some rare cases, changes of size $\pm 1\%$ in Table \ref{tab:HJB_NumofIterations}, but usually leaves the results entirely unchanged.
% Also, increasing $l$ has no effect on the quality of our approximation.\bigskip

We use the numerical solution of the linear parabolic PDE \eqref{IncompleteMarketProb_EqFormulation_LinParPDE} as a \textit{reference solution} for the incomplete market problem. For $M=N=200$ and $\rho = 1e06$, measured in the maximum norm, the difference between the numerical solution of \eqref{IncompleteMarketProb_EqFormulation_LinParPDE} and the penalty approximation \eqref{PenDiscreteProbDef_Eq1} is 2e-03, and the difference between the penalty approximation \eqref{PenDiscreteProbDef_Eq1}
and the policy iteration solution of \eqref{DiscreteProbDef_Eq1} is 2e-04.\bigskip

The results of our numerical tests are summarised in Figures \ref{fig:HJB_ExactSol_and_PenSol} and \ref{fig:HJB_PenaltyConvergence} and in Tables \ref{tab:HJB_NumofIterations} and \ref{tab:HJB_CompTimes}.\bigskip

In Figure \ref{fig:HJB_ExactSol_and_PenSol}, we see the penalty approximation for $\rho=1e03$. Given that the penalty parameter is still relatively small, the penalty approximation is still below the reference solution, but, clearly, the curves are similarly shaped already.
In Figure \ref{fig:HJB_PenaltyConvergence}, we measure rate of the convergence in $\rho$; more precisely, for different sizes of $\rho$, starting out with the $t=k$ value of the reference solution (i.e. the value at the penultimate time step), we compute a time zero value using the penalty scheme (thus solving exactly one discrete LCP) and compare it to the time zero value of the reference solution. As expected, based on Theorem \ref{PenaltyConvergenceToTrueSol_ErrorEstimate}, the convergence in $\rho$ is of first order.
In Table \ref{tab:HJB_NumofIterations}, we see the number of iterations needed by Algorithm \ref{ModifiedNewtonAlg} and policy
iteration, averaged over all time steps; for the two schemes, the numbers are almost identical, and -- in both cases -- we never need more than two iterations to reach the desired accuracy $tol$. Finally, in Table \ref{tab:HJB_CompTimes}, we see the computation times for
the two schemes, which, as is to be expected based on the iteration numbers, are virtually the same.

\begin{figure}[t]
\centering
\includegraphics[width=8cm,height=8cm]{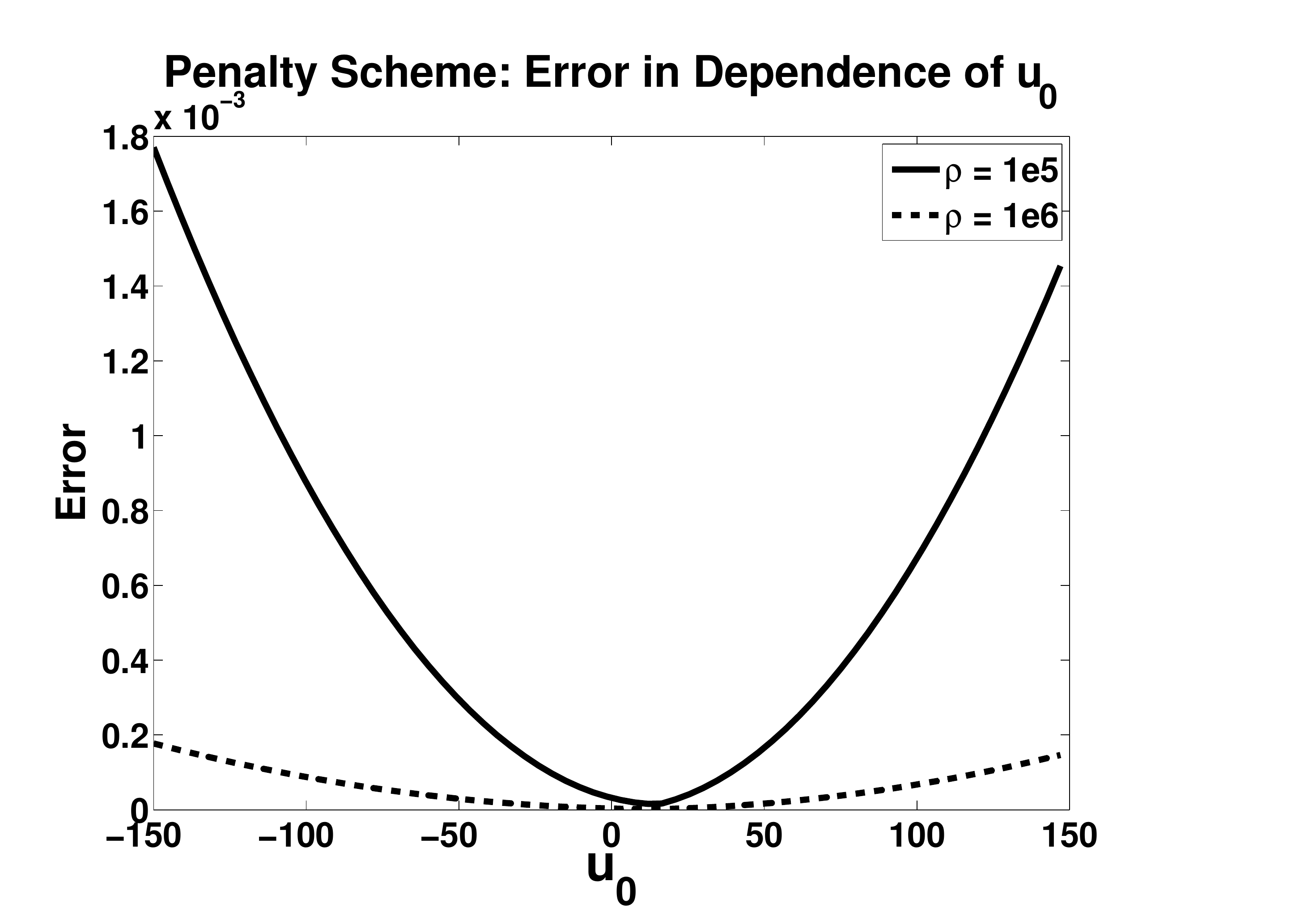}%56
\caption{The incomplete market investment problem of Section \ref{Section_Examples_Indifference}.
For $M=N=200$ and different choices of $u_0$\,, we see the difference between the solution computed by the penalty
scheme and the policy iteration method. The error is measured in the max-norm. We observe that, for the quality of the penalty approximation, it is advantageous to pick $u_0$ close to the optimal control value.}
\label{fig:u0_Dependence}
\end{figure}

\begin{figure}[ht]
\centering
\includegraphics[width=8cm,height=8cm]{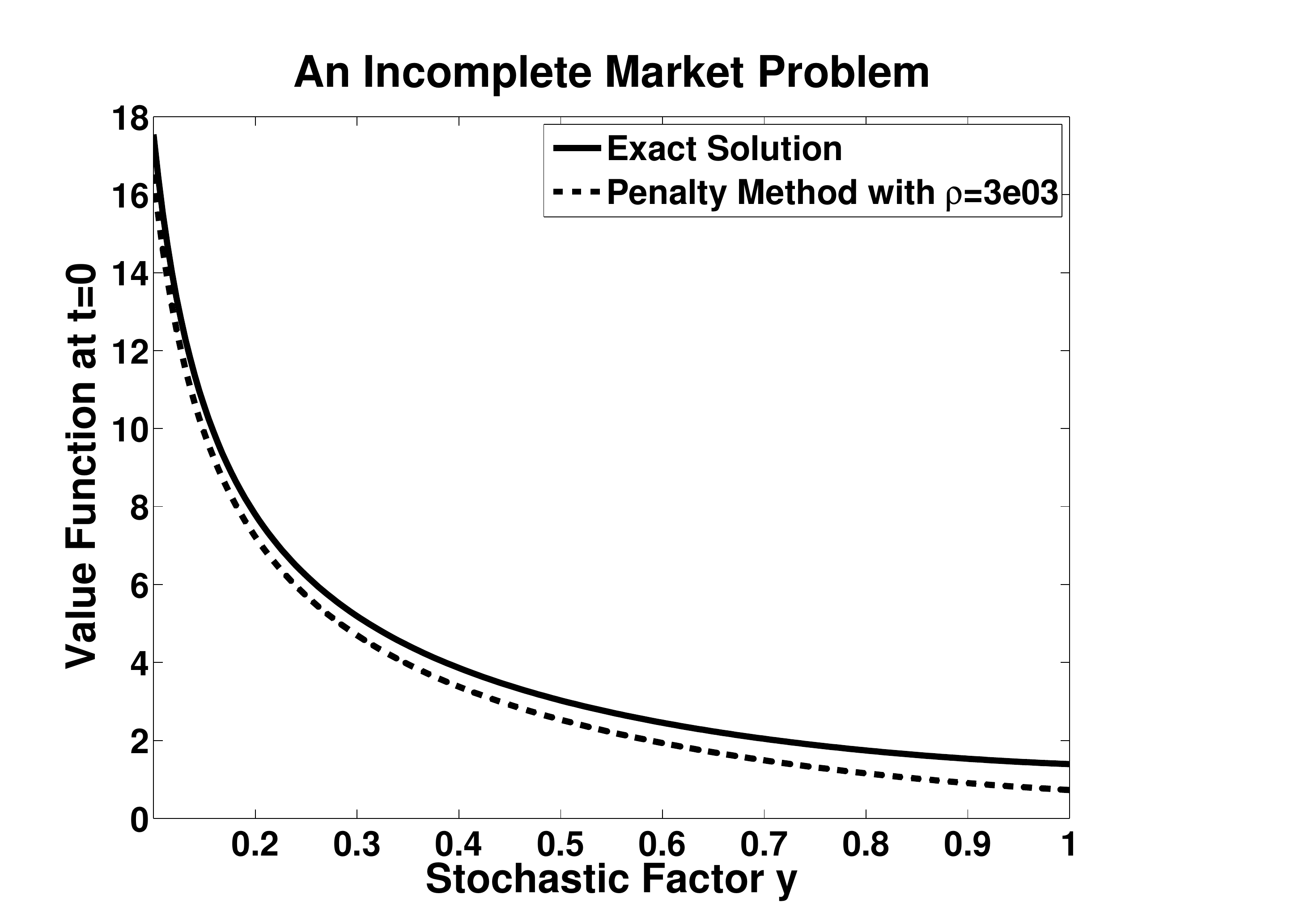}%56
\caption{The incomplete market investment problem of Section \ref{Section_Examples_Indifference}. For $M=N=200$, we see the solution to \eqref{IncompleteMarketProb_EqFormulation_LinParPDE}, referred to as ``reference solution'' since it avoids the treatment of non-linearities, and the solution to the penalised equation \eqref{PenDiscreteProbDef_Eq1} for $\rho=1e03$; for the current choice of $\rho$, the penalty approximation is still coarse.}
\label{fig:HJB_ExactSol_and_PenSol}
\end{figure}

\begin{figure}[ht]
\centering
\includegraphics[width=8cm,height=8cm]{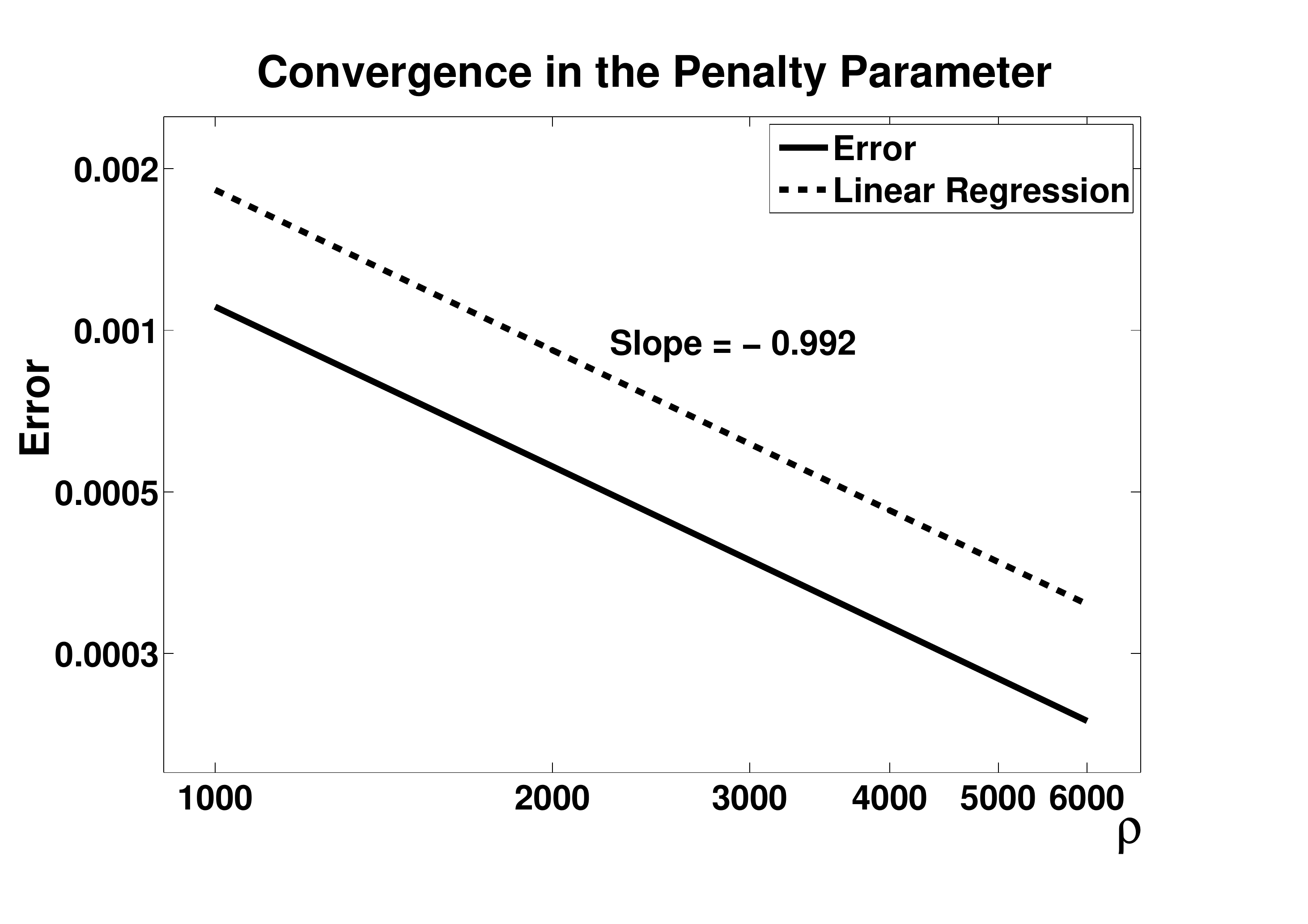}%56
\caption{The incomplete market investment problem of Section \ref{Section_Examples_Indifference}. For M=N=200, we measure the speed of convergence in the penalty parameter $\rho$. The error is measured in the $max$-norm. The plot is log-log, and we observe
a convergence rate of 0.992, close to one, confirming the results of Theorem \ref{PenaltyConvergenceToTrueSol_ErrorEstimate}.}
\label{fig:HJB_PenaltyConvergence}
\end{figure}

\begin{table}[t]
\begin{tabular}{|l|c|c|c|c|c|}
%   \multicolumn{5}{l}{Policy Iteration} \\
\hline
\textit{Policy Iteration} & $n=1$ & $n=2$\\
  \hline
$M$, $N=50$ & 6$\%$ & 94$\%$\\
  \hline
$M$, $N=200$ & 11$\%$ & 89$\%$\\
\hline
$M=200$, $N=50$ & 53$\%$ & 47$\%$\\
\hline
$M=50$, $N=200$ & - & 100$\%$\\
\hline
%\multicolumn{5}{l}{Penalty Method ($\rho = 4000$)} \\
\hline
\textit{Penalty Method} $(\rho = 4e03)$ & $n=1$ & $n=2$\\
  \hline
$M$, $N=50$ & 8$\%$ & 92$\%$ \\
  \hline
$M$, $N=200$ & 13$\%$ & 87$\%$ \\
\hline
$M=200$, $N=50$ & 49.5$\%$ & 50.5$\%$ \\
\hline$M=50$, $N=200$ & - & 100$\%$ \\
\hline
\hline
\textit{Penalty Method} $(\rho = 1e06)$ & $n=1$ & $n=2$\\
  \hline
$M$, $N=50$ & 6$\%$ & 94$\%$ \\
  \hline
$M$, $N=200$ & 11$\%$ & 89$\%$ \\
  \hline
$M=200$, $N=50$ & 55$\%$ & 45$\%$ \\
  \hline
$M=50$, $N=200$ & - & 100$\%$ \\
  \hline
\end{tabular}
\caption{The incomplete market investment problem of Section \ref{Section_Examples_Indifference}. For different time and space grids, we see the number of iterations needed
by penalty approximation -- or, more precisely, by Algorithm \ref{ModifiedNewtonAlg} when solving \eqref{PenDiscreteProbDef_Eq1} -- and policy iteration. For the different schemes, the numbers are very similar, and, generally, we never need more than two steps.}
\label{tab:HJB_NumofIterations}
\end{table}

\begin{table}[t]
\begin{tabular}{|c|c|c|c|}
%   \multicolumn{5}{l}{Policy Iteration} \\
\hline
\textit{Grid Size} & \textit{Policy} & \textit{Penalty} $(\rho = 4e03)$ & \textit{Penalty} $(\rho = 1e06)$ \\
  \hline
$M$, $N=50$ & 1.08$s$ & 1.08$s$ & 1.09$s$ \\
  \hline
$M$, $N=200$ & 35.07$s$ & 34.76$s$ & 35.16$s$ \\
\hline
$M=200$, $N=50$ & 2.74$s$ & 2.82$s$ & 2.76$s$ \\
\hline
$M=50$, $N=200$ & 10.89$s$ & 10.88$s$ & 10.91$s$ \\
\hline
\end{tabular}
\caption{The incomplete market investment problem of Section \ref{Section_Examples_Indifference}. For different time and space grids, the computation times needed
by penalty approximation and policy iteration. In all cases, the computational effort of the two schemes is very similar.}
\label{tab:HJB_CompTimes}
\end{table}

\subsection{Example: Early Exercise Options in an Incomplete Market}\label{Section_Examples_EarlyExercise}

In this section, we value an early exercise contract in an incomplete market, in which the incompleteness stems from the fact that
the asset on which the early exercise contracts are written is not traded; the example in its entirety is taken from
\cite{Oberman_Thalia_EarlyExerciseInbcompleteMarkets}, and we use it to demonstrate that the arising non-linear equations
can be solved by a fully implicit finite difference scheme when using the results of Section \ref{Penalise_Discr_Obst_prob}.\bigskip

Let $b$, $a:\mathbb{R}\to\mathbb{R}$ be bounded and globally Lipschitz functions.
Let $T>0$ be some finite time horizon.
Let $\mu$, $\sigma>0$ and $S_0>0$, $Y_0\in\mathbb{R}$.
Suppose we have a traded asset price process $(S_t)_{0\leq t\leq T}$ and a non-traded asset price process $(Y_t)_{0\leq t\leq T}$ solving, respectively,
\begin{align*}
dS_t =&\ \mu S_t\,dt + \sigma S_t\,dW^1_t\\
\text{and}\quad dY_t =&\ b(Y_t)\,dt + a(Y_t)\,dW^2_t\,,
\end{align*}
where $(W^i_t)_{0\leq t\leq T}$\,, $i\in\{1,2\}$, are two Brownian motions defined on a probability space $(\Omega,\mathcal{F},P)$ with a correlation coefficient $\varrho\in[-1,1]$. Additionally, we assume the existence of a riskless bond with interest rate $r=0$.\bigskip

Similar to the previous example, we consider an investor who can invest in the traded asset and in the bond.
We suppose that the investor has initial wealth $X_0\in\mathbb{R}$ and that he may rebalance his portfolio $X_t=\pi^0_t+\pi_t$ at any time $t\in[0,T]$; here, $\pi^0_t$ and $\pi_t$ denote the amounts invested, respectively, in the bond and in the stock.
The investor's wealth process solves
\begin{equation*}
dX_t = \mu\pi_t\,dt + \sigma\pi_t\,dW^1_t\,,\quad t\in[0,T].
\end{equation*}
We take the investor's utility function to be of exponential type and given by
\begin{equation*}
U(x)= -e^{-\gamma x},\quad x\in\mathbb{R},
\end{equation*}
with risk aversion parameter $\gamma>0$. Now, we suppose the investor holds an early exercise contract with payoff $P(y)$, $y\in\mathbb{R}$, on the non-traded asset, and we would like to find the indifference price (cf.\,\cite{Oberman_Thalia_EarlyExerciseInbcompleteMarkets}) of the instrument; we cite the following result.

\begin{prop}\label{IncompleteMarketProb_EarlyExercise_EqFormulation}
The buyer's early exercise indifference price $\psi(y,t)$, where $(y,t)\in\mathbb{R}\times [0,T]$, is the unique bounded viscosity solution to
\begin{equation}
\min\big\{-\psi_t -\mathcal{L}^b\psi + \frac{1}{2}\gamma(1-\varrho^2)a^2(y)\psi^2_y\,, \psi-P(y) \big\} = 0,\label{IncompleteMarketProb_EarlyExercise_EqFormulation_Eq1}
\end{equation}
where $\psi(\cdot,T)=P(\cdot)$ and
\begin{equation*}
\mathcal{L}^b\psi := \frac{1}{2}a^2(y)\psi_{yy} + \big(b(y)-\varrho\frac{\mu}{\sigma}a(y)\big)\psi_y\,.
\end{equation*}
\end{prop}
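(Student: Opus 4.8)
The plan is to proceed exactly as for Proposition \ref{IncompleteMarketProb_EqFormulation}: the characterisation of the buyer's early exercise indifference price as the solution of \eqref{IncompleteMarketProb_EarlyExercise_EqFormulation_Eq1} is the main result of \cite{Oberman_Thalia_EarlyExerciseInbcompleteMarkets}, so for that part I would cite it directly, and then treat existence, boundedness, and uniqueness of the bounded viscosity solution as the substantive points to be addressed or pointed to.

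First I would recall the underlying construction. With exponential utility $U(x)=-e^{-\gamma x}$ and zero interest rate, the buyer's indifference price $\psi$ is the difference of two certainty-equivalent value functions: the optimal expected utility of terminal wealth when holding the early exercise claim (an optimal stopping problem, because of the American feature) minus the same quantity without the claim. The exponential form lets one factor out the wealth variable $X_t$ from both value functions, and applying the standard logarithmic (distortion) transformation to the resulting Hamilton--Jacobi--Bellman variational inequality produces \eqref{IncompleteMarketProb_EarlyExercise_EqFormulation_Eq1}: the second-order operator $\mathcal{L}^b$ appears with the drift correction $-\varrho\frac{\mu}{\sigma}a(y)$ coming from the minimal-entropy equivalent martingale measure, the nonlinearity of the exponential together with the unhedgeable component $(1-\varrho^2)$ of the noise yields the quadratic gradient penalty $\frac{1}{2}\gamma(1-\varrho^2)a^2(y)\psi_y^2$, and the American exercise feature becomes the obstacle constraint $\psi\geq P(y)$, i.e.\ the outer ``$\min$''. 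Boundedness of $\psi$ would follow from boundedness of the payoff $P$ via a stochastic-representation / verification argument, since $\psi$ is squeezed between $P$ and the (bounded) sub- and super-replication bounds; the dynamic programming principle for the associated optimal stopping problem then shows that the upper and lower semicontinuous envelopes of the value function are, respectively, a viscosity subsolution and supersolution of \eqref{IncompleteMarketProb_EarlyExercise_EqFormulation_Eq1}, using only continuity of the coefficients, which holds here because $a$ and $b$ are bounded and Lipschitz.

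The main obstacle is the \emph{comparison principle} for \eqref{IncompleteMarketProb_EarlyExercise_EqFormulation_Eq1}, from which uniqueness, continuity of $\psi$, and the identification of $\psi$ as \emph{the} bounded viscosity solution all follow. The difficulty is that this is a degenerate parabolic obstacle problem ($a^2(y)$ may vanish at the boundary of the relevant interval) whose Hamiltonian grows quadratically in $\psi_y$, so the Crandall--Ishii--Lions doubling-of-variables argument does not apply verbatim. The quadratic gradient term must be removed first, typically by a Kruzhkov-type exponential change of unknown $\tilde{\psi}=1-e^{-\lambda\psi}$ for a suitable $\lambda>0$, which linearises the gradient nonlinearity while modifying the zeroth-order and obstacle terms in a monotone way; one may then invoke a standard comparison result for obstacle problems with Lipschitz, possibly degenerate, second-order coefficients. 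In keeping with the treatment of Proposition \ref{IncompleteMarketProb_EqFormulation}, I would in the paper cite \cite{Oberman_Thalia_EarlyExerciseInbcompleteMarkets} for the equation together with its uniqueness statement and refer to \cite{UsersGuide_ViscositySols, BarlesMainArticle, Barles_ErrorBoundsMonotoneApproxSchemes} for the comparison-principle machinery that underlies the final conclusion.
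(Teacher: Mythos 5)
Your proposal takes essentially the same route as the paper: the result is attributed to \cite{Oberman_Thalia_EarlyExerciseInbcompleteMarkets}, with the viscosity-solution existence and comparison machinery delegated to the literature (the paper cites \cite{Barles_ErrorBoundsMonotoneApproxSchemes} for exactly this), so your added sketch of the distortion transformation and the Kruzhkov-type argument is reasonable background but goes beyond what the paper itself records.
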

\begin{proof}
The main result can be found in \cite{Oberman_Thalia_EarlyExerciseInbcompleteMarkets}. See \cite{Barles_ErrorBoundsMonotoneApproxSchemes} for details
on the existence of the viscosity solution.
\end{proof}

It can easily be shown that $\psi^2_y = \max_{u\in\mathbb{R}}\{2u\psi_y-u^2\}$, and, if we assume $\psi_y$ to be bounded, 
\eqref{IncompleteMarketProb_EarlyExercise_EqFormulation_Eq1} can be rewritten as
\begin{equation}
\min\big\{ \max\{\mathcal{L}^b_u\psi : u\in\mathbf{U}\}, \psi-P(y) \big\} = 0,\label{IncompleteMarketProb_EarlyExercise_EqFormulation_Eq2}
\end{equation}
where $\mathbf{U}\subset\mathbb{R}$ is a suitably chosen compact set and, for $u\in\mathbf{U}$, we define
\begin{equation*}
\mathcal{L}_u^b\psi := -\psi_t - \frac{1}{2}a^2(y)\psi_{yy} - \big(b(y)-\varrho\frac{\mu}{\sigma}a(y)\big)\psi_y
+ \frac{1}{2}\gamma(1-\varrho^2)a^2(y)(2u\psi_y-u^2).
\end{equation*}

Hence, to compute the early exercise indifference price of the considered option, we have to solve \eqref{IncompleteMarketProb_EarlyExercise_EqFormulation_Eq2}, which has the same structure as \eqref{DifferentialOperatorProblem_MinMax_Eq1}.

\subsubsection{Choosing Model Parameters and Functions}

We set $\mu/\sigma = 1$, $\varrho = 0.1$, $\gamma = 1$ and $T=1$. Furthermore, we introduce $y_{min} := 0$ and $y_{max}:=5$, and, for $y\in[y_{min}\,,y_{max}]$, we use
\begin{align*}
a(y) =&\ y,\\
b(y) =&\ 0.3y\\
\text{and}\quad P(y) =&\ \max\{1-y,0\}.
\end{align*}
In particular, the choice of $P(\cdot)$ means that we are dealing with an American put with strike one.

\subsubsection{Discretisation of the Continuous Equations and Solution of the Discrete Systems}\label{DiscretisationsOfEquations_EarlyExercise}

Numerically, we solve \eqref{IncompleteMarketProb_EarlyExercise_EqFormulation_Eq1} and \eqref{IncompleteMarketProb_EarlyExercise_EqFormulation_Eq2} similarly to Section \ref{DiscretisationsOfEquations_ContControl}, i.e. we proceed backwards in time, starting at expiry $T$, on a grid
$\{ (ih,jk) : 0\leq i\leq N,\ 0\leq j\leq M\}$, where $h:= 5/N$ and $k:=T/M$. In space, we again apply a finite difference discretisation guaranteeing for the discretisation matrices to be in $K^o_N$\,. Since we are dealing with a put option, we use
$\psi(0)=1$ and $\psi(5)=0$ as boundary conditions, and we take the
set $\mathbf{U}$ in \eqref{IncompleteMarketProb_EarlyExercise_EqFormulation_Eq2} to be $[-1,0]$.
We compare the following three numerical approaches in Matlab.
\begin{itemize}
\item For \eqref{IncompleteMarketProb_EarlyExercise_EqFormulation_Eq1}, we use an explicit time stepping scheme, meaning all non-linearities can be dealt with easily.
\item We solve Problem \ref{DiscreteProbDef_MinMax} (corresponding to one time step of a fully implicit discretisation of the non-linear parabolic PDE in  \eqref{IncompleteMarketProb_EarlyExercise_EqFormulation_Eq2}) by
\begin{itemize}
\item the penalty method devised in Sections \ref{Penalise_Discr_Obst_prob} and \ref{SectionModNewton_Isaacs} of this paper, and by
\item the method of policy iteration (see Appendix \ref{Appendix_PolicyIteration_MinMax} or \cite{Bokanowski_Howard_Algorithm}).
\end{itemize}
\end{itemize}
As already in Section \ref{SolvingDiscrSystems_ContControl}, we employ Remark \ref{Remark_StabilityInTheControl} and approximate $\mathbf{U}=[-1,0]$ by $\widetilde{\mathbf{U}}:=\{-1 + \frac{r}{101} : r = 0,1,\ldots 101 \}$. When solving the penalised equation \eqref{PenDiscreteProbDef_MinMax_Eq1} by Algorithm \ref{ModifiedNewtonAlg_Isaacs}, we use a test for accuracy of the kind \eqref{SolvingDiscrSystems_ContControl_Eq1}, setting $tol=$1e-08 as before. Similarly, we use a test for accuracy of the kind \eqref{SolvingDiscrSystems_ContControl_Eq0.9} with the same tolerance when solving an equation of the form \eqref{DiscreteProbDef_MinMax_Eq1} by policy iteration.

\begin{figure}[hb]
\centering
\includegraphics[width=11cm,height=8cm]{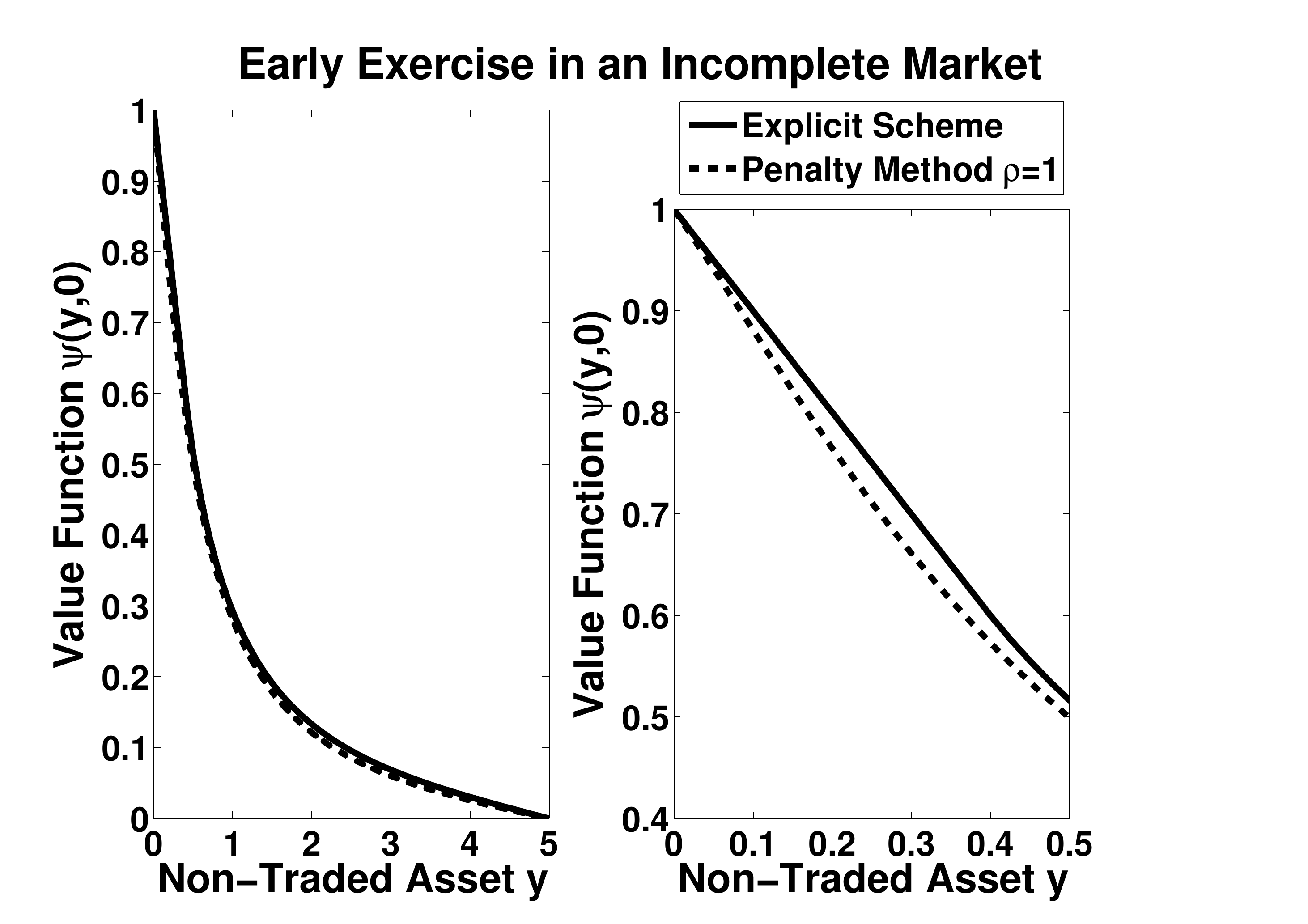}%56
\caption{The incomplete market early exercise pricing problem of Section \ref{Section_Examples_EarlyExercise}. We see the penalty approximation (for $M=N=200$ and $\rho=1$) and the solution
of the explicit scheme (for $M=4e04$ and $N=200$). Even though the penalty parameter is still
very small, the penalty solution already seems to be a reasonable approximation.} 
\label{fig:EarlyExercise_SolutionFunction}
\end{figure}

\begin{remark}\label{ConvergenceRemark_EarlyExerciseExample}
As already in Remark \ref{ConvergenceRemark_HJBExample}, convergence of the fully implicit discretisation of \eqref{IncompleteMarketProb_EarlyExercise_EqFormulation_Eq2} to the unique viscosity solution can be guaranteed by noting that we have stability, monotonicity, and consistency of the discretisation, and by using a strong comparison principle (cf.\,\cite{Barles_ErrorBoundsMonotoneApproxSchemes}). The fully explicit discretisation of \eqref{IncompleteMarketProb_EarlyExercise_EqFormulation_Eq1} converges similarly provided we have stability.
\end{remark}

In our numerical tests, we find the explicit scheme to require a relatively large number of time steps for stability, making it difficult to use. In Table \ref{tab:EarlyExercise_FirstNumbers}, fixing $\rho=$1e06, we see the difference between the explicit scheme and the penalty approximation for different grid sizes; for the explicit scheme, for a given space discretisation, we always choose the number of time steps such that the solution plot does not show any instabilities, whereas for the penalty scheme we take identical numbers of time and space steps. For $N=50$ and $N=200$, the explicit and the penalty scheme differ by $1.5$e-03 and $3.6$e-04 in the $max$-norm, respectively (cf.\,Table \ref{tab:EarlyExercise_FirstNumbers}). We point out that the explicit scheme runs substantially longer due to the high number of time steps required for stability; picking the time and space steps proportional to each other for the fully
implicit scheme is optimal experimentally because of the observed first
order convergence in both time and space.
In Figure \ref{fig:EarlyExercise_SolutionFunction}, we see the penalty approximation for $\rho=1$ and the explicit solution; even though the penalty parameter is very small, the two graphs are extremely close.
The difference in the $max$-norm between the policy iteration and penalty approximation solutions is $1.6165$e-05 and $2.6011$e-05 for grid sizes $M=N=50$ and $M=N=200$, respectively.
In Table \ref{tab:EarlyExercise_NewtonSteps}, for different grid sizes and penalty parameters, we see the maximum and the average number of iterations needed by Algorithms \ref{ModifiedNewtonAlg_Isaacs} (penalty approximation) and \ref{PolicyIteration_MinMax_Alg} (policy iteration) for solving the discrete systems at every time step, as well as the corresponding runtimes for the full schemes.
(In case of Algorithm \ref{PolicyIteration_MinMax_Alg}, we count one iteration whenever line \eqref{PolicyIteration_MinMax_Alg_Eq1} is executed.) Throughout, the average numbers of iterations are small, and both schemes run very fast, with the penalty scheme being faster by about a factor two; the effect appears to be due to the fact that -- whilst the average number of iterations is small -- policy iteration requires a large number of iterations in a few instances (as seen by the $Max\ Iterations$ in Table \ref{tab:EarlyExercise_NewtonSteps}); we will investigate this effect more closely below. Finally, in Figure \ref{fig:HJB_PenaltyConvergence}, we measure the  rate of convergence in $\rho$,
and confirm first order convergence as predicted by Theorem \ref{PenaltyConvergenceToTrueSol_ErrorEstimate_Isaacs}; the implementation is precisely as
in Section \ref{Section_Examples_Indifference}, except that we use a $\rho=1e08$ penalty approximation as reference solution; as before, the error is measured in the $max$-norm.

\begin{table}[t]
\begin{tabular}{|c|c||c|c||c|}
%   \multicolumn{5}{l}{Policy Iteration} \\
\hline
\textit{Explicit Scheme} & \textit{Time} & \textit{Penalty} $(\rho = 1e06)$ & \textit{Time} & \textit{Difference} \\
  \hline
$M=2500$, $N=50$ & 0.78s & $M=50$, $N=50$ & 0.23s & 1.3e-03\\
\hline
$M=4e04$, $N=200$ & 16.49s & $M=200$, $N=200$ & 3.71s & 3.5e-04\\
\hline
\end{tabular}
\caption{The incomplete market early exercise pricing problem of Section \ref{Section_Examples_EarlyExercise}. For different time and space grids, we see the difference between the explicit scheme and the penalty method, and the respective runtimes. The explicit scheme runs much longer due to the high number of time steps needed to guarantee stability for a given space discretisation. Furthermore,
since our space discretisation contains one-sided differences for reasons of monotonicity, the expected consistency order is $O(1/M)+O(1/N)$, which also makes the choice $M=N$ desirable.}
\label{tab:EarlyExercise_FirstNumbers}
\end{table}

\begin{table}[t]
\begin{tabular}{|l|c|c|c|c|c|c|}
%   \multicolumn{5}{l}{Policy Iteration} \\
\hline
\textit{Policy Iteration} & Max Iterations & $\varnothing$ Iterations & Runtime\\
  \hline
$M$, $N=50$ & 4 & 2.20 & 0.38s\\
  \hline
$M$, $N=200$ & 11& 2.17 & 6.42s\\
\hline
$M=200$, $N=50$ & 4 & 1.83 & 0.86s\\
\hline
$M=50$, $N=200$ & 18 & 2.88 & 2.12s\\
\hline
%\multicolumn{5}{l}{Penalty Method ($\rho = 4000$)} \\
\hline
\textit{Penalty Method} $(\rho = 4e03)$ & Max Iterations & $\varnothing$ Iterations & Runtime\\
\hline
$M$, $N=50$ & 3 & 1.98 & 0.25s\\
  \hline
$M$, $N=200$ & 3 & 1.21 & 4.09s\\
\hline
$M=200$, $N=50$ & 3 & 1.15 & 0.66s\\
\hline
$M=50$, $N=200$ & 4 & 2.16 & 1.47s\\
\hline
\textit{Penalty Method} $(\rho = 1e06)$ & Max Iterations & $\varnothing$ Iterations & Runtime\\
\hline
\hline
$M$, $N=50$ & 2 & 1.10 & 0.17s\\
\hline
$M$, $N=200$ & 3 & 1.08 & 3.82s\\
\hline
$M=200$, $N=50$ & 2 & 1.02 & 0.61s\\
\hline
$M=50$, $N=200$ & 4 & 1.38 & 1.13s\\
\hline
\end{tabular}
\caption{The incomplete market early exercise pricing problem of Section \ref{Section_Examples_EarlyExercise}. For different time and space grids, we see the number of iterations needed
by Algorithm \ref{ModifiedNewtonAlg_Isaacs} when solving the penalised equation \eqref{PenDiscreteProbDef_MinMax}. Independently of the grid size, the absolute and the average number of required iterations is small.}
\label{tab:EarlyExercise_NewtonSteps}
\end{table}

\begin{figure}[ht]
\centering
\includegraphics[width=8cm,height=8cm]{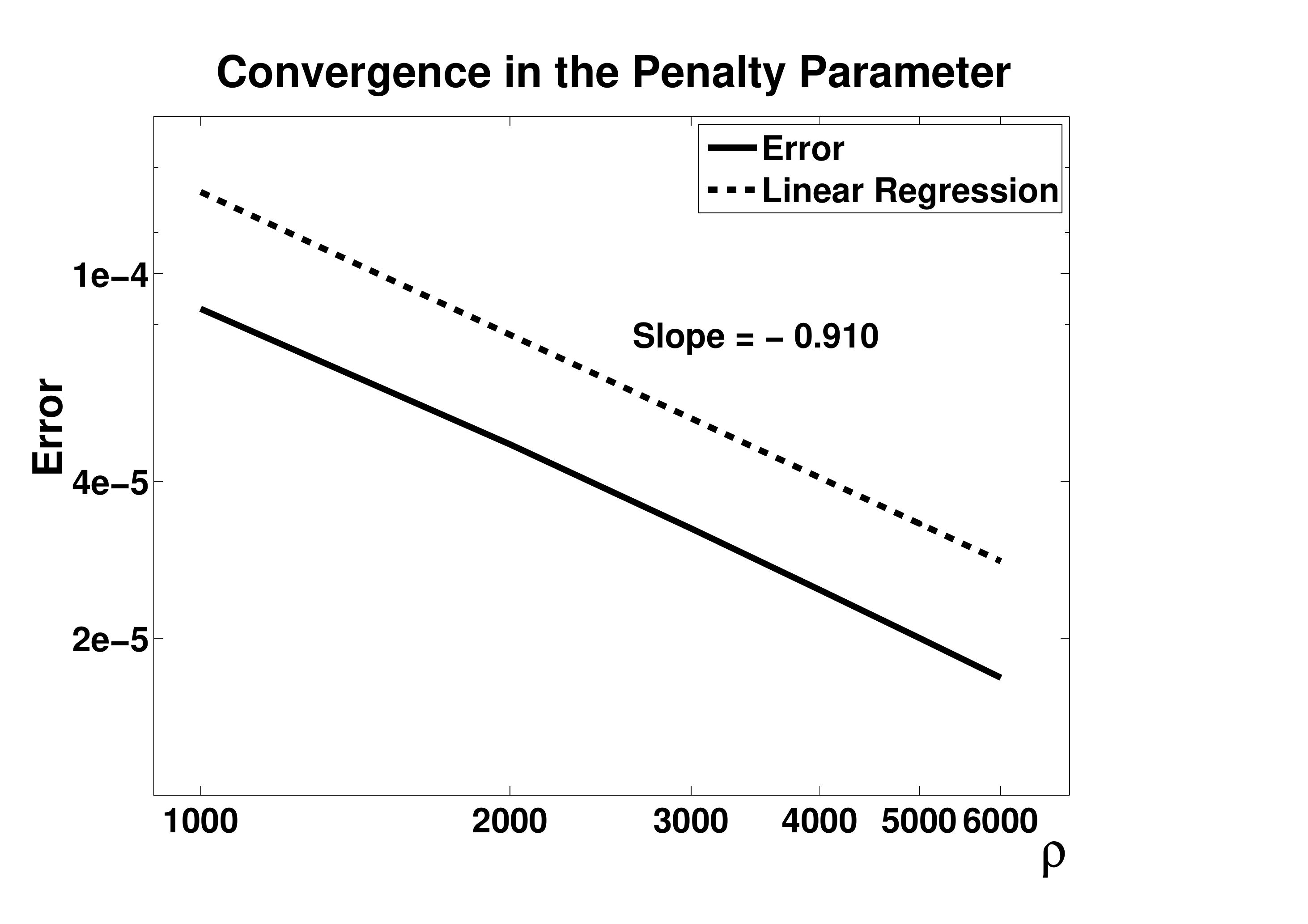}%56
\caption{The incomplete market early exercise pricing problem of Section \ref{Section_Examples_EarlyExercise}. For M=N=200, we measure the speed of convergence in the penalty parameter $\rho$. The error is measured in the $max$-norm. The plot is log-log, and we observe
a convergence rate of 0.910, close to one, confirming the results of Theorem \ref{PenaltyConvergenceToTrueSol_ErrorEstimate_Isaacs}.}
\label{fig:EarlyExercise_PenaltyConvergence}
\end{figure}

\subsubsection{Sensitivity with Respect to the Initial Guess}\label{InitialGuess_EarlyExercise}

We have seen above that, unlike penalty approximation, policy iteration requires many iterations in a few instances and that this effect appears to correlate with the grid size $N$ (cf.\,Table \ref{tab:EarlyExercise_NewtonSteps}). To investigate if the phenomenon relates
to the quality of the initial guess for the non-linear iterations, we set $M=1$ and consider different grids sizes $N$; the results can be seen in Figure \ref{fig:EarlyExercise_InitialGuessSensitivity}. (Since, in our implementations, we use the solution from the previous time step as initial guess for the next, setting $M=1$ can be interpreted as solving a single non-linear discrete system with a poor initial guess.) Clearly, the number of Newton iterations for the penalised system is almost unaffected by the increase of $N$, whereas the number of iterations of the policy iterations grows linearly in $N$. (In \cite{Bokanowski_Howard_Algorithm}, it has already been observed that even for a simple American option problem, the best obtainable bound on the number of iterations of policy iteration is linear, i.e. $O(N)$.)
\bigskip

Now, it is easy to see that equation \eqref{Appendix_PolicyIteration_MinMax_Eq1} is scalable, i.e., for any $\delta>0$, it can equivalently be rewritten as
\begin{equation*}
\min\Big\{ \max_{u\in\mathbf{U}}\{A_u\,z-b_u\},\delta(\tilde{A}z-\tilde{b})\Big\}=0,
\end{equation*}
and it has been pointed out in \cite{Forsyth_CombinedFixedPointIteration,Forsyth_RegimeSwitching,Forsyth_InexactArithmetic_DirectControl_PenaltyMethods} that a different choice of $\delta$ will generally lead to different policy iterations; more precisely, in our context, depending on the choice of $\delta$, we obtain
different adaptations of Algorithm \ref{PolicyIteration_MinMax_Alg}, all converging to the same solution. For theoretical considerations on the best choice of $\delta$, we refer to \cite{Forsyth_RegimeSwitching}; numerically, we find the following.
\begin{itemize}
\item Simply introducing a scaling factor $\delta$ does not yield an improvement.
\item Changing the initial guess from the payoff $P(\cdot)$ to $z^0\equiv 1$ does not yield an improvement.
\item Using a scaling factor $\delta=1$e06, combined with initial guess $z^0\equiv 1$, significantly reduces the number of iterations needed (cf.\,Figure \ref{fig:EarlyExercise_InitialGuessSensitivity}).
\end{itemize}
Further analysis shows that a large scaling factor $\delta$ yields an improvement whenever the initial guess is such that $\tilde{A}z^0-\tilde{b}>0$, which is necessary for the multiplication by the scaling factor to have an effect.\bigskip

In summary, we can conclude that the penalty approximation appears to have a generic advantage when dealing with poor starting values,
whereas -- to obtain equally good results by policy iteration -- prudent implementation is inevitable. The main reason for the different performance of policy iteration seems to be that it does $not$ show Newton-type behaviour, i.e. it does not converge to the
solution in steps with rapidly decreasing size; in particular, when using the payoff as initial guess, it shifts the solution upwards node by node until the free boundary is found, resulting in the linear dependence on $N$ observed in Figure \ref{fig:EarlyExercise_InitialGuessSensitivity}.

\begin{figure}[ht]
\centering
\includegraphics[width=8cm,height=8cm]{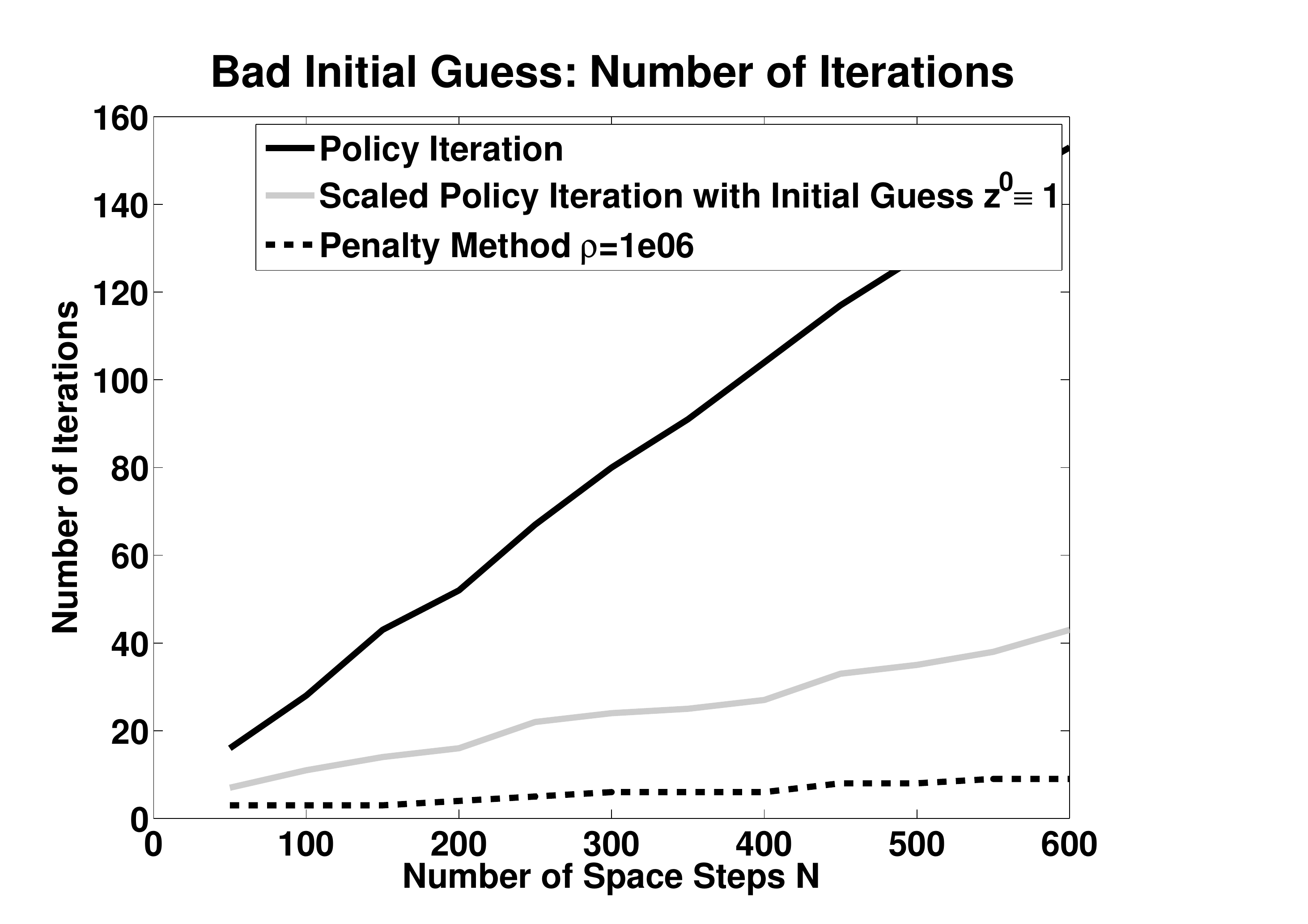}%56
\caption{The incomplete market early exercise pricing problem of Section \ref{Section_Examples_EarlyExercise}. For $M=1$ and varying $N$, we measure the number of non-linear iterations required by (scaled) policy iteration and penalty approximation, respectively.}
\label{fig:EarlyExercise_InitialGuessSensitivity}
\end{figure}

\section{Conclusion}
In this paper, we consider the numerical solution of continuously controlled HJB equations and HJB obstacle problems -- which trivially includes finitely controlled equations -- and we show that penalisation is a powerful means
for solving the non-linear discrete problems resulting from implicit finite difference discretisations.
Generally, this can be done by policy iteration or -- as we show -- by penalisation combined with a Newton-type iteration. For both penalty approaches, we show that
the achieved accuracy is $O(1/\rho)$, where $\rho$ is the penalty parameter.
We include numerical examples
from (early exercise) incomplete market pricing, demonstrating the competitiveness of our algorithms as fast and easy-to-use numerical schemes.
An interesting open problem is the extension of our approach to more general Isaacs equations.

\appendix

\section{Policy Iteration for the HJB Equation}\label{Appendix_PolicyIteration_ContinuouslyControlled}

We briefly recap the policy iteration algorithm for HJB equations as introduced in \cite{Forsyth_Controlled_HJB_PDEs_Finance} and \cite{Bokanowski_Howard_Algorithm}. Recall that we want to solve Problem \ref{DiscreteProbDef}, i.e. we are trying to find $x\in\mathbb{R}^N$ such that
\begin{equation}
\min\{A_u\,x-b_u : u\in\mathbf{U}\}=0.\label{PolicyIteration_Alg_Eq0.9}
\end{equation}

\begin{alg}\label{PolicyIteration_Alg}(Policy Iteration for HJB Eq.)
For $i\in\mathcal{N}$ and $y\in\mathbb{R}^N$, define
$u^{min}_i(y)\in\mathbf{U}$ to be such that
\begin{equation*}
u^{min}_i(y) = \argmin\{(A_v\,y-b_v)_i : v\in\mathbf{U}\},
\end{equation*}
and set $A^{min}(y)\in\mathbb{R}^{N\times N}$ and $b^{min}(y)\in\mathbb{R}^N$ to be matrix and vector consisting of
rows $(A_{u^{min}_i(y)})_i$ and $(b_{u^{min}_i(y)})_i$\,, $i\in\mathcal{N}$, respectively.
Let $x^0\in\mathbb{R}^N$ be some starting value. Then, for known $x^n$, $n\geq 0$, find $x^{n+1}$ such that
\begin{equation}
A^{min}(x^n)\,x^{n+1} = b^{min}(x^n).\label{PolicyIteration_Alg_Eq1}
\end{equation}
\end{alg}

\begin{theorem}\label{PolicyIteration_ConvergenceTheorem}
Let $(x^n)_{n\geq 0}$ be the sequence generated by Algorithm \ref{PolicyIteration_Alg}.
We have $x^{n+1}\geq z^n$ for $n\geq 1$. As $n\to \infty$, $x^n$ converges to a limit $z^*$ which solves \eqref{PolicyIteration_Alg_Eq0.9}.
\end{theorem}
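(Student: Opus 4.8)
The plan is to mirror the three-step pattern already used for Algorithm \ref{ModifiedNewtonAlg} in Lemmas \ref{NewtonAlgMonotonicity} and \ref{NewtonAlg_FiniteTermination} and Theorem \ref{NewtonLimitSolvesPenProblem}: first establish monotonicity of the generated sequence (this is the inequality asserted in the statement, $x^{n+1}\geq x^n$ for $n\geq1$), then a uniform bound, and finally convergence together with the fact that the limit solves \eqref{PolicyIteration_Alg_Eq0.9}.

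For monotonicity, I would write \eqref{PolicyIteration_Alg_Eq1} for both $n$ and $n-1$. Fixing a row $i\in\mathcal{N}$, since $u^{min}_i(x^n)$ minimises $v\mapsto(A_v\,x^n-b_v)_i$ over $\mathbf{U}$, comparing with the competitor $v=u^{min}_i(x^{n-1})$ gives $(A^{min}(x^n)\,x^n-b^{min}(x^n))_i\leq(A^{min}(x^{n-1})\,x^n-b^{min}(x^{n-1}))_i=0$, the last equality being \eqref{PolicyIteration_Alg_Eq1} at step $n-1$. Hence $A^{min}(x^n)\,x^n\leq b^{min}(x^n)=A^{min}(x^n)\,x^{n+1}$, i.e. $A^{min}(x^n)(x^{n+1}-x^n)\geq0$. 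Since $A^{min}(x^n)$ is assembled from rows of matrices in $K^o_N$ it again lies in $K^o_N$, so it is an $M$-matrix with $(A^{min}(x^n))^{-1}\geq0$, which yields $x^{n+1}\geq x^n$ for $n\geq1$. For boundedness I would rewrite \eqref{PolicyIteration_Alg_Eq1} as $x^{n+1}=(A^{min}(x^n))^{-1}b^{min}(x^n)$ and apply Corollary \ref{BoundednessofMatrixandInv_Cor} (with $\mathcal{U}=\mathbf{U}\times\cdots\times\mathbf{U}$ and $A^{min}(x^n)=A_{u_1\ldots u_N}$) together with continuity of $\mathfrak{B}$ on the compact $\mathbf{U}$, obtaining $\|x^n\|_\infty\leq C$ for $n\geq1$ with $C$ independent of $x^0$. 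A monotone bounded sequence converges, so $x^n\to x^*$.

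It then remains to show $\min_{u\in\mathbf{U}}(A_u\,x^*-b_u)_i=0$ for each $i$. For the lower bound: for any $v\in\mathbf{U}$, $(A_v\,x^n-b_v)_i\geq(A^{min}(x^n)\,x^n-b^{min}(x^n))_i=(A^{min}(x^n)(x^n-x^{n+1}))_i$, using \eqref{PolicyIteration_Alg_Eq1}; since $x^n-x^{n+1}\to0$ and $\|A^{min}(x^n)\|_\infty$ is uniformly bounded by Corollary \ref{BoundednessofMatrixandInv_Cor}, this term tends to $0$, so letting $n\to\infty$ gives $(A_v\,x^*-b_v)_i\geq0$ for all $v$. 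For the upper bound: by \eqref{PolicyIteration_Alg_Eq1} the control $w_n:=u^{min}_i(x^{n-1})$ satisfies $(A_{w_n}\,x^n-b_{w_n})_i=0$; by compactness of $\mathbf{U}$ a subsequence of $(w_n)$ converges to some $\bar w\in\mathbf{U}$, and continuity of $\mathfrak{A}$, $\mathfrak{B}$ together with $x^n\to x^*$ yields $(A_{\bar w}\,x^*-b_{\bar w})_i=0$, hence $\min_v(A_v\,x^*-b_v)_i\leq0$. Combining the two bounds, $x^*$ solves \eqref{PolicyIteration_Alg_Eq0.9}; uniqueness (Theorem \ref{BothDiscreteProblems_UniqSolvable}) identifies it as the solution.

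The main obstacle is the passage to the limit, because the minimising control $u^{min}_i(x^n)$ changes from step to step and one cannot simply take limits inside the ``$\min$''. The resolution is the two-sided argument above: on one side, rewrite the residual $A^{min}(x^n)\,x^n-b^{min}(x^n)$ as $A^{min}(x^n)(x^n-x^{n+1})$ and exploit the uniform matrix bound to send it to zero; on the other side, use compactness of $\mathbf{U}$ to extract a convergent subsequence of minimising controls and use continuity of $\mathfrak{A}$ and $\mathfrak{B}$. Everything else is routine given Corollary \ref{BoundednessofMatrixandInv_Cor} and the $K^o_N$ machinery.
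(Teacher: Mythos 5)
Your proposal is correct, but it is worth noting that the paper does not actually prove this statement at all: its ``proof'' is just a citation of \cite{Forsyth_Controlled_HJB_PDEs_Finance} and \cite{Bokanowski_Howard_Algorithm}. What you have written is essentially the standard convergence argument for policy (Howard) iteration from those references, reconstructed self-containedly with the paper's own $K^o_N$ machinery: monotonicity via the competitor control $u^{min}_i(x^{n-1})$ and the non-negative inverse of $A^{min}(x^n)$ (exactly parallel to Lemma \ref{NewtonAlgMonotonicity}), uniform boundedness via Corollary \ref{BoundednessofMatrixandInv_Cor} (parallel to Lemma \ref{NewtonAlg_FiniteTermination}), and the two-sided limit argument in which the residual is rewritten as $A^{min}(x^n)(x^n-x^{n+1})$ on one side and a convergent subsequence of minimising controls is extracted on the other, which is the same compactness-plus-continuity device the paper uses in Corollary \ref{PenaltyConvergenceToTrueSol}. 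This extraction of a convergent subsequence of controls is the one ingredient genuinely needed beyond the finite-control case, and you supply it correctly (the passage $(A_{w_{n}}x^{n}-b_{w_{n}})_i\to(A_{\bar w}x^*-b_{\bar w})_i$ uses the uniform bound on $\|A_u\|_\infty$ from Lemma \ref{BoundednessofMatrixandInv_La}, which you have available). Two cosmetic remarks: the inequality in the theorem statement, $x^{n+1}\geq z^n$, is a typo for $x^{n+1}\geq x^n$, and you read it correctly; and your final appeal to Theorem \ref{BothDiscreteProblems_UniqSolvable} for uniqueness is harmless but not needed for the statement as given, since existence of the limit and the fact that it solves \eqref{PolicyIteration_Alg_Eq0.9} are proved directly. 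In short, your route buys a self-contained proof where the paper merely defers to the literature, at the cost of repeating an argument the cited works already contain.
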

\begin{proof}
See \cite{Forsyth_Controlled_HJB_PDEs_Finance} or \cite{Bokanowski_Howard_Algorithm}.
\end{proof}

\section{Policy Iteration for the HJB Obstacle Problem}\label{Appendix_PolicyIteration_MinMax}

We briefly recap the method of policy iteration algorithm for HJB obstacle problems that was introduced in \cite{Bokanowski_Howard_Algorithm}. Recall that we want to solve Problem \ref{DiscreteProbDef_MinMax}, i.e. we are trying to find $z\in\mathbb{R}^N$ such that
\begin{equation}
\min\Big\{ \max_{u\in\mathbf{U}}\{A_u\,z-b_u\},\,\tilde{A}z-\tilde{b}\Big\}=0.\label{Appendix_PolicyIteration_MinMax_Eq1}
\end{equation}

Now, for $u\in\mathbf{U}$, we define $A_{u,0}:=A_u$ and $A_{u,1}:=\tilde{A}$, and we define $b_{u,0}$ and $b_{u,1}$ correspondingly.
Using these new definitions, \eqref{Appendix_PolicyIteration_MinMax_Eq1} is equivalent to
\begin{equation*}
\min_{v\in\{0,1\}}\Big\{ \max_{u\in\mathbf{U}}\{A_{u,v}\,z-b_{u,v}\}\Big\}=0
\end{equation*}

\begin{alg}\label{PolicyIteration_MinMax_Alg}(Policy Iteration for HJB Obstacle Prob.)
For $i\in\mathcal{N}$ and $z\in\mathbb{R}^N$, define $v_i^{min}(z)\in\{0,1\}$ such that
\begin{equation*}
\max_{u\in\mathbf{U}}(A_{u,v_i^{min}(z)}\,z-b_{u,v_i^{min}(z)})_i=\min_{v\in\{0,1\}}\Big\{\max_{u\in\mathbf{U}}(A_{u,v}\,z-b_{u,v})_i\Big\}.
\end{equation*}
Let $z^0\in\mathbb{R}^N$ be some starting value. Then, for known $z^n$, $n\geq 0$, find $z^{n+1}$ such that
\begin{equation}
\max_{u\in\mathbf{U}}(A_{u,v_i^{min}(z^n)}\,z^{n+1}-b_{u,v_i^{min}(z)})_i=0,\quad i\in\mathcal{N}.\label{PolicyIteration_MinMax_Alg_Eq1}
\end{equation}
\end{alg}

\begin{theorem}\label{PolicyIteration_MinMax_ConvergenceTheorem}
Let $(z^n)_{n\geq 0}$ be the sequence generated by Algorithm \ref{PolicyIteration_MinMax_Alg}.
We have $z^{n+1}\geq z^n$ for $n\geq 1$. As $n\to \infty$, $z^n$ converges to a limit $z^*$ which solves \eqref{Appendix_PolicyIteration_MinMax_Eq1}.
\end{theorem}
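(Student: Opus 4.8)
The plan is to adapt the classical monotone-convergence proof of policy iteration (as in Theorem~\ref{PolicyIteration_ConvergenceTheorem}) to the present min--max setting, the genuinely new features being the outer minimisation over $v\in\{0,1\}$ and the inner maximisation over $u\in\mathbf{U}$. Throughout write $v^n_i:=v^{min}_i(z^n)$. I would first record that each step of Algorithm~\ref{PolicyIteration_MinMax_Alg} is well posed: equation \eqref{PolicyIteration_MinMax_Alg_Eq1} asks for $z^{n+1}$ solving the ``$\max$-type'' HJB equation $\max_{u\in\mathbf{U}}(A_{u,v^n_i}\,z^{n+1}-b_{u,v^n_i})_i=0$, $i\in\mathcal{N}$, whose coefficient rows are drawn from the $A_u$ ($u\in\mathbf{U}$) or from $\tilde A$, hence all lie in $K^o_N$. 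Such an equation has a unique solution --- uniqueness by the $A^*$-argument already used in Theorem~\ref{BothDiscreteProblems_UniqSolvable}, existence by either the penalisation-plus-Newton machinery of Section~\ref{SectionModNewton} or the nested policy iteration of \cite{Bokanowski_Howard_Algorithm} --- so $z^{n+1}$ is well defined.

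For the monotonicity $z^{n+1}\ge z^n$, $n\ge 1$, I would fix $n\ge 1$ and $i\in\mathcal{N}$. Since $z^n$ solves \eqref{PolicyIteration_MinMax_Alg_Eq1} at the previous step, $\max_{u\in\mathbf{U}}(A_{u,v^{n-1}_i}z^n-b_{u,v^{n-1}_i})_i=0$, and minimality of the choice $v^n_i=v^{min}_i(z^n)$ then forces $\max_{u\in\mathbf{U}}(A_{u,v^n_i}z^n-b_{u,v^n_i})_i\le 0$. Picking $w^n_i\in\mathbf{U}$ attaining $\max_{u\in\mathbf{U}}(A_{u,v^n_i}z^{n+1}-b_{u,v^n_i})_i=0$, one gets $(A_{w^n_i,v^n_i}z^n-b_{w^n_i,v^n_i})_i\le 0=(A_{w^n_i,v^n_i}z^{n+1}-b_{w^n_i,v^n_i})_i$, i.e. $\big(A_{w^n_i,v^n_i}(z^{n+1}-z^n)\big)_i\ge 0$. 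Assembling $A^*\in K^o_N$ from the rows $(A_{w^n_i,v^n_i})_i$, $i\in\mathcal{N}$, gives $A^*(z^{n+1}-z^n)\ge 0$, whence $z^{n+1}\ge z^n$ since $(A^*)^{-1}\ge 0$.

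For convergence, the same row-wise attainment supplies, for every $n$, a matrix $A^*\in K^o_N$ and a vector $b^*$ (rows from the $b_u$ or $\tilde b$) with $A^*z^{n+1}=b^*$, so $\|z^{n+1}\|_\infty\le\|(A^*)^{-1}\|_\infty\|b^*\|_\infty$ is bounded uniformly in $n$ by Corollary~\ref{BoundednessofMatrixandInv_Cor}, which extends to include the fixed $\tilde A$ and $\tilde b$; together with the monotonicity, $(z^n)_{n\ge1}$ is non-decreasing and bounded above, hence converges to some $z^*$. It remains to show $z^*$ solves \eqref{Appendix_PolicyIteration_MinMax_Eq1}. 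Since each $v^n_i\in\{0,1\}$ (finite) and each $w^n_i\in\mathbf{U}$ (compact), pass to a subsequence along which $v^n_i\equiv\bar v_i$ and $w^n_i\to\bar w_i$ for all $i\in\mathcal{N}$ simultaneously. Letting $n\to\infty$ along this subsequence in $(A_{w^n_i,v^n_i}z^{n+1}-b_{w^n_i,v^n_i})_i=0$ and in $(A_{u,v^n_i}z^{n+1}-b_{u,v^n_i})_i\le 0$ (the latter for each fixed $u$), and using continuity of $\mathfrak{A},\mathfrak{B}$, yields $\max_{u\in\mathbf{U}}(A_{u,\bar v_i}z^*-b_{u,\bar v_i})_i=0$. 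On the other hand, $v^n_i=v^{min}_i(z^n)$ means $\max_{u\in\mathbf{U}}(A_{u,v^n_i}z^n-b_{u,v^n_i})_i\le\max_{u\in\mathbf{U}}(A_{u,v}z^n-b_{u,v})_i$ for $v\in\{0,1\}$; letting $n\to\infty$ and using that $z\mapsto\max_{u\in\mathbf{U}}(A_{u,v}z-b_{u,v})_i$ is continuous for each fixed $v$ ($\mathbf{U}$ being compact), this passes to $0=\max_{u\in\mathbf{U}}(A_{u,\bar v_i}z^*-b_{u,\bar v_i})_i\le\max_{u\in\mathbf{U}}(A_{u,v}z^*-b_{u,v})_i$ for both $v\in\{0,1\}$. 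Hence $\min_{v\in\{0,1\}}\max_{u\in\mathbf{U}}(A_{u,v}z^*-b_{u,v})_i=0$ for every $i$, i.e. $z^*$ solves \eqref{Appendix_PolicyIteration_MinMax_Eq1}, and Theorem~\ref{BothDiscreteProblems_UniqSolvable} identifies it as the unique solution.

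The step I expect to be the main obstacle is the last one: handling the double passage to a subsequence (finitely many outer policies combined with the compact inner control set) and, above all, pushing the minimality inequality defining $v^{min}_i$ through to the limit using only continuity of the pointwise maximum over $\mathbf{U}$ --- this is precisely what recovers the ``$\ge 0$'' direction of the outer minimisation, without which one would obtain only $\min_v\max_u(\cdots)_i\le 0$. Getting the minimality of the policy selection to bite in exactly the right place in the monotonicity argument is the other point that requires care.
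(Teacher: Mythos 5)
Your proof is correct, but it is genuinely different from what the paper does: the paper offers no argument at all for this theorem, its ``proof'' being a citation of the Ho-3 analysis in \cite{Bokanowski_Howard_Algorithm}, whereas you give a self-contained proof built entirely on the paper's own $K^o_N$ machinery (row-assembly closure, the uniform bounds of Corollary \ref{BoundednessofMatrixandInv_Cor} trivially extended to rows of $\tilde A$, and the $A^*$/M-matrix comparison trick of Theorem \ref{BothDiscreteProblems_UniqSolvable}). Your three main steps are sound: the monotonicity argument correctly combines the minimality of $v^{min}_i(z^n)$ at $z^n$ with the fact that $z^n$ solves the previous step's equation and with row-wise attainment of the inner maximum at $z^{n+1}$; the identity $A^*z^{n+1}=b^*$ gives the uniform bound; and the limit identification via a simultaneous subsequence in the finitely many outer policies and the compact inner controls, together with passing the minimality inequality to the limit using continuity of $z\mapsto\max_{u\in\mathbf{U}}(A_{u,v}z-b_{u,v})_i$, does recover both directions of the outer minimisation (in fact the $w^n_i$-subsequence is dispensable, since continuity of the pointwise maximum already gives $\max_u(A_{u,\bar v_i}z^*-b_{u,\bar v_i})_i=0$ along the subsequence with frozen $\bar v_i$). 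The only point you settle by reference is the well-posedness of each inner step, i.e.\ existence of $z^{n+1}$ solving $\max_{u\in\mathbf{U}}(A_{u,v^n_i}z-b_{u,v^n_i})_i=0$, $i\in\mathcal{N}$; this does need justification, but it can be supplied within the paper's framework: writing the system as $\max_{u\in\mathbf{U}}\{A'_u z-b'_u\}=0$ with $A'_u$ assembled row-wise from $A_u$ and $\tilde A$ (so $A'_u\in K^o_N$ and $u\mapsto A'_u$, $b'_u$ continuous), the substitution $z\mapsto -z$, $b'_u\mapsto -b'_u$ turns it into an instance of Problem \ref{DiscreteProbDef}, whose unique solvability is Theorem \ref{BothDiscreteProblems_UniqSolvable} (existence via Sections \ref{Section_PenalisationDiscrProb}--\ref{SectionModNewton}); alternatively it is covered by the cited Howard-algorithm results. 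What your route buys is an argument internal to the paper; what the paper's citation buys is brevity and the more general setting treated in the reference.
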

\begin{proof}
See \cite{Bokanowski_Howard_Algorithm}.  
\end{proof}

\renewcommand{\bibname}{References}

\bibliography{Paper_PenSchemeDiscrHJB_InfinitelyManyControls_References}
\bibliographystyle{plain}

\end{document}